\documentclass[11pt]{article}
\usepackage{amsmath}
\usepackage{amsfonts}
\usepackage{amssymb}
\usepackage{amsthm}
\DeclareMathOperator{\dist}{dist}

\usepackage{geometry}
\usepackage{setspace}
\usepackage{graphicx}
\usepackage{subcaption}
\usepackage{multirow}
\usepackage{multicol}
\usepackage{array}

\newcommand{\paperfigdir}{figures}
\newcommand{\paperhalfwidth}{0.48\textwidth}

\newcommand{\paperfigfull}[1]{\includegraphics[width=\textwidth]{\paperfigdir/#1}}
\newcommand{\paperfigfullnarrow}[1]{\includegraphics[width=0.9\textwidth]{\paperfigdir/#1}}
\newcommand{\paperfigfullwide}[1]{\includegraphics[width=0.9\textwidth]{\paperfigdir/#1}}
\newcommand{\paperfigfullmap}[1]{\includegraphics[width=\textwidth]{\paperfigdir/#1}}
\newcommand{\paperfighalf}[1]{\includegraphics[width=\linewidth]{\paperfigdir/#1}}
\newcommand{\paperfighalfalone}[1]{%
  \begin{minipage}{\paperhalfwidth}%
    \centering
    \includegraphics[width=\linewidth]{\paperfigdir/#1}%
  \end{minipage}%
}

\usepackage{tikz}
\usetikzlibrary{arrows.meta,calc,positioning}

\usepackage{latexsym}
\usepackage{lipsum}
\usepackage{enumerate}
\usepackage{comment}
\usepackage[toc,title,titletoc,header]{appendix}
\usepackage[bottom]{footmisc}
\usepackage{xr}
\usepackage{pifont}
\usepackage{subcaption}

\usepackage{lineno}
\setpagewiselinenumbers

\usepackage[pdfborder={0 0 0}]{hyperref}
\hypersetup{
  colorlinks=true,
  linkcolor=magenta,
  citecolor=blue,
  filecolor=magenta,
  linkbordercolor={0 1 1},
  citebordercolor={1 0 0}
}

\usepackage[authoryear]{natbib}

\numberwithin{equation}{section}
\newcommand{\myref}[2]{\hyperref[#1]{#2}}

\newtheorem{theorem}{Theorem}[section]
\newtheorem{lemma}{Lemma}[section]

\newtheorem{assumption}{Assumption}[section]

\newtheorem{remark}{Remark}[section]

\newcounter{assumptionM}
\newcounter{assumptionA}
\def\theassumptionM{M.\arabic{assumptionM}}
\def\theassumptionA{A.\arabic{assumptionA}}
\newenvironment{runningexample}[1][]{%
    \vspace{0.5\baselineskip} 
    \noindent\textbf{Running Example.}\ #1%
    \begin{itshape}%
}{%
    \end{itshape}%
    \vspace{0.5\baselineskip} 
}

\begin{document}
	\relax
	\hypersetup{pageanchor=false}
	
	\hypersetup{pageanchor=true}

\author{
Federico A. Bugni\\
Department of Economics\\
Northwestern University\\
\url{federico.bugni@northwestern.edu}
\and
Federico Crippa\\
Department of Economics\\
University of California, Berkeley \\
\url{federico.crippa@berkeley.edu}
\and
Daniel Restrepo\\
Department of Mathematics\\
University of Minnesota\\
\url{drestrep@umn.edu}
}

\bigskip
\title{Manipulation Testing in Boundary Discontinuity Designs\thanks{Corresponding author: \href{mailto:federico.bugni@northwestern.edu}{federico.bugni@northwestern.edu}. We thank Matias Cattaneo, Bruno Fava, Amilcar Velez, and seminar participants at Northwestern's reading group for helpful comments on this paper. We thank Mat{\'\i}as Mart{\'\i}nez for having shared the data.
}
}

\date{August 31, 2026}

\maketitle

\vspace{-0.3in}
\thispagestyle{empty}

\begin{spacing}{1.3}
\begin{abstract}
We propose the first manipulation test designed for boundary discontinuity designs (BDDs) with general boundary shapes. A BDD is a multidimensional extension of the regression discontinuity design (RDD) in which treatment assignment is determined by whether the multidimensional running variable crosses a lower-dimensional boundary set. The test avoids multivariate density estimation and builds on the observation that, in the absence of manipulation, observations near the boundary should be approximately evenly split between treatment and control within arbitrary groups defined by their projections onto the boundary. We test this implication using a collection of binomial balance tests on observations near the boundary, with groups formed by k-means clustering. We establish the asymptotic validity of the test under suitable regularity conditions. We also evaluate finite-sample performance through Monte Carlo simulations and illustrate the test in three empirical applications.
\end{abstract}
\end{spacing}

\medskip
\noindent KEYWORDS: regression discontinuity, manipulation test, causal inference, boundary discontinuity designs, multiscore regression discontinuity designs.

\noindent JEL classification codes: C12, C14.

\thispagestyle{empty} 

\newpage
\hypersetup{pageanchor=true}
\setcounter{page}{1}
\section{Introduction}

This paper proposes the first manipulation test for boundary discontinuity designs (BDDs) with general boundary shapes. A BDD is a regression discontinuity design (RDD) in which assignment status is determined by a vector running variable. Leading examples are geographic RDDs, in which assignment status depends on location relative to a geographic boundary, with latitude and longitude serving as running variables, and multiscore RDDs, in which assignment status depends on several observable running variables, such as eligibility for a scholarship that requires both a math GPA and an English GPA above given thresholds. BDDs have become increasingly common in empirical work because georeferenced data are more widely available and treatment-assignment rules are growing more complex. See \cite{cattaneo2026boundary} for a recent survey of developments in the theory and practice of BDDs.

The RDD, in both one- and multidimensional settings, is widely used to estimate treatment effects, with numerous applications in economics and other social sciences. In the one-dimensional setting, assignment status for unit $i$ is determined by whether a scalar running variable $Z_i$ crosses a known cutoff. Under the assumption that the conditional expectations of the potential outcomes are continuous at the cutoff, the RDD identifies the conditional average treatment effect at the cutoff; see \cite{hahn/todd/vanderklaauw:2001}, as well as \cite{lee/lemieux:2010} and \cite{imbens/lemieux:2008} for surveys. In a BDD, the vector running variable $Z_i \in \mathbb{R}^d$ has dimension $d>1$, and the cutoff is replaced by a boundary in $\mathbb{R}^d$, which we denote by $\mathcal{B}$. By a natural extension of the one-dimensional argument, the BDD identifies the conditional average treatment effect at each point on the boundary.

While the assumptions required for nonparametric identification in an RDD are fundamentally untestable, empirical researchers routinely assess RDD validity by examining testable implications of stronger identification assumptions. One such implication, formalized in \cite{lee:2008}, is that units have only limited control over the running variable. In the one-dimensional RDD, this implication requires the running variable density to be continuous at the cutoff. Conversely, a discontinuity in the density at the cutoff may indicate that units can manipulate the running variable to affect treatment assignment, potentially undermining the design's validity. This motivates the manipulation test of \cite{mccrary:2008}, which estimates the density of the running variable on both sides of the cutoff and tests whether the two limits are equal. More recently, \cite{bugni/canay:2021} proposed an alternative test based on $g$-order statistics that is valid under weaker assumptions. As we explain below, our methodology is closer in spirit to this latter approach.

The rationale behind the \cite{mccrary:2008} manipulation test extends naturally to the BDD setting; see Appendix \ref{sec:leeExtension} for a formal derivation. In a BDD, the corresponding testable implication is that the joint density of the vector running variable is continuous across the treatment boundary at every point $z \in \mathcal B$. Although this diagnostic is important for empirical applications, no general test is currently available for the joint-density-continuity null in BDDs with general boundaries. This paper develops such a test.

A natural approach to construct such a test would be to extend the methodology of \cite{mccrary:2008} to the multidimensional setting. This would involve estimating the joint density of the vector running variable in the treatment and control regions near each boundary point, and then testing whether the two one-sided limits are equal. Such a task poses a challenging econometric problem. It requires approximating the distribution of a test statistic based on a continuum of multidimensional nonparametric estimators. It also requires reliable density estimation near each point on a general boundary, a task that is severely hindered by the curse of dimensionality. The test proposed in this paper avoids these difficulties by not requiring consistent density estimation at each boundary point.

Our manipulation test for BDDs builds on the $g$-order statistics framework introduced by \citet{bugni/canay:2021}. The procedure depends on an integer tuning parameter, $q$, the number of observations closest to the boundary used by the statistic. The parameter $q$ controls proximity to the boundary: as the sample size grows, the selected observations become increasingly close to it. Our main formal result shows that, under the joint density continuity null, their side indicators behave asymptotically as independent Bernoulli random variables with success probability $1/2$, independently of their projected locations on $\mathcal B$. This result forms the basis of the test: we project the selected observations onto the boundary and partition the projected locations into clusters using $k$-means with several values of $ k$. Under the joint density continuity null, the number of observations in the treatment region should behave like a binomial random variable, both overall and within each cluster. Under manipulation, some clusters may depart from this behavior. By forming clusters of observations with nearby projected locations, the $k$-means step localizes the boundary and makes such departures easier to detect, especially when manipulation is localized and varies smoothly along $\mathcal B$.

The main contribution of the paper is to establish that the proposed density-free manipulation test controls size asymptotically for fixed $q$ in BDDs with general boundaries. To prove this result, we bring tools from geometric measure theory to the analysis of manipulation testing. These tools guide the regularity conditions imposed on the boundary and allow us to characterize the behavior of observations selected by distance and then projected onto $\mathcal B$. This characterization yields the random-labeling property on which the test is based and may be useful beyond the specific procedure considered here. We also use Monte Carlo simulations to study the test's finite-sample behavior, including its ability to detect localized and offsetting manipulation.

The tuning parameter $q$ specifies how many observations closest to the boundary are selected for our test. In our asymptotic framework, we treat $q$ as fixed as the sample size $n$ grows. Developing a fully data-dependent choice of $q$ is beyond the scope of this paper, although \citet{bugni2026rates} may provide useful guidance for future work. In our empirical applications, we report results over a range of $q$ values to make their sensitivity transparent. The applications to \cite{DaiEtAl2022Hypertension} and \cite{keele/titiunik:2015} yield conclusions that are stable over a broad range of values of $q$, while the evidence against the null in \cite{elacqua/hincapie/martinez:2024} becomes stronger as $q$ increases.

The remainder of the paper is organized as follows. Section \ref{sec:lit_review} relates our contribution to the existing literature. Section \ref{sec:setting} introduces the BDD setup and presents our manipulation test. Section \ref{sec:results} establishes the theoretical properties of the test. Section \ref{sec:simulations} reports simulation evidence. Section \ref{sec:empirical} presents three empirical illustrations based on data from \cite{DaiEtAl2022Hypertension}, \cite{keele/titiunik:2015}, and \cite{elacqua/hincapie/martinez:2024}. Section \ref{sec:conclusions} concludes. Proofs of all results are collected in Appendix \ref{sec:proofs}.

\subsection{Related Literature}\label{sec:lit_review}

Our paper is related to the growing literature on BDDs. Existing work develops identification, estimation, and inference methods for multidimensional regression discontinuity designs, including both multiscore and geographic designs; see, among others, \cite{imbens/zajonc:2009}, \cite{papay/willett/murnane:2011}, \cite{reardon/robinson:2012}, \cite{cattaneo/idrobo/titiunik:2024}, and the recent survey by \cite{cattaneo2026boundary}. We focus on a different but complementary aspect of the design: implementing a \citet{mccrary:2008}-style manipulation test within a BDD. As already explained, the corresponding testable implication in this setting is that the joint density of the vector running variable is continuous across the boundary at every point $z\in\mathcal B$.

To the best of our knowledge, no existing manipulation test applies to BDDs with general boundaries. The closest paper is \citet{crippa:2025}, which studies manipulation testing in multiscore RDDs with hyper-rectangular assignment rules using density estimation in the treatment and control regions. Our procedure differs in both scope and construction. First, our test applies to general boundaries, including geographic boundaries with complex shapes, and nests multiscore RDDs with hyper-rectangular assignment rules as a special case. Second, our test relies on the asymptotic properties of observations closest to the boundary as the sample size grows, avoiding the need to consistently estimate the density functions.

Despite the lack, until recently, of clear theoretical guidance on manipulation testing in BDDs, empirical applications use several diagnostic strategies. In multiscore RDDs, researchers often apply componentwise manipulation tests. In geographic RDDs, manipulation tests are less commonly reported; when they are reported, researchers reduce the vector running variable to signed distance from the boundary and then apply a one-dimensional density test. Table \ref{tab:manipulation-tests} documents these practices across BDD applications, drawing on the surveys and reviews in \cite{cattaneo2026boundary}, \cite{lehner:2021}, and \cite{caicedo2021historical}. The table classifies studies according to whether they report no manipulation test, componentwise manipulation tests, or one-dimensional signed-distance tests.

Table \ref{tab:manipulation-tests} highlights two patterns. Manipulation tests are relatively common in multiscore applications and much less common in geographic applications, where boundaries often have complex shapes and no standard manipulation test for the joint-density-continuity null has been available. Among applications that report a test, the most common approaches are one-dimensional. First, some studies apply a manipulation test to each component of the vector running variable. We call this approach ``componentwise tests''. Second, other studies transform the vector running variable into its signed distance from the boundary and apply a one-dimensional manipulation test to the resulting scalar variable. We call this approach the ``signed-distance test''.

Both componentwise tests and signed-distance tests have important limitations. We begin with componentwise tests. This approach applies only to BDDs with hyper-rectangular assignment rules and does not readily extend to boundaries with more complex shapes, such as those arising in geographic RDDs. In addition, applying a separate test to each component of the vector running variable creates a multiple-testing problem: the componentwise tests must be combined to obtain a test with the desired overall size. Although standard methods exist, simple procedures such as the Bonferroni correction may reduce power. Moreover, this multiple-testing problem becomes more severe as the dimension of the vector running variable, and hence the number of componentwise tests, increases.

We next discuss signed-distance tests, which apply a one-dimensional manipulation test after reducing the vector running variable to its signed distance from the boundary. This approach has two key limitations, on which we expand in Section \ref{sec:signed_distance_test} of the appendix. First, continuity of the signed-distance density is only an implication of $H_0$, not an equivalent condition: discontinuities at different boundary points may average out, leaving the signed-distance density continuous. Appendix \ref{app:counterexample_averaging} provides an example in which $H_0$ fails at almost every boundary point, but a signed-distance test is powerless. Second, even the implication from $H_0$ to the continuity of the signed-distance density requires regularity conditions on the boundary. Appendix \ref{app:counterexamples} provides an example in which $H_0$ holds but the signed-distance density is discontinuous at zero, causing a signed-distance test to incorrectly reject $H_0$.

\begin{table}[htbp]
  \centering
  \scriptsize

  \begin{tabular}{|p{0.12\textwidth}|p{0.38\textwidth}|p{0.38\textwidth}|}
    
    \hline
    
    & Multiscore RD design & Geographic RD design \\
    
    \hline
    
    No Test &
    \cite{Ou2010ExitExam, snider2015barriers, frey:2019, DaiEtAl2022Hypertension, KampfenMosca2024BPScreening} & \cite{black:1999,kane2003school, kane2006school,bayer2007unified,lalive:2008, dell:2010, GroutJaegerPlantinga2011LandUsePortland,  EugsterEtAl2011SocialInsuranceCulture, basten2013beyond, FerwerdaMiller2014DevolutionResistance,michalopoulos2014national, turner2014land,BaroneDAcuntoNarciso2015Telecracy,keele/titiunik:2015,MacDonaldKlickGrunwald2016PrivatePolice, michalopoulos2016long,DeKadtLarreguy2018AgentsRegime,dell2018historical,ehrlich2018persistent,Kumar2018TexasHomeEquity,SpenkuchToniatti2018PoliticalAds,VelezNewman2019EthnicTV,becker2020forced,dell2020development,Ito2020CleanAir,LetsaWilfahrtMechanisms,MosconaNunnRobinson2020SegmentaryLineage, Schafer2020Time,Appau2021VietnamAgriculture, gonzalez2020cell,Lowes2021Concessions,michaels2021planning,Dehdari2022CommonIdentity,fontana2022historical,JonesEtAl2022AidLoss,Mangonnet2022ProtectedAreas,mendez2022multinationals,Sides2022TVAds,Zheng2022SchoolQuality,Baragwanath2023CollectiveRights,Henn2023TraditionalAuthorities,Paulsen2023NewDemocracy,prillaman2023strength, bjerre-nielsen_gandil_2024_attendance,cox_fiva_king_2024_bound_by_borders,doucette_2024_parliamentary_constraints,doucette_2024_pre_modern_institutions,grasse_2024_state_terror,jardim/long/plotnick/vigdor/wiles:2024,wuepper_et_al_2024_public_policies_forest,boix_2025_political_emancipation,loumeau_2025_regional_borders,mueller-crepon_2025_building_tribes, ring_2025_wealth_taxation, yamagishi_sato_2025_persistent_stigma} \\
    
    \hline

    Signed-distance Test  &
    \cite{clark2014signaling, cohodes2014merit, becht2016does, CastroEsposito2022Bonuses, MurphyJohnson2023DualIdentification, elacqua/hincapie/martinez:2024} &
    \cite{clinton_sances_2018_politics, dell2018nation, Dupraz2019ColonialLegaciesEducation, giuntella_mazzonna_2019_sunset, he_wang_zhang_2020_watering, ambrus2021loss, Albertus2020LandReform,Laliberte2021ContextualEffects,mcalexander_2023_borders, Woller2023CostOfVoting} \\
    
    \hline
    
    Componentwise Tests &
    \cite{matsudaira:2008, Robinson2011ELReclassification, hinnerich2014democracy, egger2015impact, elacqua2016short, evans2017smart, smith2017giving, Johnson2019ELClassification, londono2020upstream, JonesEtAl2022AidLoss, moussa2022impact, salti2022impact, LarsenValant2024GradeRetention} & \\
    
    \hline
    
  \end{tabular}

  \caption{\small Classification of BDD studies according to whether they report a manipulation test and, if so, the type of test used. ``Componentwise tests'' involve applying a one-dimensional manipulation test to each component of the vector running variable. ``Signed-distance test'' is a one-dimensional manipulation test applied to the signed distance from the boundary.}
  \label{tab:manipulation-tests}
\end{table}

Our testing approach is closest in spirit to \citet{bugni/canay:2021}. As in that paper, we avoid direct density estimation and instead exploit the behavior of observations closest to the cutoff. In a BDD, however, the cutoff is a boundary embedded in $\mathbb R^d$ rather than a single point. This distinction creates several new challenges: selected observations must be projected onto the boundary, local neighborhoods must be constructed along it, and its geometry must be incorporated into the asymptotic analysis.

We address these challenges using tools from geometric measure theory. These tools let us formulate appropriate regularity conditions on the boundary, characterize the behavior of observations selected by their distance from it, and establish the random-labeling approximation underlying our test. In this respect, our analysis is related to \cite{cattaneo2026boundary}, which studies estimation and inference in BDDs, and to work on submanifold and level-set integrals, including \cite{chen2026setsequallythinminimax} and \cite{qiao:2021}. Like our paper, this literature highlights the role of lower-dimensional geometry in econometric estimation and inference.

Our statistic also connects to the statistics literature on random labeling and spatial clustering. Under the joint-density-continuity null, the side indicators of the observations closest to the boundary behave asymptotically as independent Bernoulli labels, conditional on their projected locations. See, for example, \citet{LotwickSilverman1982} and \citet[Section~4.5]{Diggle2013} for related random-labeling problems. Existing methods for detecting departures from random labeling include scan statistics for localized excesses \citep{Kulldorff1997, AbolhassaniPrates2021} and nearest-neighbor methods for short-range segregation \citep{CuzickEdwards1990, Dixon2014}. We use $k$-means clustering to construct data-driven neighborhoods at several resolutions along the boundary and test for localized imbalances within them. This choice is motivated by the forms of manipulation we consider most relevant in economic applications: we expect manipulation to generate excess mass in localized neighborhoods of the boundary rather than across arbitrary, nonlocal subsets of boundary points.

Our focus is on density continuity. In empirical one-dimensional RDD applications, density-based manipulation tests are often complemented by balance tests that assess continuity of predetermined covariates at the cutoff; see, for example, \citet{lee/lemieux:2010}. Developing analogous balance tests for BDDs that assess covariate continuity locally along the boundary is beyond the scope of this paper. We are currently developing these methods in a companion paper.

\section{Our Test}\label{sec:setting}

This section presents our procedure for testing manipulation in BDDs. Section \ref{sec:setup} introduces the econometric framework and states the hypothesis testing problem. Section \ref{sec:our_test} describes the testing procedure.

\subsection{Setup} \label{sec:setup}

We consider a BDD in which assignment status is determined by a $d$-dimensional vector running variable. For unit $i$, let $Z_i\in\mathbb R^d$ denote the observed vector running variable, with generic random vector $Z$ and joint density $f_Z$.

By definition, a BDD assigns units to treatment and control regions according to a known rule $T:\mathbb R^d\to\{0,1\}$, where $T(z)=1$ if $z$ belongs to the treatment region and $T(z)=0$ if $z$ belongs to the control region. In a sharp BDD, $T(z)=1$ indicates the region in which treatment is assigned. In a fuzzy BDD, it indicates the region in which the treatment probability is discontinuously higher. Our procedure applies to both cases without modification.

We define the {\it assignment boundary} $\mathcal{B}$ as the common boundary of the treatment and control regions:
\begin{align} \label{eq:defnB}
    \mathcal{B} ~\equiv~ bd(\{z\in \mathbb{R}^{d}~:~T(z)=t\}) \qquad \text{for } t=0,1,
\end{align}
where $bd(S)$ denotes the topological boundary of a set $S$.\footnote{The topological boundary of $S$ is the intersection of the closure of $S$ and the closure of its complement. Recall that the closure of a set consists of all points in the set together with all of its limit points.}

Our goal in this paper is to develop a manipulation test for BDDs in the spirit of \cite{mccrary:2008}. Intuitively, in the absence of manipulation, the density of the vector running variable $Z$ should be continuous across the boundary at every boundary point $b\in\mathcal B$; see Appendix \ref{sec:leeExtension} for a formal justification. Throughout the paper, we use manipulation to mean a discontinuity in this density across the boundary, that is, a violation of the joint density continuity null. Our assumptions below require the density to have well-defined one-sided limits at every boundary point. Accordingly, we consider the following hypothesis testing problem:
\begin{equation} \label{eq:H0}
H_0: \lim_{z \to b, T(z) = 1} f_Z(z) = \lim_{z \to b, T(z) = 0} f_Z(z)~\text{ for all } b \in \mathcal{B}
~~~\text{vs.}~~~H_1: H_0 \text{ does not hold}.
\end{equation}
It is useful to note that the regularity conditions below require the density $f_Z$ to be well defined only near the boundary $\mathcal B$, and impose no restrictions on the distribution of $Z$ away from it.

\begin{remark}[Interpretation of the test]
As with other \cite{mccrary:2008}-type diagnostics, the manipulation test is a test for continuity of the joint density of $Z$ across the boundary. A rejection provides evidence of manipulation as defined in this paper, under the maintained assumptions. It does not, by itself, identify the mechanism generating the density discontinuity, nor does it test all assumptions required for identification in a BDD. Failure to reject should therefore not be read as proof that the design is valid, and rejection should not automatically be interpreted as evidence of manipulation in the behavioral sense.
\end{remark}

\begin{runningexample} 
We use the empirical application in \cite{DaiEtAl2022Hypertension} as a running example throughout the paper. We briefly introduce the setup here and leave further details to Section \ref{sec:daiapp}. The units are participants in the China Health and Nutrition Survey. Their systolic and diastolic blood pressure were measured and used to determine whether they had hypertension. Participants classified as hypertensive were then informed of the diagnosis, and the study examines whether this information changed their behavior.

The two blood pressure measures serve as the running variables. Let $Z_1$ denote systolic blood pressure and let $Z_2$ denote diastolic blood pressure, both measured in millimeters of mercury (mmHg). The diagnostic thresholds are $140$ and $90$, respectively. When at least one blood pressure measure is at or above its threshold, an individual is diagnosed with hypertension and therefore belongs to the treatment region:
\begin{equation*}
T(z_1,z_2) ~=~ \mathbf 1\{z_1 \geq 140\;\text{or}\; z_2 \geq 90\}.
\end{equation*}
Thus, the control region is the southwest portion of the score space, where both blood-pressure measures are below their thresholds, while the treatment region consists of points for which at least one threshold is reached.

The boundary $\mathcal B$ is the set of points at which a small change in one of the running variables can change assignment status. It is the union of two threshold rays:
\begin{equation*}
\mathcal B~=~\{(z_1,z_2): z_1=140,~ z_2\leq 90\}~\cup~\{(z_1,z_2): z_1\leq 140,~ z_2=90\}.
\end{equation*}
Figure \ref{fig:step0} shows the treatment region, the control region, and the boundary in this example.

\begin{figure}[htbp]
  \centering
  \paperfighalfalone{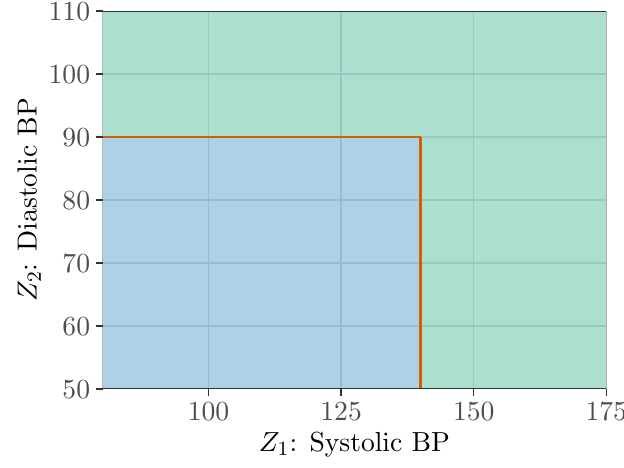}
  \caption{Regions induced by the assignment rule in the running example. The treatment region is shown in green, the control region in blue, and the boundary in orange.
  }
  \label{fig:step0}
\end{figure}
\end{runningexample}

\subsection{Testing procedure} \label{sec:our_test}

Let $\{Z_i\}_{i=1}^{n}$ denote a sample of $n$ observations of the vector running variable. Our analysis assumes that the observations are i.i.d.\ and that both the distribution of $Z_i$ near the boundary and the boundary $\mathcal B$ satisfy suitable regularity conditions. We defer the precise statement of these conditions to Section \ref{sec:results}.

Our test relies on a simple implication of the joint-density continuity null. If the density of $Z$ is continuous across the boundary, then observations sufficiently close to the boundary should be approximately balanced between the treatment and control regions, regardless of their projected location along the boundary. Thus, after selecting observations close to $\mathcal B$, the fraction of selected observations in the treatment region should be close to $1/2$ within localized regions of the boundary.\footnote{This intuition is simple, but its formal justification for general, piecewise-smooth boundaries is established under the geometric regularity conditions introduced in Section \ref{sec:results}.} Manipulation would instead create excess mass in one region relative to the other, producing localized regions in which this fraction differs from $1/2$.

Given a significance level $\alpha\in(0,1)$, implementation requires choosing $q\in\{1,\ldots,n\}$, the number of observations closest to the boundary used by the statistic. Our asymptotic framework treats $q$ as fixed as $n$ diverges. Given $(\alpha,q)$, the test proceeds in three steps.

\subsubsection*{Step 1: Selection}

This step selects the $q$ sample observations that are used for our test. For each observation $i=1,\dots,n$, we compute its distance to the boundary,
\begin{equation*}
    \dist(Z_i,\mathcal B)
    ~=~
    \inf_{b\in\mathcal B}\|Z_i-b\|.
\end{equation*}
Let $i_1<i_2<\cdots<i_q$ denote the indices of the $q$ observations closest to the boundary, listed in their original sample order. Any ties in determining the $q$ selected observations are resolved using a fixed deterministic rule. Set
\begin{equation*}
    \tilde Z_j ~\equiv~ Z_{i_j}, \qquad j=1,\dots,q.
\end{equation*}
We refer to $\{\tilde Z_j\}_{j=1}^{q}$ as the {\it selected observations}. Although selection is based on distance from the boundary, we label these observations in their original sample order. The remaining $n-q$ sample observations are not used in our test.

For each selected observation, we record its side indicator,
\begin{equation*}
    \tilde T_j ~\equiv~ T(\tilde Z_j), \qquad j=1,\dots,q,
\end{equation*}
and its projected location,
\begin{equation*}
    \tilde B_j ~\equiv~ B(\tilde Z_j), \qquad j=1,\dots,q,
\end{equation*}
where $B(z)$ denotes a closest point on the boundary, i.e.,
\begin{equation*}
    B(z)
    ~\in~
    \underset{b\in\mathcal B}{\arg\min}~\Vert z-b\Vert.
\end{equation*}
If the projection is not unique, we select one using a fixed measurable deterministic tie-breaking rule. Under our assumptions, the set of points with non-unique projections is negligible under the relevant local distribution, so the choice of tie-breaking rule is asymptotically immaterial.

\begin{runningexample}
Figure \ref{fig:step1} illustrates the selection step in the running example. The left panel shows a simulated sample of $n=100$ individuals together with the boundary. Green points lie in the treatment region, $T(Z_i)=1$, and blue points lie in the control region, $T(Z_i)=0$. The right panel highlights the $q=30$ observations closest to $\mathcal B$, colored by their side indicator, with projected locations shown as black squares.

\begin{figure}[htbp]
  \centering
    \begin{subfigure}[b]{\paperhalfwidth}
    \centering
    \paperfighalf{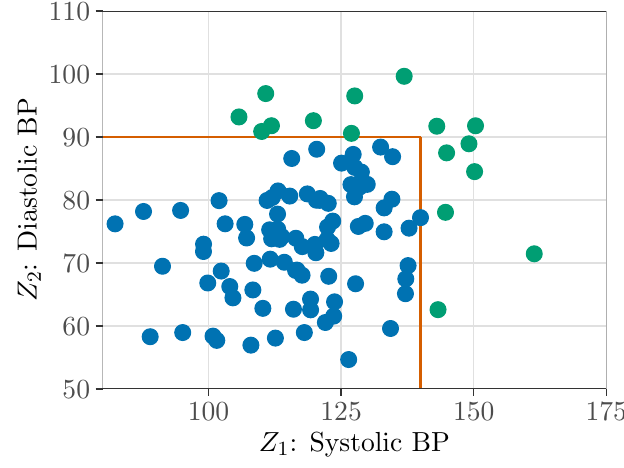}
  \end{subfigure}
  \hfill
  \begin{subfigure}[b]{\paperhalfwidth}
    \centering
    \paperfighalf{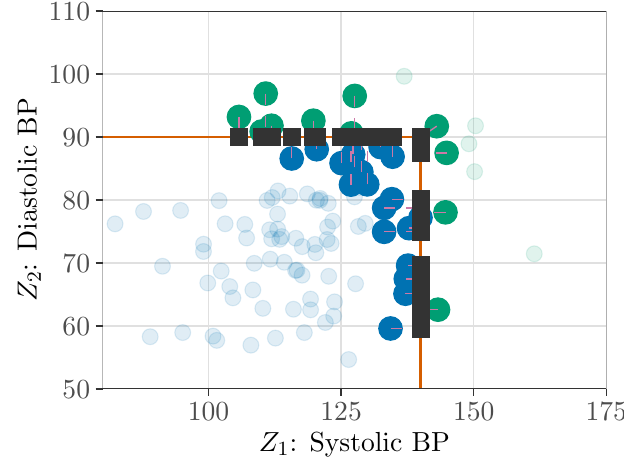}
      \end{subfigure}
  \caption{Step 1 in the running example. The left panel shows the full sample ($n=100$). The right panel highlights the $q=30$ observations closest to the boundary, colored by their side indicator, with projected locations shown as black squares.}
  \label{fig:step1}
\end{figure}
\end{runningexample}

\subsubsection*{Step 2: Clustering}\label{sec:clustering}

This step partitions the projected locations into clusters. Because each projected location corresponds to one selected observation, the resulting partition also groups selected observations whose projected locations are close together along the boundary. Under the joint density continuity null, the side indicators should be balanced not only overall but also within localized regions of $\mathcal B$. Clustering therefore allows the researcher to detect localized manipulation. Our focus on localized clusters is motivated by the forms of manipulation we consider most relevant in economic applications: we expect manipulation to generate excess mass in particular neighborhoods of the boundary rather than across arbitrary, nonlocal subsets of boundary points.

We partition the projected locations $\tilde B=\{\tilde {B_j}\}_{j=1}^{q}$ using the $k$-means algorithm. Specifically, for each $k\in\mathcal K=\{1,2,\dots,q\}$, we compute a $k$-cluster partition of the projected locations, resolving any non-uniqueness using a fixed measurable deterministic rule. For each selected observation $j=1,\dots,q$ and each $k\in\mathcal K$, let $\tilde\pi_{j,k}$ denote the cluster assignment of its projected location $\tilde B_j$. Thus, $\tilde\pi_{j,k}=a$ means that $\tilde B_j$ belongs to cluster $a\in\{1,\dots,k\}$; equivalently, selected observation $j$ inherits the cluster assignment of its projected location. For each $k$, $(\tilde\pi_{1,k},\tilde\pi_{2,k},\dots,\tilde\pi_{q,k})$ is the cluster assignment vector associated with the corresponding partition. Let $\tilde\pi={(\tilde\pi_{1,k},\tilde\pi_{2,k},\dots,\tilde\pi_{q,k})}_{k\in\mathcal K}$ denote the resulting collection of cluster assignment vectors.

The clusters can be interpreted as data-driven neighborhoods along the boundary. The extreme choices have simple interpretations: when $k=1$, all projected locations are assigned to a single cluster, while when $k=q$, each projected location forms its own cluster. Intermediate values of $k$ produce partitions at different resolutions. Smaller values of $k$ produce relatively coarse clusters along the boundary, while larger values of $k$ produce more localized clusters. The collection $\tilde{\pi}$ is random because it is computed from the projected locations $\tilde B$.

\begin{runningexample}
Figure \ref{fig:step2b} continues Figure \ref{fig:step1} and illustrates how $k$-means partitions the projected locations in the running example. Different values of $k$ produce different data-driven neighborhoods along the boundary. As explained earlier, these clusters are not intended to identify substantively meaningful categories of individuals; rather, they partition the projected locations at different resolutions.

\begin{figure}[htbp]
  \centering
   \paperfigfullwide{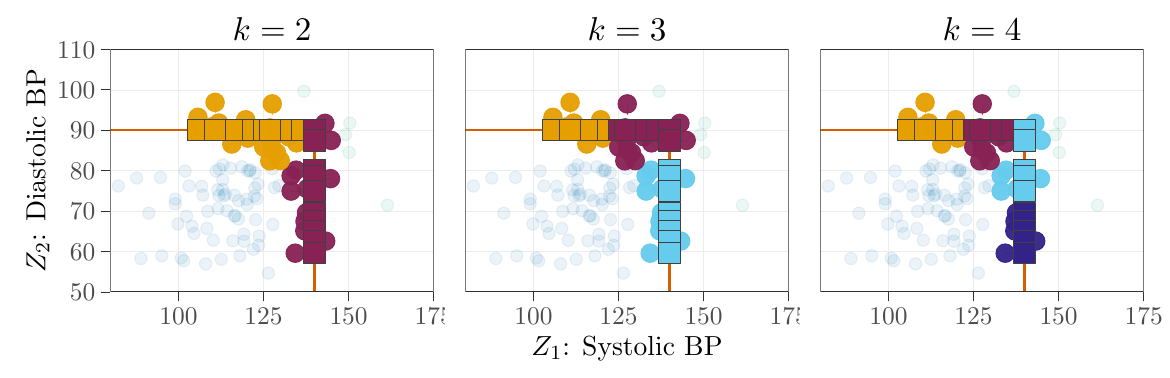}
  \caption{Step 2 in the running example with $n=100$, $q=30$, and $k \in \{2,3,4\}$}
  \label{fig:step2b}
\end{figure}
\end{runningexample}

\begin{remark}[Number of clusters] \label{remark:k_grid}
    The default implementation described in this section uses the full cluster-count grid $\mathcal{K} =\{1,2,\dots,q\}$. This choice is not essential. More generally, one may consider
    \begin{equation*}
        \mathcal{K}~\equiv~\{k_{1},\ldots,k_{J}\}~\subseteq~ \{1,2,\ldots,q\},
    \end{equation*}
    which denotes the cluster-count grid used in the $k$-means algorithm. The procedure and results extend immediately to this case, after replacing $\mathcal{K} = \{1,2,\ldots,q\}$ with $\mathcal{K}=\{k_{1},\ldots,k_{J}\}$. The formal arguments in the paper are derived for this latter, more general case.
\end{remark}

\subsubsection*{Step 3: Testing} 

Given $\tilde{T} = \{\tilde{T}_i\}_{i=1}^{q}$ from step 1 and $\tilde{\pi} = \{(\tilde{\pi}_{1,k},\tilde{\pi}_{2,k},\dots,\tilde{\pi}_{q,k})\}_{k \in \mathcal{K}}$ from step 2, compute the test statistic $S(\tilde{T}, \tilde{\pi})$ defined as follows:
\begin{align}
    S(\tilde{T}, \tilde{\pi}) ~=~ \max_{k \in \mathcal{K}} \max_{a=1,\ldots,k}
    \sqrt{ \sum_{i=1}^q \mathbf 1\{\tilde{\pi}_{i,k} = a\} }
    \left| \frac{ \sum_{i=1}^q \tilde{T}_i \mathbf 1\{\tilde{\pi}_{i,k} = a\} }{ \sum_{i=1}^q \mathbf 1\{\tilde{\pi}_{i,k} = a\} } - \frac{1}{2} \right| .
    \label{eq:test_function}
\end{align}
The last factor on the right-hand side of Equation \eqref{eq:test_function} computes the absolute difference between the observed fraction of observations in the treatment region within each cluster and $1/2$, scaling this absolute difference by the square root of the cluster size. This absolute difference measures the imbalance within the cluster. The test statistic $S(\tilde{T}, \tilde{\pi})$ is then the maximum standardized imbalance across all clusters and all partitions considered by the procedure.

The critical value is obtained by simulating the limiting null distribution of $S(\tilde{T},\tilde{\pi})$. According to our results, conditional on $\tilde\pi$, the side indicators are asymptotically governed by a conditional Bernoulli distribution. We approximate this distribution by repeatedly replacing the observed side indicators $\tilde{T} = \{\tilde{T}_i\}_{i=1}^{q}$ with an i.i.d.\ sample $T^{*} =\{T_i^{*}\}_{i=1}^{q}$ drawn from $\mathrm{Bernoulli}(1/2)$ and recomputing the statistic as $S(T^*,\tilde{\pi})$. Repeating this procedure approximates the conditional distribution $P(S(T^*,\tilde{\pi})\leq s\mid\tilde{\pi})$ to arbitrary accuracy. We define the critical value $c(\alpha,\tilde{\pi})$ as the corresponding $(1-\alpha)$ quantile:
\begin{align}
    c(\alpha,\tilde{\pi}) ~=~ \inf \left\{ ~s\in\mathbb R: ~P(S(T^*, \tilde{\pi})\leq s\mid \tilde{\pi}) \geq 1-\alpha~ \right\}.
        \label{eq:test_cv}
\end{align}

Our non-randomized test rejects $H_0$ in \eqref{eq:H0} when our statistic exceeds the critical value:
\begin{align}
    \phi_n^{\mathrm{nr}}(q,\alpha) ~=~ \mathbf 1 \{ S(\tilde{T}, \tilde{\pi})> c(\alpha,\tilde{\pi})\}.
    \label{eq:test_NR_version}
\end{align}
The non-randomized test may be asymptotically conservative due to the discreteness of the Bernoulli distribution. To obtain an asymptotically exact test, we propose a randomized test, which rejects $H_0$ in Equation \eqref{eq:H0} according to the following rule:
\begin{equation}
    \phi_n^{\mathrm{r}}(q,\alpha) ~=~\mathbf 1 \{  S(\tilde{T}, \tilde{\pi})> c(\alpha,\tilde{\pi}) \} + \mathbf 1 \{ S(\tilde{T}, \tilde{\pi}) = c(\alpha,\tilde{\pi}) \} \mathbf 1 \{  U \leq u(\alpha,\tilde{\pi}) \}, 
    \label{eq:test_R_version}
\end{equation}
where $U \sim \mathrm{Unif}(0,1)$ is independent of the data and
\begin{align}\label{eq:u_defn}
    u(\alpha,\tilde{\pi}) ~=~\Bigg\{ 
    \begin{array}{cl}
    \dfrac{ \alpha - P(S(T^*, \tilde{\pi})> c(\alpha,\tilde{\pi}) \mid \tilde{\pi}) }{P(S(T^*, \tilde{\pi})= c(\alpha,\tilde{\pi}) \mid \tilde{\pi}) } & \text{ if }P(S(T^*, \tilde{\pi})= c(\alpha,\tilde{\pi}) \mid \tilde{\pi})>0, \\
       0  & \text{ if }P(S(T^*, \tilde{\pi})= c(\alpha,\tilde{\pi}) \mid \tilde{\pi})=0. 
    \end{array} 
\end{align}

\begin{runningexample}
Figure \ref{fig:step3} illustrates the testing step in the running example. For each partition generated in Step 2, the statistic computes the share of selected observations in each cluster that lie in the treatment region. Under the joint density continuity null, observations sufficiently close to the boundary should be approximately balanced between the treatment and control regions. A cluster with a large excess of observations in either region contributes a large value to the statistic.

The histogram shows the simulated null distribution based on $10{,}000$ Bernoulli relabeling draws. In each draw, the observed side indicators $\tilde T$ are replaced by independent $\mathrm{Bernoulli}(1/2)$ labels, while the projected locations and collection of cluster-assignment vectors $\tilde\pi$ are held fixed. The solid vertical line marks the observed statistic $S(\tilde T,\tilde\pi)$, and the dashed vertical line marks the critical value $c(\alpha,\tilde\pi)$. The test rejects when the observed localized imbalance is too large relative to this conditional Bernoulli distribution.
\end{runningexample}

\begin{figure}[htbp]
  \centering
    \paperfighalfalone{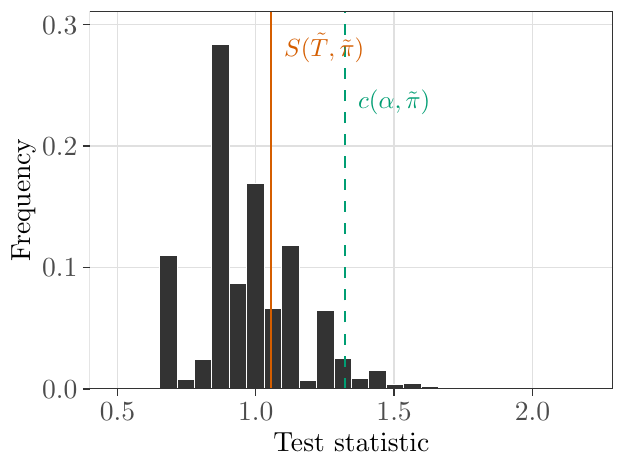}
 \caption{Illustration of step 3 with $n=100$, $q=30$, $\mathcal{K} = \{1,2,\ldots,q\}$, and $\alpha=0.05$. The histogram shows the simulated null distribution of $S(T^*,\tilde{\pi})$ based on $10{,}000$ Bernoulli relabeling draws, the solid vertical line marks the observed statistic $S(\tilde{T},\tilde{\pi})$, and the dashed vertical line marks the critical value $c(\alpha,\tilde{\pi})$.}  
  \label{fig:step3}
\end{figure}

\begin{remark}[Test statistic]
    The default implementation defines the test statistic $S(\tilde{T}, \tilde{\pi})$ in \eqref{eq:test_function} as the maximum standardized imbalance across all clusters and all partitions considered by the procedure. However, our results do not rely on this particular choice of statistic. They extend to any arbitrary function of $\tilde{T}$ and $\tilde{\pi}$ whose critical value is computed from the corresponding randomization distribution obtained by repeatedly replacing $\tilde{T}$ with the i.i.d.\ $\mathrm{Bernoulli}(1/2)$ sample $T^*$. Thus, our findings also apply to other aggregations of the same standardized imbalances. For example, a natural alternative is a Cram\'er--von Mises-type statistic, which sums the squared imbalances over $k \in \mathcal{K}$ and $a=1,\ldots,k$, weighted by the corresponding cluster size. That is,
    \begin{align}
        S^{\rm cvm}(\tilde{T}, \tilde{\pi}) ~=~ \sum_{k \in \mathcal{K}}\sum_{a=1}^{k}
      \left( { \sum_{i=1}^q \mathbf 1\{\tilde{\pi}_{i,k} = a\} }\right)
        \left( \frac{ \sum_{i=1}^q \tilde{T}_i \mathbf 1\{\tilde{\pi}_{i,k} = a\} }{ \sum_{i=1}^q \mathbf 1\{\tilde{\pi}_{i,k} = a\} } - \frac{1}{2} \right)^2 .
        \label{eq:test_function2}
    \end{align}
\end{remark}

\section{Assumptions and main results}\label{sec:results}

In this section, we establish the validity of the testing procedure described in Section \ref{sec:our_test}. Section \ref{sec:assumption} states the regularity conditions on the distribution of the running variables and on the geometry of the boundary. Section \ref{sec:validity} then derives the formal validity result. Finally, Section \ref{sec:discussion} discusses several features of the proposed test relative to alternative approaches.

\subsection{Assumptions}\label{sec:assumption}

We now state the assumptions used to establish the asymptotic validity of our test. The intuition behind the procedure is simple: under the joint density continuity null in \eqref{eq:H0}, observations sufficiently close to the boundary should be approximately balanced between the treatment and control regions. Formalizing this intuition for a general boundary, however, requires regularity conditions on both the density $f_Z$ and the boundary $\mathcal B$.

The assumptions on $f_Z$ are multidimensional analogs of those commonly used in manipulation tests with a scalar running variable. The assumptions on $\mathcal B$ are specific to the BDD setting. In a standard RDD, the boundary is a single threshold. In a BDD with a $d$-dimensional vector running variable, it may instead be a curve, a surface, or a more complicated object. We therefore impose geometric conditions that rule out pathological boundary behavior and ensure that local approximation and projection arguments are well-behaved. To state these restrictions precisely, we use basic concepts from geometric measure theory. Background references include \cite{federer:1996,evans/gariepy:1992,morgan:1998,maggi:2012}.

Before stating the assumptions, we introduce some notation. We use $\mathcal L$ to denote $d$-dimensional Lebesgue measure, the usual notion of volume in $\mathbb R^d$. For $m\in\mathbb N$, we use $\mathcal H^m$ to denote $m$-dimensional Hausdorff measure, which provides an $m$-dimensional notion of size for subsets of $\mathbb R^d$. When $m=d$, $\mathcal H^m$ coincides with $\mathcal L$. When $m<d$, it distinguishes among lower-dimensional sets that all have zero $d$-dimensional Lebesgue measure. For example, when $d=2$, line segments have zero two-dimensional Lebesgue measure, but their lengths are measured by $\mathcal H^1$. We also use $C^r$ to denote $r$-times continuously differentiable regularity, and $\Vert \cdot \Vert_{C^r}$ for the corresponding $C^r$ norm, as defined in \citet[p.~12]{hirsch2012differential} and \citet[p.~35]{hirsch2012differential}, respectively. Appendix~\ref{sec:GMTstuff} provides the formal definition of Hausdorff measure and a brief discussion of $C^r$ manifolds and norms. For additional background on Hausdorff measure, see \citet[Chapter 2]{federer:1996} and \citet[Chapter 3]{maggi:2012}.

With this notation in place, Assumption \ref{ass:assumption} states the conditions under which our testing procedure is asymptotically valid.

\begin{assumption}\label{ass:assumption}
Let $\{Z_i\}_{i=1}^n$ be an i.i.d.\ sample of $d$-dimensional random vectors, where $d \geq 2$. Let the boundary $\mathcal{B}\subset \mathbb{R}^d$ be closed and nonempty.
\begin{enumerate}[{\rm (a)}]
\item For some $\delta>0$, the distribution of $Z_i$ restricted to
\begin{equation*}
    \mathcal{B}^{\delta}
    \equiv
    \big\{z\in\mathbb{R}^{d}:\inf_{a\in\mathcal{B}}\Vert z-a\Vert\leq\delta\big\}
\end{equation*}
is absolutely continuous with respect to Lebesgue measure and admits a version of its density $f_Z$ such that, for every $b\in\mathcal B$ and $t\in\{0,1\}$, $\lim_{z\to b,\,T(z)=t}f_Z(z)$ exists. Moreover, for some $b_0\in\mathcal B$ and $t_0\in\{0,1\}$,
\begin{equation*}
    0
    ~<~
    \lim_{z\to b_0,\,T(z)=t_0}f_Z(z)
    ~\leq~
    \sup_{z\in\mathcal B^\delta}f_Z(z)
    ~<~
    \infty.
\end{equation*}

\item We have $\mathcal{H}^{d-1}(\mathcal{B})>0$ and, for every $r>0$,
\begin{equation*}
    \mathcal{H}^{d-1}\big(\{b \in \mathcal{B} : \Vert b\Vert \leq r\}\big) ~<~ \infty .
\end{equation*}

\item There exists a constant $\gamma>0$ and $\bar r>0$ such that, for all $t \in \{0,1\}$, all $b \in \mathcal{B}$, and all $r\in(0,\bar r)$,
\begin{equation*}
    \frac{
    \mathcal{L}(\{z \in \mathbb{R}^d : \Vert z-b\Vert \leq r\} \cap \{T(z)=t\})
    }{
    \mathcal{L}(\{z \in \mathbb{R}^d : \Vert z-b\Vert \leq r\})
    }
    ~\geq~ \gamma .
\end{equation*}

\item There exists a closed set $\Sigma\subseteq\mathcal B$ such that $\mathcal H^{d-1}(\Sigma)=0$ and $\mathcal B\setminus\Sigma$ is an embedded $C^{\max\{2,d-1\}}$, $(d-1)$-dimensional manifold without boundary.

\item We have
\begin{align*}
    \lim_{r\rightarrow \infty }~\underset{u\downarrow 0}{\lim \sup}~
    P\big(\{ \dist (Z_{i},\mathcal{B})  \leq u\}\cap \{\Vert Z_{i} \Vert>r\}\big)/u~=~0,\\
    \lim_{r\to\infty}
\int_{{b\in\mathcal B:\Vert b\Vert>r}} f_Z(b)~d\mathcal H^{d-1}(b)
~=~0 .
\end{align*}
\end{enumerate}
\end{assumption}

Assumption \ref{ass:assumption} has five components, which we discuss in detail next. 

Assumption \ref{ass:assumption}(a) requires the distribution of $Z_i$ to be absolutely continuous in a fixed neighborhood of the boundary. Its density $f_Z$ must be bounded above throughout this neighborhood and strictly positive at least one boundary point. Throughout the paper, $f_Z$ denotes this density on $\mathcal B^\delta$. In particular, this is the density used to formulate the null hypothesis in \eqref{eq:H0}. Under $H_0$, for every $b\in\mathcal B$, $f_Z(b)$ is the common limiting value of the density as $b$ is approached from the treatment and control regions. Positivity at some $b_0\in\mathcal B$ rules out the degenerate case in which the density vanishes along the entire boundary, while boundedness provides the domination needed for our local probability calculations. Both restrictions are local to the boundary. Outside $\mathcal B^\delta$, the distribution of $Z_i$ is unrestricted and need not admit a density.

Assumption \ref{ass:assumption}(b) requires $\mathcal B$ to have positive and locally finite $(d-1)$-dimensional size. The condition $\mathcal H^{d-1}(\mathcal B)>0$ ensures that the boundary is not negligible according to the measure appropriate for a hypersurface in $\mathbb R^d$. The second condition requires the portion of $\mathcal B$ contained in any bounded ball to have finite $(d-1)$-dimensional Hausdorff measure. When $d=2$, $\mathcal H^1$ measures length, so this condition requires the boundary to have finite length inside every bounded ball. When $d=3$, $\mathcal H^2$ measures surface area, so it requires finite surface area inside every bounded ball.

Assumption \ref{ass:assumption}(c) is a uniform thickness condition on the treatment and control regions. It requires that, below a common radius $\bar r$, each region occupy at least a fixed fraction $\gamma$ of the Lebesgue volume of every ball centered at a boundary point. Thus, neither the treatment region nor the control region can become arbitrarily thin relative to the surrounding neighborhood, uniformly over $b\in\mathcal B$. This condition rules out, for example, cusp-like configurations in which one region's volume share converges to zero near some boundary points.

Assumption \ref{ass:assumption}(d) requires $\mathcal B$ to be a regular hypersurface except on a negligible closed set $\Sigma$. More precisely, after removing a set with zero $\mathcal H^{d-1}$-measure, the remaining set $\mathcal B\setminus\Sigma$ must be an embedded $C^{\max\{2,d-1\}}$, $(d-1)$-dimensional manifold without boundary. The term ``without boundary'' refers to the boundary of $\mathcal B\setminus\Sigma$ as a manifold. It does not preclude $\mathcal B$ from having corners, vertices, edges, or endpoints, provided that these features are included in $\Sigma$ and thus have zero $\mathcal H^{d-1}$-measure. The $C^2$ regularity is used for the local tubular representation, while the $C^{d-1}$ regularity permits the application of the finite-differentiability version of the Morse--Sard theorem.

Assumption \ref{ass:assumption}(e) imposes two tail restrictions. The first requires the probability of observing $Z_i$ both close to $\mathcal B$ and far from the origin to become negligible, after applying the local normalization by $u$, as the radius of the ball tends to infinity. The second requires the density-weighted Hausdorff measure of the portion of $\mathcal B$ far from the origin to vanish in the same limit. Thus, neither the local probability mass near $\mathcal B$ nor its limiting boundary integral can be driven by increasingly remote portions of the boundary. Both restrictions hold automatically if the support of $Z_i$ is bounded or if $\mathcal B$ is bounded.

We conclude this section by returning to the running example based on \cite{DaiEtAl2022Hypertension} and verifying that Assumption \ref{ass:assumption} holds in that setting under mild regularity conditions.

\begin{runningexample}
We now verify Assumption \ref{ass:assumption} in our running example. Since the running variable is a vector with two components, $d=2$. The boundary $\mathcal B$ is L-shaped, as depicted in Figure \ref{fig:step0}. It is the union of the systolic and diastolic blood-pressure threshold rays, which meet at the corner $(140,90)$.

Part (a) requires the joint distribution of systolic and diastolic blood pressure to admit a well-behaved density in a neighborhood of the boundary, with a density that is bounded above in that neighborhood and positive at some boundary point. Away from the boundary, the distribution of the vector running variable is unrestricted and need not admit a density.

Because $d=2$, the relevant boundary measure is $\mathcal H^1$, which measures length. The L-shaped boundary has positive and finite length inside any bounded ball, so part (b) holds. Part (c) also holds. Away from the corner, every sufficiently small ball centered on the boundary is divided into treatment and control regions of equal Lebesgue volume. At the corner, the control region occupies one quadrant and the treatment region occupies the remaining three. Therefore, the lower bound in part (c) holds with any $\gamma\leq1/4$.

To verify part (d), set $\Sigma:=\{(140,90)\}$. The set $\Sigma$ is closed and satisfies $\mathcal H^1(\Sigma)=0$. Moreover, $\mathcal B\setminus\Sigma$ is the disjoint union of two open rays. Each ray is a smooth embedded one-dimensional manifold without boundary, and their disjoint union has the same property. Since $\max\{2,d-1\}=2$, it follows that $\mathcal B\setminus\Sigma$ is, in particular, an embedded $C^2$, one-dimensional manifold without boundary. Thus, part (d) allows the corner because it belongs to a set that is negligible under the relevant boundary measure.

Finally, part (e) holds if the density of the vector running variable decays sufficiently fast near remote portions of the boundary. In particular, it is immediate if the support of the vector running variable is bounded. More generally, they hold under sufficiently fast decay of its density along the two threshold rays. Thus, under mild distributional assumptions, the running example satisfies Assumption \ref{ass:assumption}.
\end{runningexample}

\subsection{Properties of our test} \label{sec:validity}

This section describes the asymptotic properties of the hypothesis testing procedure described in Section \ref{sec:our_test}. Throughout the asymptotic analysis, we fix the number $q$ of selected observations as $n\to\infty$. Thus, the section studies the limiting experiment generated by the $q$ observations closest to the boundary.

Our main result in this section is Theorem \ref{thm:validity}, which establishes the asymptotic validity of our proposed hypothesis test. Before stating this theorem, we present several intermediate results that provide intuition. The intermediate results follow the three steps in our procedure. Theorem \ref{thm:keyPart1} describes the asymptotic behavior of the selected sample and establishes the random-labeling property of the side indicators relative to their projected locations. Theorem \ref{thm:clustering} shows that this random-labeling property is preserved after the clustering step. Together, these results justify the conditional randomization approximation used to construct the critical values in Theorem \ref{thm:validity}.

\subsubsection*{Step 1: Selection}

Step 1 selects the $q$ observations closest to the boundary $\mathcal B$, which are labeled in their original sample order. Recall that, for each selected observation $\tilde Z_i$, we define its projected location and side indicator as $\tilde B_i=B(\tilde Z_i)$ and $\tilde T_i=T(\tilde Z_i)$, respectively. The first result describes the joint limiting distribution of the selected observations, their projected locations, and their side indicators. Under this labeling, the limiting observations form an i.i.d.\ sample whose common distribution is derived from the boundary-local law.

\begin{theorem}[Convergence and limiting distribution]\label{thm:keyPart1}
Under Assumption \ref{ass:assumption} and $H_0$ in \eqref{eq:H0},
\begin{equation*}
\{(\tilde Z_i,\tilde B_i,\tilde T_i)\}_{i=1}^q ~\overset{d}{\to}~ \{(Z_i^*,B_i^*,T_i^*)\}_{i=1}^q,
\end{equation*}
where $\{(Z_i^*,B_i^*,T_i^*)\}_{i=1}^{q}$ is an i.i.d.\ sample. Moreover, 
\begin{enumerate}[(a)]
    \item $T_i^*$ is independent of $(Z_i^*,B_i^*)$,
    \item $T_i^* \sim \mathrm{Bernoulli}(1/2)$,
    \item $Z_i^*=B_i^*$ almost surely.
\end{enumerate}
\end{theorem}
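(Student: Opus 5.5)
The proof reduces the problem to a one-sample local law and then applies a standard order-statistics argument, as in \citet{bugni/canay:2021}. Write $D_i=\dist(Z_i,\mathcal B)$ and let $F(u)=P(D_i\le u)$. The central intermediate result is a local expansion: for every bounded continuous $g$ on $\mathbb R^d$ and each $t\in\{0,1\}$,
\begin{equation*}
\lim_{u\downarrow 0}\frac{1}{u}\,E\big[g(Z_i)\,\mathbf 1\{D_i\le u,\;T(Z_i)=t\}\big]~=~\int_{\mathcal B} g(b)\,f_Z(b)\,d\mathcal H^{d-1}(b).
\end{equation*}
Under $H_0$ the right-hand side does not depend on $t$. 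Call it $\mu(g)$. Then $F(u)/u\to 2\mu(1)$, and $\mu(1)\in(0,\infty)$.

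Positivity of $\mu(1)$ comes from part (a) together with the continuity of $f_Z$ at $b_0$ implied by $H_0$. Finiteness comes from parts (a), (b) and (e). Because $B(z)\to b$ as $z\to b$ and $\Vert Z_i-B(Z_i)\Vert=D_i\le u$, the same limit holds with $g(Z_i)$ replaced by $h(Z_i,B(Z_i))$ for bounded continuous $h$, and its value is $\int h(b,b)f_Z\,d\mathcal H^{d-1}$.

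The expansion is proved in three reductions.
\begin{enumerate}
\item[(i)] Truncate to a ball $\{\Vert z\Vert\le r\}$. Part (e) makes the remainder $o(u)$ uniformly as $r\to\infty$.
\item[(ii)] Inside the ball, remove a neighborhood of the singular set $\Sigma$. Since $\Sigma$ is compact in the ball and $\mathcal H^{d-1}(\Sigma)=0$, it can be covered by finitely many balls $B(x_j,\rho_j)$ with $\sum\rho_j^{d-1}<\varepsilon$. The set of points within distance $u\ll\rho_j$ of $\mathcal B$ whose projection lies near $\Sigma$ has Lebesgue measure $O(u\sum_j\rho_j^{d-1})$, up to constants. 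Proving this bound needs parts (b) and (c) together with a covering or Minkowski-content argument.
\item[(iii)] Near compact pieces of the $C^2$ manifold $\mathcal B\setminus\Sigma$, use the tubular neighborhood theorem. There the map $(b,s)\mapsto b+s\nu(b)$ is a $C^1$ diffeomorphism with Jacobian $1+O(s)$, and the two sides $s>0$ and $s<0$ correspond to $T=1$ and $T=0$. Part (c) rules out both sides of a regular point lying in the same region. Integrating $g f_Z$ over $s\in(0,u)$ then gives $u\int g(b)f_Z^{t}(b)\,d\mathcal H^{d-1}+o(u)$ by continuity of the one-sided limits and dominated convergence, using the bound on $f_Z$ from part (a).
\end{enumerate}
The Morse--Sard regularity in (d) would be used to guarantee that the projection is unique outside a null set and measurable.

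Given the expansion, the selection step is handled by exchangeability and conditioning on order statistics. Let $D_{(q+1)}$ be the $(q+1)$-th smallest distance, ignoring ties, which have probability zero. Conditional on $D_{(q+1)}=u$, the selected indices form a uniformly random $q$-subset. Listing them in original sample order therefore makes the selected observations i.i.d.\ draws from the law of $Z_i$ given $D_i<u$. Since $F(0)=0$ and $F$ is continuous near $0$, we have $D_{(q+1)}\to 0$ almost surely. The joint law of $\{(\tilde Z_j,\tilde B_j,\tilde T_j)\}$ is then a mixture over $u$ of $q$-fold products of the conditional laws
\begin{equation*}
P_u(\cdot)=P\big((Z_i,B(Z_i),T(Z_i))\in\cdot\mid D_i<u\big).
\end{equation*}
By the expansion, $P_u$ converges weakly as $u\downarrow 0$ to the law $P^*$ under which $B^*\sim f_Z\,d\mathcal H^{d-1}/\mu(1)$ on $\mathcal B$, $Z^*=B^*$, and $T^*\sim\mathrm{Bernoulli}(1/2)$ independent of $B^*$. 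Each factor of the product converges to $P^*$, so the mixture converges to $(P^*)^{\otimes q}$, and claims (a)–(c) can be read off directly.

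The main obstacle is step (ii): controlling the contribution near $\Sigma$, and near points where the projection is non-unique, using only $\mathcal H^{d-1}(\Sigma)=0$ and the thickness condition. My first attempt would be a covering argument bounding the Lebesgue volume of the set $\{z:\dist(z,\mathcal B\cap \cup_j B(x_j,\rho_j))\le u\}$ by $C u\sum_j\rho_j^{d-1}$. If that fails, I would fall back on the upper Minkowski content of compact $(d-1)$-rectifiable pieces.
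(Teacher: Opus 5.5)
Your overall architecture matches the paper's. You use a local expansion built from three pieces: truncation via (e), a tubular-neighbourhood computation over a compact piece of $\mathcal B\setminus\Sigma$, and a covering bound for the rest. You then get weak convergence of $P_u$ to $P^*$ and condition on the $(q+1)$th distance in the Kaufmann--Reiss manner. Testing against bounded continuous $h$ instead of the paper's Borel-set sandwich is fine.

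\textbf{The gap is in step (ii).} You rightly flag it as the main obstacle, but the estimate you propose is false under the assumptions. A bound of the form $\mathcal L(\{z:\dist(z,\mathcal B\cap\bigcup_j B(x_j,\rho_j))\le u\})\le Cu\sum_j\rho_j^{d-1}$ requires an \emph{upper} density bound $\mathcal H^{d-1}(\mathcal B\cap B(x,\rho))\lesssim\rho^{d-1}$ near $\Sigma$. Nothing in Assumption (b)--(d) provides one. For example, take two interleaved spirals $r=\theta^{-2}$ winding into a point of $\Sigma$. They satisfy (b)--(d), including the thickness condition (c), since the alternating stripes have comparable widths. Yet $\mathcal H^1(\mathcal B\cap B(0,\rho))\asymp\rho^{1/2}$, so as $u\downarrow 0$ the $u$-neighbourhood of $\mathcal B\cap B(0,\rho)$ has area of order $u\rho^{1/2}\gg u\rho$. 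The sum $\sum_j\rho_j^{d-1}$ from a Hausdorff cover does not control how much boundary lies inside those balls, and $\rho_j$ is the wrong scale at which to cover.

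\textbf{The missing idea is a lower density bound from (c).} Condition (c) gives $\mathcal H^{d-1}(\mathcal B\cap B(b,r))\ge\eta r^{d-1}$ for all $b\in\mathcal B$ and small $r$. This follows from the relative isoperimetric inequality applied to $\{T=t\}$, which has locally finite perimeter by (b) and satisfies $\partial^*\{T=t\}\subseteq\mathcal B$. With this bound the argument runs as follows:
\begin{enumerate}
\item[(1)] Cover the bad part of $\mathcal B$ by balls of radius of order $u$ centred on $\mathcal B$, and extract a Besicovitch subfamily with overlap at most $\xi(d)$.
\item[(2)] Each ball carries boundary mass at least $\eta(cu)^{d-1}$. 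So there are at most $C\,\mathcal H^{d-1}(\text{bad part})/u^{d-1}$ balls, and the $u$-neighbourhood has volume at most $C'u\,\mathcal H^{d-1}(\text{bad part})$.
\item[(3)] Smallness comes from $\mathcal H^{d-1}(\mathcal B\cap B(0,r)\cap\{\dist(\cdot,\Sigma)<\epsilon\})\to 0$. This follows from (b), (d) and continuity of measure, not from a small Hausdorff cover.
\end{enumerate}
The balls' mass must also stay inside the bad set. The paper ensures this with its set $C(\lambda_1 u)$, plus a thin shell around $\mathcal M$ handled by the tubular expansion.

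Your Minkowski-content fallback runs into the same issue. For compact sets known only to be countably $(d-1)$-rectifiable, as $\mathcal B$ is here, the Minkowski-content theorems one can invoke also assume a lower density bound of this kind.

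\textbf{Two related points.} First, your claim that $\mu(1)>0$ silently needs the same lemma. Continuity of $f_Z$ at $b_0$ gives $f_Z\ge c$ near $b_0$, but you still need $\mathcal H^{d-1}(\mathcal B\cap B(b_0,r))>0$, and that is where (c) enters. Second, ``ties have probability zero'' is not true as stated, since the law of $Z_i$ is unrestricted outside $\mathcal B^\delta$. This is harmless because only the event $\{D_{(q+1)}\le\delta\}$ matters. The paper formalizes it by perturbing the observations farther than $\delta$ from $\mathcal B$ before invoking the order-statistics result.
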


Theorem \ref{thm:keyPart1} describes the asymptotic distribution of the selected sample when the selected observations are labeled in their original sample order. It shows that, under $H_0$ and the regularity conditions in Assumption \ref{ass:assumption}, the $q$ observations closest to the boundary behave asymptotically as an i.i.d.\ sample from the boundary-local limiting distribution. This reduces the original order-statistic selection problem to a simpler boundary-local experiment. The weak convergence result is the multidimensional analog of \citet[Lemma B.3]{bugni/canay:2021}, which in turn builds on \citet[Theorem 1]{kauffman/reiss:1992}.

The result also characterizes the limiting distribution that underlies all subsequent results. The limiting side indicators are i.i.d.\ Bernoulli$(1/2)$ and are independent of the limiting locations $(Z^*,B^*)$. Moreover, $Z_i^*=B_i^*$ almost surely, so, in the limit, the original location carries no additional information once its projected location is known. Thus, asymptotically, the relevant information in the selected sample is contained in the projected locations $\tilde B$ and the side indicators $\tilde T$.

The intuition is that, under $H_0$, observations sufficiently close to the boundary are approximately balanced between the treatment and control regions. This balance holds locally at regular boundary points: although the density level may vary along the boundary, the local share of observations on each side is asymptotically one-half. Hence, the projected location along the boundary carries no asymptotic information about the side indicator. This is the key simplification used below: after projecting onto the boundary, the selected side indicators can be treated, asymptotically, as independent Bernoulli$(1/2)$ draws, which forms the basis of the testing procedure.

\subsubsection*{Step 2: Clustering}

Step 2 of our test clusters the projected locations $\tilde B$ using $k$-means. The next result derives the limiting distribution of the side indicators $\tilde T$ conditional on the collection of cluster-assignment vectors $\tilde\pi$.

\begin{theorem}[Side indicators conditional on clustering]\label{thm:clustering}
Under Assumption \ref{ass:assumption} and $H_0$ in \eqref{eq:H0}, for every
$t\in\{0,1\}^q$,
\begin{equation*}
    P(\tilde T=t\mid \tilde\pi)
    ~\overset{p}{\to}~
    2^{-q}.
\end{equation*}
\end{theorem}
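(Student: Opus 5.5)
The plan is to reduce the conditional statement to an unconditional joint-convergence statement and then use the structure of the limit from Theorem \ref{thm:keyPart1}. The obstacle is that $\tilde\pi$ is a discrete random object, but it is computed from $\tilde B$ by $k$-means with a deterministic tie-breaking rule, and this map need not be continuous everywhere. Since $\tilde\pi$ takes values in a finite set $\Pi$ of collections of assignment vectors, the conditional probability can be written explicitly:
\[
P(\tilde T=t\mid\tilde\pi)=\sum_{\pi\in\Pi}\mathbf 1\{\tilde\pi=\pi\}\,\frac{P(\tilde T=t,\tilde\pi=\pi)}{P(\tilde\pi=\pi)}.
\]
It therefore suffices to show that, for every $\pi\in\Pi$ and $t\in\{0,1\}^q$,
\[
P(\tilde T=t,\tilde\pi=\pi)\to 2^{-q}\lim_n P(\tilde\pi=\pi)
\]
along the limit law. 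Terms with vanishing $P(\tilde\pi=\pi)$ need extra care: if $P(\tilde\pi=\pi)\to0$, then $\mathbf 1\{\tilde\pi=\pi\}\overset{p}{\to}0$, and the conditional probability is bounded by one, so these terms contribute $o_p(1)$.

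The first key step establishes joint convergence $(\tilde T,\tilde\pi)\overset{d}{\to}(T^*,\pi^*)$, where $\pi^*=\Gamma(B^*)$ and $\Gamma$ is the (tie-broken) $k$-means map applied for each $k\in\mathcal K$. I will invoke the continuous mapping theorem on the convergence $(\tilde B,\tilde T)\overset{d}{\to}(B^*,T^*)$ from Theorem \ref{thm:keyPart1}. This requires showing that the discontinuity set $D$ of $\Gamma$ has probability zero under the law of $B^*=(B_1^*,\dots,B_q^*)$.

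Away from $D$, the optimal $k$-means partition is unique and strict: every competing partition has strictly larger within-cluster sum of squares, and the points are distinct. In that case the partition, and hence its labeling under the deterministic rule, is locally constant in $\tilde B$. The set $D$ is contained in the set of configurations where two distinct partitions of $\{1,\dots,q\}$ attain equal cost, or where two points coincide. For each pair of partitions, equality of the within-cluster sums of squares is a nontrivial polynomial equation in $(b_1,\dots,b_q)\in(\mathbb R^d)^q$, restricted to $\mathcal B^q$.

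The main obstacle is showing that each such tie set is null under the law of $B^*$. From the proof of Theorem \ref{thm:keyPart1}, I expect $B_i^*$ to be i.i.d.\ with density proportional to $f_Z(b)$ with respect to $\mathcal H^{d-1}$ on $\mathcal B$, hence supported on the $C^2$ manifold $\mathcal B\setminus\Sigma$ up to a null set. I would argue as follows. Fix all points but one, say $b_j$, chosen so that $j$ switches clusters between the two partitions. The tie equation then becomes a nontrivial quadratic-type equation in $b_j$ of the form $\|b_j-c_1\|^2\cdot w_1-\|b_j-c_2\|^2\cdot w_2=\text{const}$, which defines a sphere or hyperplane in $\mathbb R^d$. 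Such a set intersects a $C^2$ hypersurface in an $\mathcal H^{d-1}$-null set, except when the hypersurface locally contains a piece of that sphere or hyperplane.

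That exception is a genuine risk, for example with flat boundaries. If it arises, I would fall back on a different argument: take the deterministic tie-breaking rule as given, note that the test only needs $\tilde\pi$ to be a measurable function of $\tilde B$, and work directly with this structure instead of continuity. Since $\tilde T$ and $\tilde B$ asymptotically decouple, a Portmanteau argument applied to the events $\{\tilde B\in\Gamma^{-1}(\pi)\}\times\{t\}$ requires only that the boundaries of $\Gamma^{-1}(\pi)$ be null. If needed, I would use the Morse--Sard regularity in Assumption \ref{ass:assumption}(d) to argue that restricted level sets of these polynomial tie functions are $\mathcal H^{d-1}$-null for almost every configuration of the remaining points.

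Granting joint convergence, part (a) of Theorem \ref{thm:keyPart1} gives independence of $T^*$ from $B^*$, and hence from $\pi^*$. Together with part (b), this yields
\[
P(T^*=t,\pi^*=\pi)=2^{-q}P(\pi^*=\pi).
\]
Since all sets involved are finite, convergence in distribution gives convergence of these point probabilities. Substituting into the displayed sum, each term with $P(\pi^*=\pi)>0$ has ratio converging to $2^{-q}$, and the remaining terms vanish in probability. This completes the plan.
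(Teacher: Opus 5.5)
Your reduction over the finite range $\Pi$ and your handling of cells with vanishing probability are fine. The load-bearing step, however, is never established. You need the discontinuity set of the clustering map $\Gamma$ to be null under the law of $(B_1^*,\dots,B_q^*)$, so that the continuous mapping theorem can carry the weak convergence of Theorem~\ref{thm:keyPart1} over to $(\tilde T,\tilde\pi)$. Your one-point sphere/hyperplane argument stops exactly at the case you flag as a risk. The Portmanteau fallback requires the boundaries of the sets $\Gamma^{-1}(\pi)$ to be null under the limit law, which is the same condition restated. Morse--Sard only guarantees that almost every level of a function is regular; it does not say that the particular zero level defining a tie is $\mathcal H^{d-1}$-null.

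More fundamentally, your route presupposes that $\Gamma$ is the exact global minimizer of the within-cluster sum of squares. The theorem instead covers whatever $k$-means implementation is used, with labels and ties set by an arbitrary measurable deterministic rule (for instance, Lloyd iterations from a deterministic start). For such maps, almost-everywhere continuity would have to be re-proved for each implementation. Weak convergence alone cannot control $P(\tilde B\in E)$ for a Borel set $E$ whose boundary charges the limit law.

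The missing idea is to bypass continuity altogether. Prove approximate independence of $\tilde T$ and $\tilde B$ uniformly over all Borel sets, comparing with the finite-$n$ law of $\tilde B$ rather than with the limit:
\[
\sup_{S\in Bo(\mathcal B^q),\,t\in\{0,1\}^q}\bigl|P(\tilde T=t,\tilde B\in S)-2^{-q}P(\tilde B\in S)\bigr|~\to~0 .
\]
Since $\{\tilde\pi=\pi\}=\{\tilde B\in E(\pi)\}$ with $E(\pi)$ Borel, this gives
\[
E\bigl|P(\tilde T=t\mid\tilde\pi)-2^{-q}\bigr|=\sum_{\pi\in\Pi}\bigl|P(\tilde T=t,\tilde\pi=\pi)-2^{-q}P(\tilde\pi=\pi)\bigr|\to0
\]
for any measurable clustering rule. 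Markov's inequality then finishes, and $P(\tilde\pi=\pi)$ never needs to converge.

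The paper obtains the displayed bound in four steps:
\begin{itemize}
\item It conditions on the $(q+1)$-th smallest distance $D^{\mathrm{cts}}_{q+1:n}=u$. By Kaufmann--Reiss, the selected pairs $(\tilde B_j,\tilde T_j)$ are then i.i.d.\ with law $Q_u$, the law of $(B(Z),T(Z))$ given $\dist(Z,\mathcal B)\le u$.
\item The convergence in Lemma~\ref{lem:LimitDen} is uniform over Borel $S\subseteq\mathcal B$. This yields $\|Q_u-Q_{u,B}\otimes\mathrm{Bernoulli}(1/2)\|_{\mathrm{TV}}\to0$ as $u\downarrow0$, where $Q_{u,B}$ is the $B$-marginal of $Q_u$.
\item Tensorization of total variation transfers this bound to the $q$-fold products.
\item The fact that $D^{\mathrm{cts}}_{q+1:n}\to0$ in probability controls the remainder.
\end{itemize}
This total-variation strengthening, not the weak convergence in Theorem~\ref{thm:keyPart1}, is what the argument actually requires.
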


Theorem \ref{thm:clustering} implies that $\tilde T$ and $\tilde\pi$ are asymptotically independent, with $\tilde T$ converging to an i.i.d.\ $\mathrm{Bernoulli}(1/2)$ vector. The limiting distribution of $\tilde T$ is already established in Theorem \ref{thm:keyPart1}. Thus, the novelty of Theorem \ref{thm:clustering} is the independence between the selected side indicators and the collection of cluster-assignment vectors. In the limit, the information used to form the clusters carries no information about the side indicators. This result justifies approximating the null distribution of the test statistic $S(\tilde T,\tilde\pi)$ by redrawing side indicators from an i.i.d.\ $\mathrm{Bernoulli}(1/2)$ distribution conditional on the realized collection of cluster-assignment vectors.

\subsubsection*{Step 3: Testing}

The following theorem is the paper's main result. It establishes the asymptotic validity of the randomized and non-randomized tests defined in Section \ref{sec:our_test}.

\begin{theorem}[Asymptotic validity]\label{thm:validity}
Fix $q\in\mathbb N$ and $\alpha\in(0,1)$. Under Assumption \ref{ass:assumption} and $H_0$ in \eqref{eq:H0},
\begin{equation*}
    \lim_{n\to\infty} E\left[\phi_n^{\mathrm r}(q,\alpha)\right] ~=~ \alpha,
\end{equation*}
and
\begin{equation*}
    \limsup_{n\to\infty} E\left[\phi_n^{\mathrm{nr}}(q,\alpha)\right] ~\le~ \alpha.
\end{equation*}
\end{theorem}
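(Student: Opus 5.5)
The proof conditions on the realized collection of cluster-assignment vectors $\tilde\pi$ and exploits two facts: $\tilde\pi$ takes values in a finite set, and, given $\tilde\pi$, the randomized test is an exact level-$\alpha$ randomization test under the Bernoulli$(1/2)$ law. Let $\Pi$ denote the finite set of possible values of $\tilde\pi$, i.e.\ collections of assignment vectors $(\pi_{1,k},\dots,\pi_{q,k})\in\{1,\dots,k\}^q$ for $k\in\mathcal K$. For each fixed $\pi\in\Pi$ and $t\in\{0,1\}^q$, the quantities $S(t,\pi)$, $c(\alpha,\pi)$ and $u(\alpha,\pi)$ are deterministic, and $c(\alpha,\pi)$, $u(\alpha,\pi)$ are computed under $T^*$ i.i.d.\ Bernoulli$(1/2)$, i.e.\ under the uniform law $2^{-q}$ on $\{0,1\}^q$. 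Define
\[
\psi(t,\pi)=\mathbf 1\{S(t,\pi)>c(\alpha,\pi)\}+\mathbf 1\{S(t,\pi)=c(\alpha,\pi)\}\,u(\alpha,\pi).
\]
By construction of $c$ and $u$ in \eqref{eq:test_cv} and \eqref{eq:u_defn},
\[
\sum_{t\in\{0,1\}^q}2^{-q}\psi(t,\pi)=P(S(T^*,\pi)>c\mid\pi)+P(S(T^*,\pi)=c\mid\pi)\,u=\alpha
\]
for every $\pi$. In the case $P(S(T^*,\pi)=c)=0$, the definition of $c$ as a quantile gives $P(S>c)=\alpha$; this small case check is needed. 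Likewise, $\sum_t 2^{-q}\mathbf 1\{S(t,\pi)>c(\alpha,\pi)\}\le\alpha$, again by the quantile definition.

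Next, integrate out $U$, which is independent of the data:
\[
E[\phi_n^{\mathrm r}]=E\big[\psi(\tilde T,\tilde\pi)\big]=E\Big[\sum_{t}P(\tilde T=t\mid\tilde\pi)\,\psi(t,\tilde\pi)\Big].
\]
Subtracting $\alpha=E\big[\sum_t 2^{-q}\psi(t,\tilde\pi)\big]$ gives
\[
\big|E[\phi_n^{\mathrm r}]-\alpha\big|\le\sum_{t\in\{0,1\}^q}E\big|P(\tilde T=t\mid\tilde\pi)-2^{-q}\big|,
\]
since $0\le\psi\le1$. By Theorem~\ref{thm:clustering}, each $P(\tilde T=t\mid\tilde\pi)-2^{-q}\overset{p}{\to}0$. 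These differences are bounded by $1$, so bounded convergence gives $L^1$ convergence. The sum has finitely many ($2^q$) terms, so the right-hand side tends to zero and $\lim_n E[\phi_n^{\mathrm r}]=\alpha$.

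The non-randomized test is handled by the same decomposition with $\psi$ replaced by $\mathbf 1\{S(t,\pi)>c(\alpha,\pi)\}$. This yields
\[
E[\phi_n^{\mathrm{nr}}]\le\alpha+\sum_t E\big|P(\tilde T=t\mid\tilde\pi)-2^{-q}\big|,
\]
hence $\limsup_n E[\phi_n^{\mathrm{nr}}]\le\alpha$.

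The argument is essentially bookkeeping once Theorem~\ref{thm:clustering} is available; all the geometric difficulty lives there. The only delicate points are technical.
\begin{enumerate}[(i)]
\item Measurability: $\tilde\pi$ must be a measurable discrete random element so that the conditional probabilities are well defined. This is guaranteed by the measurable deterministic tie-breaking rules in Steps 1 and 2.
\item The degenerate cases in the definition of $u(\alpha,\pi)$ must be verified to still give exact size $\alpha$ conditional on $\pi$. In particular, one must check that $u\in[0,1]$, which follows from $P(S>c)\le\alpha\le P(S\ge c)$ by the infimum definition of $c$.
\end{enumerate}
I expect (ii) to be the step needing the most care, though it is standard.
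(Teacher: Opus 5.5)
Your proposal is correct and follows essentially the same route as the paper: integrate out $U$, note that for each fixed $\pi$ the rejection probability under the uniform law on $\{0,1\}^q$ equals $\alpha$ (randomized) or is at most $\alpha$ (non-randomized), and bound the discrepancy by $E\big|P(\tilde T=t\mid\tilde\pi)-2^{-q}\big|$, which vanishes by Theorem~\ref{thm:clustering} and bounded convergence. The paper packages this bound through $\Delta_n(\tilde\pi)=\sup_t|P(\tilde T=t\mid\tilde\pi)-2^{-q}|$, obtaining $\alpha 2^q E[\Delta_n(\tilde\pi)]$, whereas you sum over $t$; your explicit checks that $u(\alpha,\pi)\in[0,1]$ and of the $P(S(T^*,\pi)=c(\alpha,\pi))=0$ case supply details the paper leaves implicit (that case is in fact vacuous, since $S(T^*,\pi)$ has finite support and $\alpha\in(0,1)$ forces $c(\alpha,\pi)$ to be an atom).
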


Theorem \ref{thm:validity} shows that the randomized test is asymptotically exact, while the non-randomized test is asymptotically valid. In general, the non-randomized test may be asymptotically conservative because the test statistic is discrete, reflecting the underlying $\mathrm{Bernoulli}(1/2)$ distribution of the side indicators.

The rationale behind Theorem \ref{thm:validity} is Theorem \ref{thm:clustering}. That theorem shows that the selected side indicators $\tilde T$ are asymptotically distributed as an i.i.d.\ $\mathrm{Bernoulli}(1/2)$ sample and are asymptotically independent of the collection of cluster-assignment vectors $\tilde \pi$. This finding formally justifies an asymptotic approximation to the distribution of $S(\tilde{T},\tilde{\pi})$ obtained by replacing $\tilde{T}$ with an i.i.d.\ $\mathrm{Bernoulli}(1/2)$ sample while holding $\tilde{\pi}$ fixed. This approximation is easy to compute via simulation, leading to the critical value $c(\alpha,\tilde{\pi})$.

\subsection{Discussion}\label{sec:discussion}

This section discusses how the validity result in Theorem \ref{thm:validity} should be interpreted and implemented. We focus on two implementation choices: the number of selected observations $q$ and the grid of cluster sizes $\mathcal{K}$. We then compare the proposed clustered procedure with two alternatives that either avoid or substantially modify the clustering step.

We first discuss the role of $q$. Theorem \ref{thm:validity} establishes size control under $H_0$ for fixed $q$ as the sample size diverges. The parameter $q$ determines the number of observations closest to the boundary that enter the statistic and therefore controls proximity to the boundary. Smaller values of $q$ use observations closer to the boundary, making the random-labeling approximation more accurate, but they also leave less information for detecting manipulation. Larger values of $q$ use more observations, but at the cost of relying on observations farther from the boundary, which may generate size distortions. The choice of $q$ therefore reflects a tradeoff between the accuracy of size control and power.

Given that $q$ is the number of observations used by the statistic, treating it as fixed means that the effective sample size of the test does not grow with $n$. As a result, our test is not consistent against fixed alternatives. We therefore study its power through simulations. We leave extending the analysis to data-dependent or diverging values of $q$ for future work; \citet{bugni2026rates} may serve as a useful starting point. For this reason, in applications, we report the test's sensitivity over a range of values of $q$.

Next, consider the choice of the cluster-count grid $\mathcal K$ in Step 2. Recall that the default implementation uses $\mathcal K=\{1,2,\dots,q\}$, but this choice is not essential: Theorem \ref{thm:validity} applies to any subset $\mathcal K\subseteq \{1,2,\dots,q\}$. Conditional on $q$ and on the chosen distance, the full grid $\mathcal K=\{1,2,\dots,q\}$ has the advantage of avoiding the choice of a single clustering resolution and of searching thoroughly over possible localized imbalances. At the same time, it can be computationally demanding, especially in repeated calculations such as Monte Carlo simulations. In practice, one may instead use a smaller grid $\mathcal K=\{k_1,\ldots,k_J\}$ that includes both small and large values of $k$. Small values of $k$ detect widespread manipulation, while larger values detect more localized manipulation.

The clustering operation in Step 2 is central to the proposed test because it preserves information about where selected observations lie along the boundary. It also adds a technical complication: the $k$-means clusters are random because they are computed from the projected locations. Theorem \ref{thm:validity} justifies the resulting randomization procedure by showing that, under the joint density continuity null, the side indicators of the selected observations behave asymptotically as independent Bernoulli labels, conditional on the projected locations and hence on the clusters formed from them. Given this complication, it is natural to ask whether one can construct useful tests that avoid the random clustering step. We now discuss two such alternatives.

The first alternative is the one-cluster version of our test. Formally, this corresponds to replacing the default cluster-count grid $\mathcal K=\{1,2,\dots,q\}$ with $\mathcal K=\{1\}$, so that all selected observations form a single cluster. This version applies the logic of \cite{bugni/canay:2021} to the signed distance from the boundary, where distance is measured from $\mathcal B$, and the sign is determined by whether the observation lies in the treatment or control region. Under our assumptions, the resulting test is asymptotically valid by Theorem \ref{thm:validity}. This validity, however, does not follow mechanically from the one-dimensional result in \cite{bugni/canay:2021}. The null hypothesis in \eqref{eq:H0} concerns the joint density along a general boundary, and geometric regularity is needed to ensure that its continuity implies continuity of the signed distance density at zero. 
Appendix \ref{app:counterexamples} shows that, without such regularity, the signed-distance density can be discontinuous even when $H_0$ in \eqref{eq:H0} holds, causing the test to incorrectly reject. Even when the required regularity conditions hold, the test may have limited power. Appendix \ref{app:counterexample_averaging} shows that discontinuities at different boundary points may average out, leaving the signed-distance density continuous even when $H_0$ in \eqref{eq:H0} fails. Thus, although the one-cluster version is asymptotically valid under our assumptions, it may be powerless against offsetting boundary manipulation.

Beyond this validity issue, the one-cluster version can be less informative than the proposed clustered test. Because it aggregates all selected observations into a single cluster, it is designed to detect widespread imbalances between the treatment and control regions. It can therefore be insensitive to manipulation that is localized along $\mathcal B$. It can also miss offsetting manipulation, in which excess mass in the treatment region near one side of the boundary is offset by excess mass in the control region near another side. The clustering step is designed precisely to avoid this information loss. By clustering projected locations that are close to one another along the boundary, the proposed statistic can detect localized imbalances that would cancel out or be diluted in the aggregate. The Monte Carlo simulations below illustrate this tradeoff: the one-cluster version performs well for widespread manipulation, while clustering is important for localized and offsetting manipulation.

A second alternative would be to consider all clusters induced by all possible partitions of the projected locations into $k$ clusters, for each $k\in\mathcal K$. Such a procedure would include the clusters generated by $k$-means, as well as every cluster generated by every other admissible partition. Relative to the proposed test, this alternative has the advantage that the collection of candidate clusters is not tied to a particular clustering algorithm. However, it has two important drawbacks. First, the number of candidate clusters is extremely large, making the resulting procedure computationally unattractive when $q$ is large or when $\mathcal K$ contains many values. Second, searching over such a large collection can reduce power because the critical value must account for the maximum over many clusters, including many that do not correspond to localized neighborhoods of the boundary. The $k$-means step provides a practical compromise: it searches for imbalances at several resolutions while restricting attention to clusters of projected locations that are nearby under the chosen distance. We do not claim that $k$-means is optimal; rather, it provides a simple and computationally convenient way to construct such clusters.\footnote{A procedure that searches over all possible partitions may have power against manipulation occurring over nonlocal subsets of boundary points. We, however, consider such manipulation less relevant in many economic applications. We expect manipulation to generate excess mass in localized neighborhoods of the boundary rather than across arbitrary subsets of boundary points.}

Taken together, these choices determine how the proposed procedure translates the joint density continuity null into a finite-sample random-labeling problem. The parameter $q$ determines the selected local sample, while the cluster-count grid $\mathcal K$ determines the resolution at which the test searches for imbalances along the boundary. Our procedure provides a practical and flexible way to implement this idea in BDDs with general boundaries. In particular, the clusters formed by $k$-means provide a natural collection of neighborhoods when manipulation is expected to arise in localized regions of $\mathcal B$.

\section{Monte Carlo simulations}\label{sec:simulations}

This section studies the finite-sample behavior of our proposed manipulation test. The simulations are designed to evaluate the size of the test under $H_0$ in \eqref{eq:H0}, its power against different forms of departures from $H_0$, and the role played by the number $q$ of selected observations.

The simulation designs use the L-shaped boundary from the running example, but we recenter the two cutoffs at $(0,0)$ for simplicity. That is, we consider a two-dimensional running variable,
\begin{equation*}
Z_i~=~(Z_{i1},Z_{i2})~\in~\mathbb{R}^2,
\end{equation*}
and assign treatment according to
\begin{equation*}
T(z_1,z_2) ~=~ \mathbf 1\{z_1 \geq 0\;\text{or}\; z_2 \geq 0\}.
\end{equation*}
The boundary is therefore the $L$-shaped set
\begin{equation*}
\mathcal B~=~\{(z_1,z_2): z_1=0,~ z_2\leq 0\}~\cup~\{(z_1,z_2): z_1\leq 0,~ z_2=0\}.
\end{equation*}
As explained in Section \ref{sec:assumption}, this boundary consists of two smooth rays that meet at a corner.

\subsection{Simulation designs}\label{sec:sim_designs}

We generate data from four designs. The baseline component of the running variable is
\begin{equation*}
Z_i\sim \mathcal{N}(\mu,\Sigma), \qquad \mu=(-1,-1), \qquad \Sigma=\mathrm{diag}(9,9).
\end{equation*}
Design 1 draws all $n$ observations from this baseline distribution, which satisfies $H_0$ in \eqref{eq:H0}. In Designs 2--4, we draw $n-m$ observations from the baseline distribution and $m$ observations from a contamination distribution, generating departures from $H_0$ in \eqref{eq:H0}.

We choose $\mu=(-1,-1)$ so that the baseline distribution is centered inside the control region, as in the running example after recentering the cutoffs. Centering the distribution at the corner $(0,0)$ would introduce additional symmetry of the density along the boundary. At the same time, the covariance matrix $\Sigma=\mathrm{diag}(9,9)$ is sufficiently dispersed to generate a substantial number of observations in both the treatment and control regions.

\paragraph{Design 1: $H_0$.}
All observations are generated i.i.d.\ from the baseline distribution:
\begin{equation*}
Z_i\sim \mathcal{N}(\mu,\Sigma).
\end{equation*}
Figure \ref{fig:designs}(a) depicts a typical draw with $n=500$. This density is continuous at the boundary, so $H_0$ in \eqref{eq:H0} holds. We use this design to evaluate the finite-sample size of the proposed test.

\paragraph{Design 2: Localized violation.}
In this design, $n-m$ observations are drawn i.i.d.\ from the baseline distribution, and the remaining $m$ observations are drawn i.i.d.\ from a small rectangle near one ray of the boundary:
\begin{equation*}
Z_i\sim \mathrm{Unif}([-2.5,-2]\times[-0.5,0]), \qquad i=1,\ldots,m.
\end{equation*}
Figure \ref{fig:designs}(b) illustrates the support of the contamination distribution. This design represents a situation in which manipulation occurs near a localized portion of the boundary. When $m$ is small relative to $n$, this type of manipulation may be difficult to detect with a test that only uses distance to the boundary, because the manipulation signal is averaged with observations projected onto unaffected portions of the boundary. The clustering step in our procedure improves power against localized manipulation by allowing the test to focus on smaller portions of the boundary.

\paragraph{Design 3: Widespread violation.}
In this design, $n-m$ observations are drawn i.i.d.\ from the baseline distribution, and the $m$ contaminated observations are drawn independently from two strips near the boundary. Half of the contaminated observations are drawn from a strip near the horizontal ray,
\begin{equation*}
Z_i\sim \mathrm{Unif}([-3,0]\times[-0.5,0]), \qquad i=1,\ldots,m/2,
\end{equation*}
and half are drawn from a strip near the vertical ray,
\begin{equation*}
Z_i\sim \mathrm{Unif}([-0.5,0]\times[-3,0]), \qquad i=m/2+1,\ldots,m.
\end{equation*}
Figure \ref{fig:designs}(c) depicts the support of the contamination distributions. This design represents a situation in which manipulation occurs along widespread portions of both boundary rays. Because the manipulation signal has the same direction along both rays, a test that pools the selected observations should detect it. Our test should also have power against this alternative. However, in this design, clustering may be less beneficial because it can split a common manipulation signal across smaller clusters rather than aggregating it over the full selected sample.

\paragraph{Design 4: Offsetting violation.}
In this design, $n-m$ observations are drawn i.i.d.\ from the baseline distribution, and the $m$ contaminated observations are drawn independently from two strips near the boundary. Half of the contaminated observations are control units near the horizontal ray,
\begin{equation*}
Z_i\sim \mathrm{Unif}([-3,0]\times[-0.5,0]), \qquad i=1,\ldots,m/2,
\end{equation*}
while the remaining half are treated units near the vertical ray,
\begin{equation*}
Z_i\sim \mathrm{Unif}([0,0.5]\times[-3,0]), \qquad i=m/2+1,\ldots,m.
\end{equation*}
Figure \ref{fig:designs}(d) illustrates the support of the contamination distributions. This design generates opposite-sign manipulation on different portions of the boundary. As a result, a test that pools all selected observations into a single cluster should have no power, because the excess mass in the treatment region near one ray is offset by the excess mass in the control region near another ray. In contrast, the clustering step in our procedure is designed to detect this type of manipulation because it can detect imbalances localized to different parts of the boundary.

\begin{figure}[htbp]
\centering
\paperfigfull{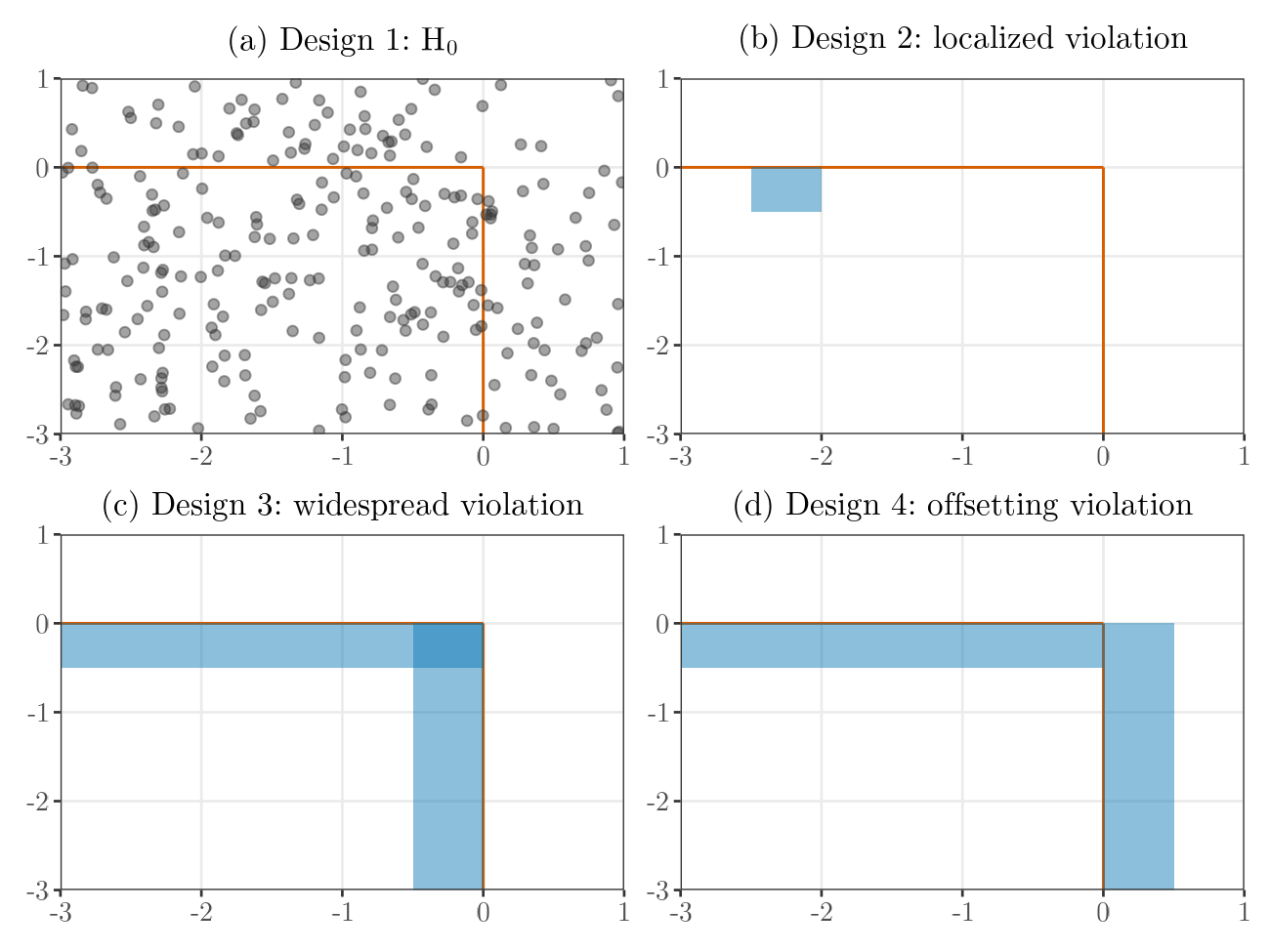}
\caption{\small Simulation designs. Panel (a) depicts the null design. Panels (b)--(d) show the locations of manipulated units under localized, widespread, and offsetting manipulation, respectively.}
\label{fig:designs}
\end{figure}

\subsection{Implementation}\label{sec:sim_implementation}

We set $\alpha=10\%$, $n\in\{500,~1{,}000,~1{,}500,~2{,}000,~2{,}500,~3{,}000,~3{,}500,~4{,}000,~4{,}500,~5{,}000\}$, and $m =0.04n$. We choose this value of $m$ so that departures from $H_0$ are noticeable in the simulations while remaining a small fraction of the full sample. For each Monte Carlo replication and each design, we draw a sample, select the $q\in\{50,~100,~150,~200\}$ observations closest to the boundary, project them onto $\mathcal B$, and compute the statistic $S(\tilde{T},\tilde{\pi})$ described in Section \ref{sec:our_test}.

For each value of $q$, we implement our preferred version of the test with $\mathcal K=\{1,2,\ldots,q\}$. To illustrate the role of clustering, we also consider the one-cluster version of our test, which uses $\mathcal K=\{1\}$. Comparing the preferred implementation with the one-cluster version allows us to assess the role of clustering in detecting localized and offsetting manipulation.

For each test evaluation, we compute the critical value in \eqref{eq:test_cv} using $R=5{,}000$ independent Bernoulli relabeling draws from the conditional Bernoulli distribution. We then compute empirical rejection rates from $MC=2{,}000$ independent Monte Carlo replications for each combination of design, sample size, value of $q$, and cluster-count grid.

\subsection{Results}\label{sec:sim_results}

Figure \ref{fig:mc-bdd-mt} reports empirical rejection frequencies for the non-randomized version of the test in \eqref{eq:test_NR_version}. Panel (a) corresponds to the null design, while Panels (b)--(d) correspond to the localized, widespread, and offsetting alternatives. Figure \ref{fig:mc-bdd-mt_r} shows the analogous results for the randomized version of the test in \eqref{eq:test_R_version}. The randomized and non-randomized versions are virtually indistinguishable, so we focus our discussion on the non-randomized results.

\begin{figure}[htbp]
  \centering
    \paperfigfullnarrow{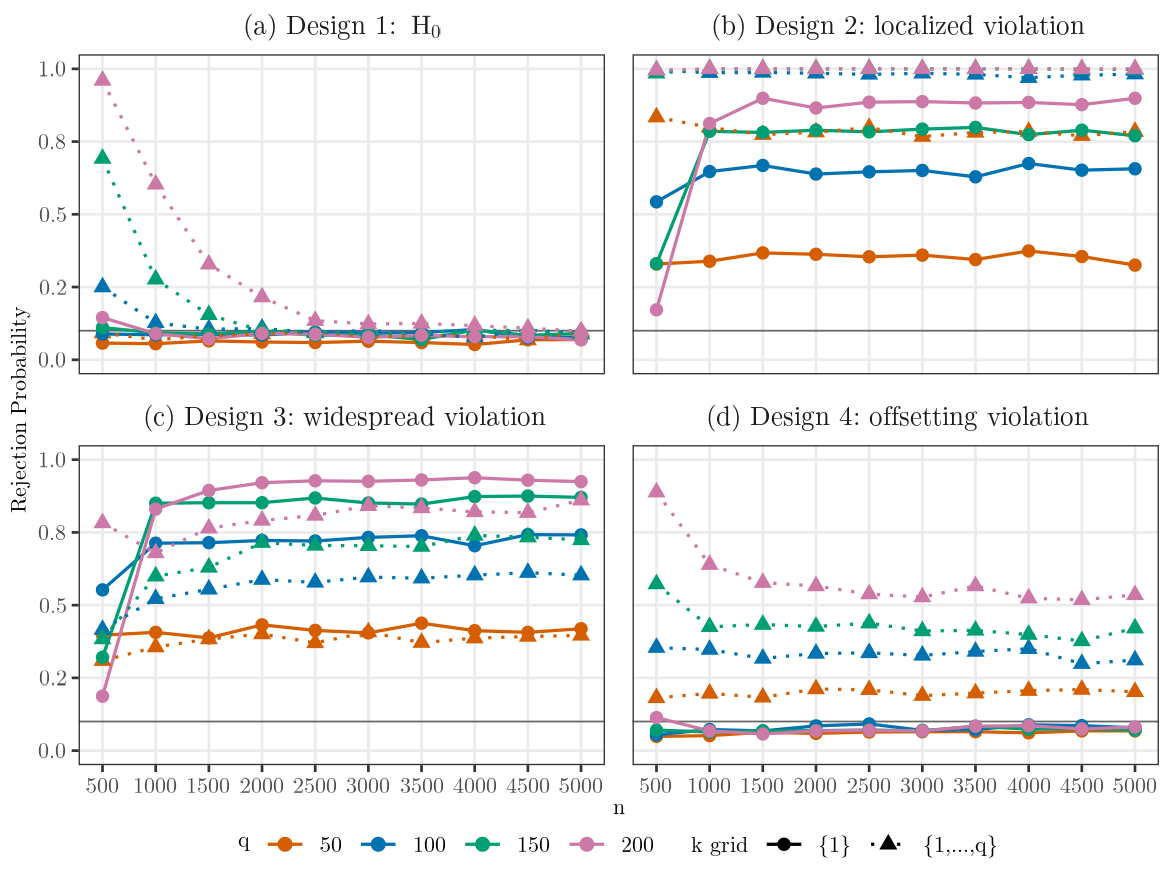}
  \caption{\small Empirical rejection probabilities for the non-randomized version of the test at nominal $\alpha=10\%$. The horizontal axis is sample size $n$; colors denote $q$; line type distinguishes $\mathcal K=\{1,2,\dots,q\}$ and $\mathcal{K}= \{1\}$.}
  \label{fig:mc-bdd-mt}
\end{figure}

\begin{figure}[htbp]
  \centering
    \paperfigfullnarrow{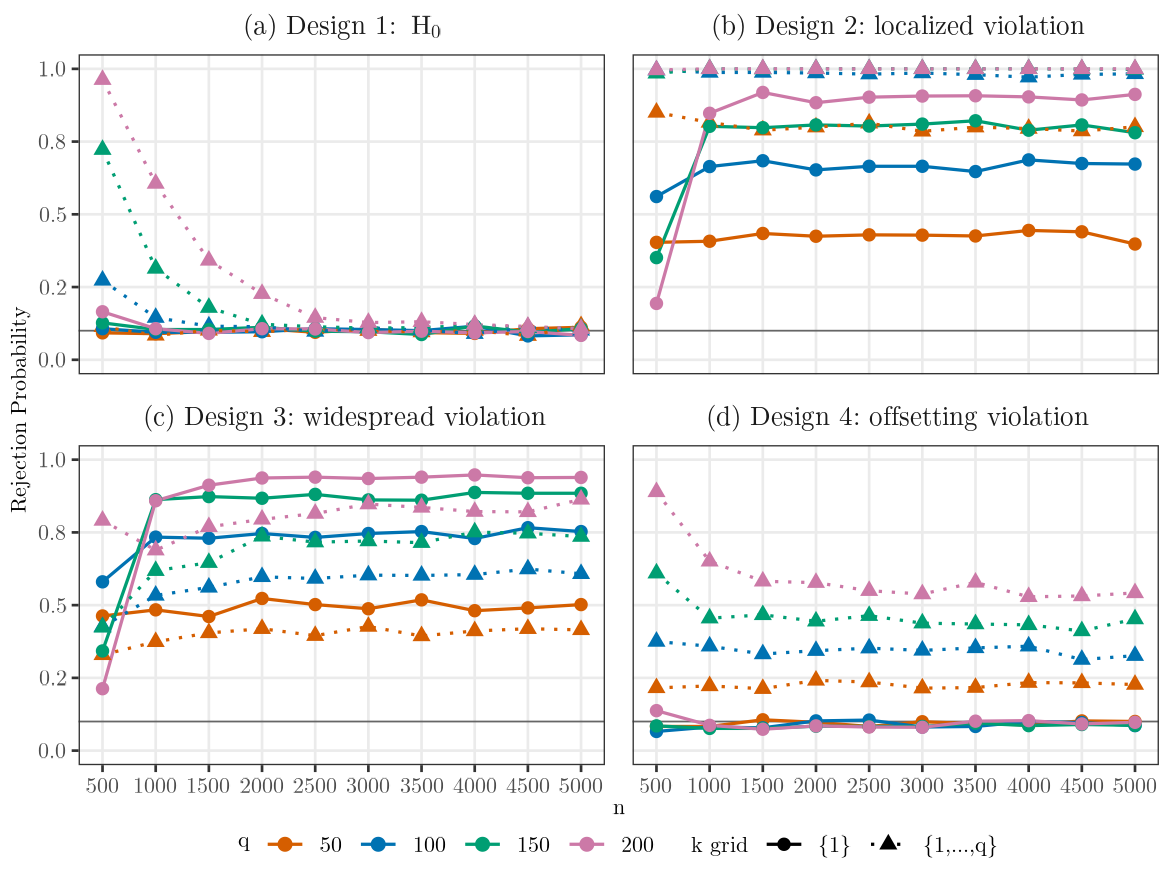}
  \caption{\small Empirical rejection probabilities for the randomized version of the test at nominal $\alpha=10\%$. The horizontal axis is sample size $n$; colors denote $q$; line type distinguishes $\mathcal K=\{1,2,\dots,q\}$ and $\mathcal{K}= \{1\}$.}
  \label{fig:mc-bdd-mt_r}
\end{figure}

We begin by describing the results under $H_0$ in panel (a). Under the joint density continuity null, the rejection frequencies approach the nominal level as the sample size increases. This pattern is consistent with the fixed-$q$ asymptotic approximation in Theorem \ref{thm:validity}. For each fixed value of $q$, as $n$ grows, the $q$ closest observations lie increasingly close to the boundary $\mathcal B$. As a result, the Bernoulli approximation for observations close to the boundary becomes more accurate, and the finite-sample size distortions disappear. Once the sample size is sufficiently large relative to $q$, size control is very good across all values of $q$ considered.

Panel (a) also shows that the asymptotic approximation breaks down when $q$ represents a large fraction of the sample size $n$. This is especially visible when $q=200$ and $n=500$, so that our test is implemented with $q/n=40\%$ of the sample. In this case, the selected observations are no longer sufficiently close to the boundary, and smooth variation in the baseline density can generate finite-sample imbalances along different portions of the boundary. The one-cluster version with $\mathcal K=\{1\}$ averages these imbalances over the entire selected sample, while our preferred implementation with $\mathcal K=\{1,2,\ldots,q\}$ can isolate the portions where they are strongest, such as near $(-1,0)$, $(0,-1)$, and $(0,0)$. Thus, the size distortions reflect a finite-sample failure of the approximation based on proximity to the boundary when $q$ is too large relative to $n$, rather than a failure of the test in the setting covered by the asymptotic theory.

Panel (b) considers localized manipulation. In this design, our proposed test exhibits very high power for all sample sizes. This highlights the benefit of the richer cluster-count grid $\mathcal K=\{1,2,\ldots,q\}$, which lets the test focus on partitions that isolate the localized manipulation. In contrast, the one-cluster version has relatively lower power. This is because, when all selected observations are pooled together, the excess mass generated by the contamination is averaged with observations projected onto unaffected portions of the boundary.

Panel (c) reports results for widespread manipulation, where the manipulation signal has the same direction along both rays of the boundary. Our test exhibits reasonably high rejection rates, while the one-cluster version performs slightly better. This is expected, because pooling observations into a single cluster reinforces the manipulation signal. Together with panel (b), these results illustrate the main power tradeoff introduced by clustering: more localized clusters improve sensitivity to localized manipulation, but they may sacrifice power against widespread manipulation aligned across the boundary.

Finally, panel (d) considers offsetting manipulation, where the manipulation signal has opposite signs along the two rays of the boundary. In this case, our preferred implementation exhibits good rejection rates, while the one-cluster version has no power. This is again expected. Our preferred implementation benefits from clustering because some partitions can separate the two boundary rays and detect the imbalance within each ray. In contrast, the one-cluster version pools the two rays together, effectively canceling the manipulation signals. This design illustrates the main benefit of clustering when manipulation is heterogeneous across the boundary.

We can summarize these findings as follows. First, under $H_0$, provided that the sample size is sufficiently large relative to $q$, the rejection rates are close to the nominal size and relatively stable across values of $q$. This is consistent with our asymptotic results. Second, the Monte Carlo simulations illustrate both the benefits and the costs of the clustering step in the proposed test. The benefit is that clustering makes the test sensitive to localized and offsetting manipulation along the boundary. The cost is that when the manipulation signal is common across the boundary in both magnitude and sign, clustering can dilute it by splitting it into smaller clusters. Given that our test achieves nontrivial rejection rates across all alternative designs, we view the benefits of clustering as outweighing its costs in these simulations. This supports using the fine cluster-count grid $\mathcal K=\{1,2,\ldots,q\}$ as our preferred implementation.

\section{Empirical applications}\label{sec:empirical}

We illustrate our proposed test with three applications that span several settings in which BDDs arise in practice. \citet{DaiEtAl2022Hypertension} studies a two-score design with an L-shaped boundary induced by two diagnostic thresholds, which motivated the running example in the previous sections. \citet{keele/titiunik:2015} considers a geographic RDD, where assignment status is determined by location relative to a known border. \citet{elacqua/hincapie/martinez:2024} analyzes an RDD in which assignment status depends on seven running variables and a nontrivial institutional rule. In each application, we test $H_0$ in \eqref{eq:H0}, namely, that the joint density of the vector running variable is continuous across the boundary at every boundary point.

We use the same implementation across the three applications. We focus on the non-randomized version of the test in Section~\ref{sec:our_test}, using our preferred cluster-count grid $\mathcal K=\{1,2,\ldots,q\}$. The sample sizes are $n=33{,}495$ for \citet{DaiEtAl2022Hypertension}, $n=7{,}523$ for \citet{keele/titiunik:2015}, and $n=2{,}039$ for \citet{elacqua/hincapie/martinez:2024}. For each integer $q\in\{20,21,\ldots,200\}$, we compute the simulated $p$-value by approximating the conditional Bernoulli distribution using $R=20{,}000$ independent Bernoulli relabeling draws. We start the grid at $q=20$ because, for very small values of $q$, the test is based on too few observations to be informative. We stop the grid at $q=200$ because, given the sample sizes in these applications, larger values of $q$ select observations that are relatively far from the boundary, making the approximation based on proximity to the boundary less appropriate. The purpose of presenting the $p$-value path is purely descriptive: it reports the results for all values of $q$ considered and makes their sensitivity to the choice of $q$ transparent. The paper does not claim simultaneous validity over $q$, and the path itself does not constitute an additional test.

\subsection{\cite{DaiEtAl2022Hypertension}} \label{sec:daiapp}

\citet{DaiEtAl2022Hypertension} studies whether being informed of a hypertension diagnosis leads individuals to change subsequent health behavior, such as smoking and dietary fat intake. The analysis uses data from the China Health and Nutrition Survey (CHNS), a longitudinal household survey that collects detailed health and nutrition measures for individuals in China.

The empirical design exploits the clinical definition of hypertension: an individual is classified as hypertensive when systolic blood pressure is at least 140 mmHg or diastolic blood pressure is at least 90 mmHg. Survey examiners measure blood pressure and inform participants of their results. Because classification depends on two thresholds, the design naturally forms a two-dimensional BDD with an L-shaped boundary. We construct the sample from the 1997--2006 survey waves, retaining respondents with three valid readings of each blood pressure measure who report no previous hypertension diagnosis. We use the average of the three systolic readings and the average of the three diastolic readings as the running variables and measure distances in raw mmHg.

Figure~\ref{fig:maps_closest_d} displays the observations in the treatment and control regions together with the diagnostic boundary. Individuals in the treatment region are shown in green and individuals in the control region in blue. The L-shaped boundary is defined by the systolic and diastolic blood-pressure thresholds. Panel (a) displays all the data, while panels (b) and (c) display the selected observations and the projected locations for $q=100$.

\begin{figure}[htbp]
\centering
\paperfigfullmap{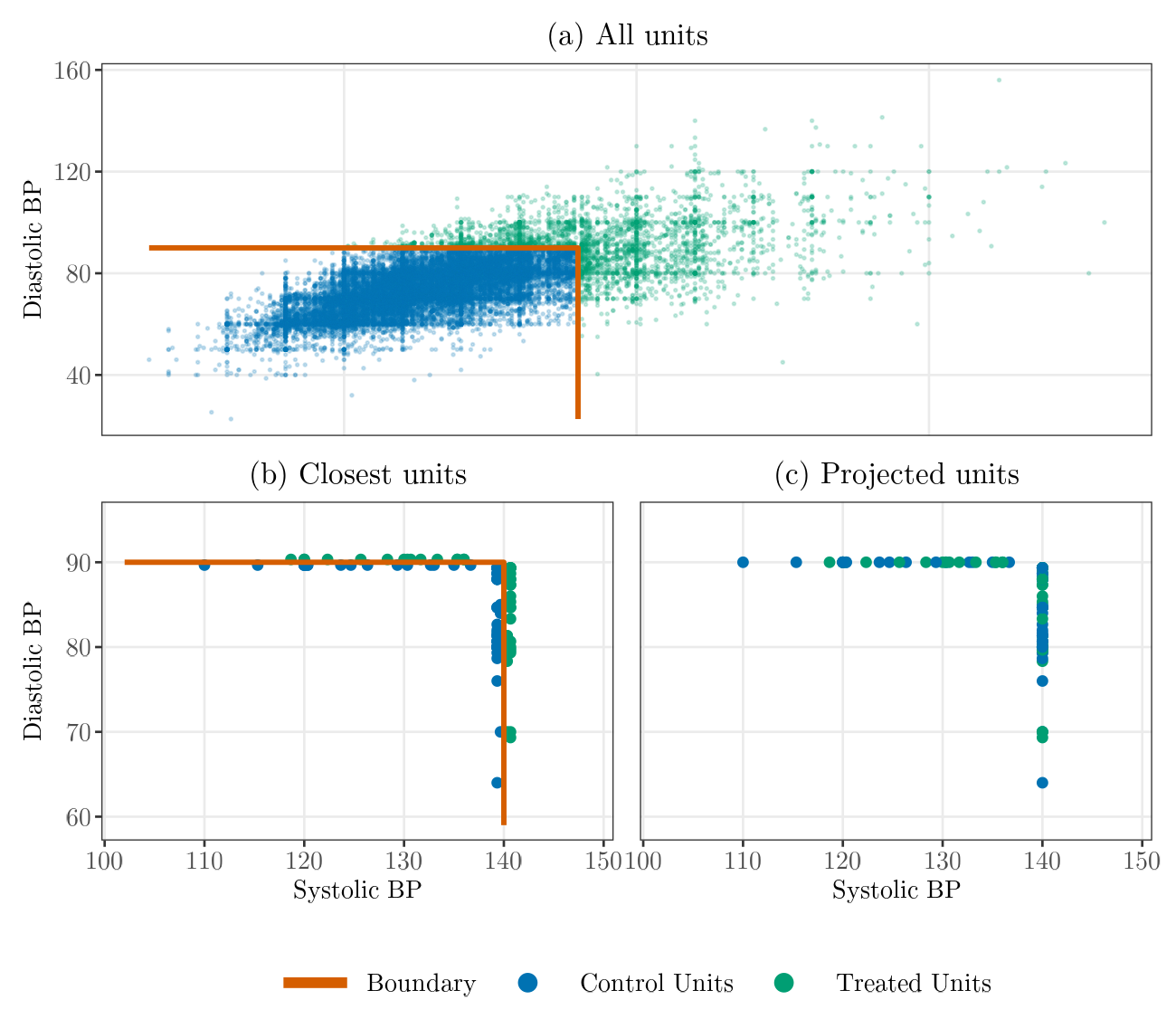}
\caption{Data from \cite{DaiEtAl2022Hypertension}. Individuals in the treatment region are shown in green, individuals in the control region in blue, and the boundary in orange. Panel (a) shows all observations. Panels (b) and (c) focus on the selected $q=100$ observations closest to the boundary: panel (b) displays these observations in the space of the running variables, while panel (c) displays the projected locations.}
\label{fig:maps_closest_d}
\end{figure}

Figure \ref{fig:dai} reports the empirical results of our test for this application. Panel (a) shows that the simulated $p$-values remain above conventional significance levels throughout the range of values of $q$ considered. The smallest $p$-value is approximately $0.29$, so the test does not reject $H_0$ for any $q\in\{20,21,\ldots,200\}$. Panel (b) shows the same stable pattern: the observed test statistic remains below the simulated $95\%$ critical value over the full grid.
Thus, the application provides no evidence against continuity of the joint density of systolic and diastolic blood pressure across the diagnostic boundary. This conclusion is stable as the selected sample expands to include observations farther from the boundary.

\begin{figure}[htbp]
\centering
\begin{subfigure}{\paperhalfwidth}
    \centering
    \paperfighalf{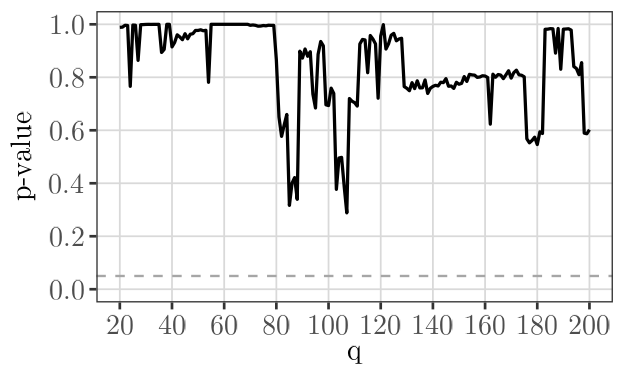}
    \caption{$p$-value as a function of $q$.}
    \label{fig:dai-pvalue}
\end{subfigure}
\hfill
\begin{subfigure}{\paperhalfwidth}
    \centering
    \paperfighalf{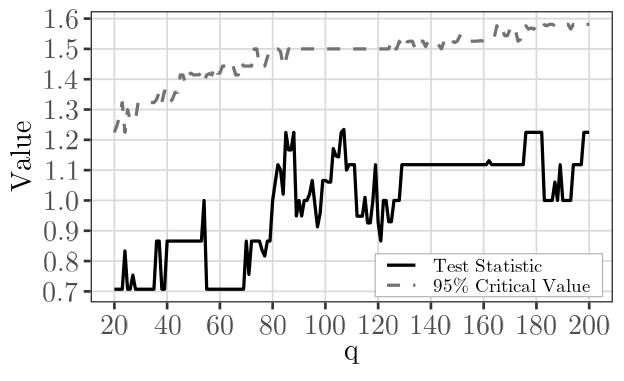}
    \caption{Test statistic and $95\%$ CV as a function of $q$.}
    \label{fig:dai-statistic}
\end{subfigure}
\caption{Results of our test applied to the data in \citet{DaiEtAl2022Hypertension}. Panel (a) reports the simulated $p$-value for each value of $q$. Panel (b) reports the corresponding value of the test statistic and $95\%$ critical value.}
\label{fig:dai}
\end{figure}

\subsection{\cite{keele/titiunik:2015}} \label{sec:geo_application}

\cite{keele/titiunik:2015} studies the effect of presidential campaign advertising on voter turnout using a geographic regression discontinuity design. The source of discontinuous treatment is the media-market boundary separating the Philadelphia and New York City television markets within the West Windsor-Plainsboro school district. In the 2008 election, voters in the Philadelphia media market were exposed to a large volume of presidential advertising, whereas voters in the New York City media market were not.

The running variables are geographic coordinates, and assignment status changes discretely at the media-market border. Individual-level voter-file data allow us to locate each voter relative to that boundary. In this setting, the joint density continuity null requires that the spatial density of voters be continuous across the border at every boundary point. A discontinuity would indicate that the voter distribution changes abruptly across the boundary, potentially undermining comparisons between units located immediately across it.

Figure \ref{fig:map_units_kt} displays the geographic location of observations in the treatment and control regions together with the boundary. Panel (a) displays all observations, while panels (b) and (c) display, respectively, the selected observations and their projected locations for $q=100$. The projected locations display the type of localized imbalance in side indicators that the proposed test is designed to detect.

\begin{figure}[htbp]
\centering
\paperfigfullmap{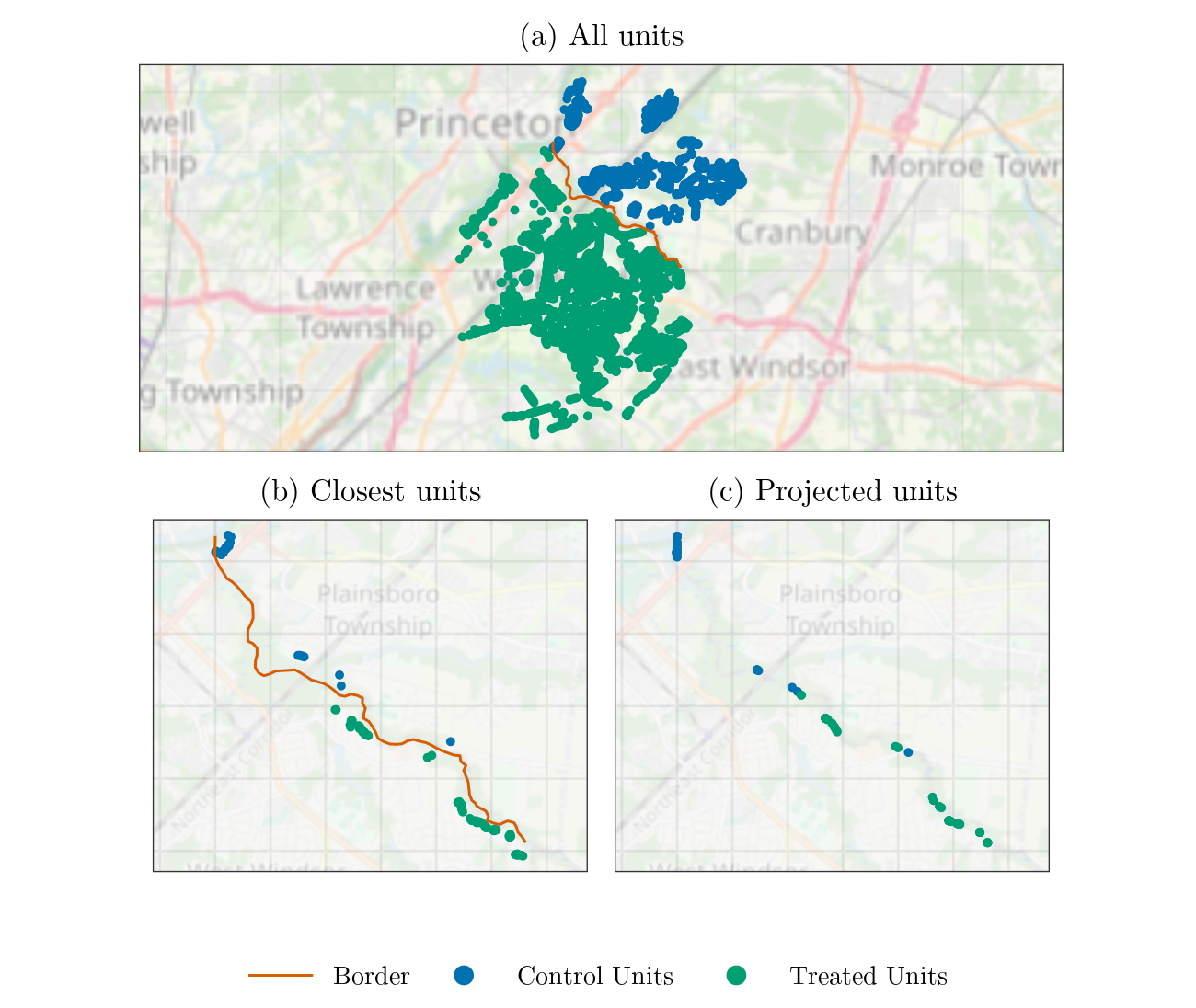}
\caption{Data from \cite{keele/titiunik:2015}. Observations in the treatment region are shown in green, observations in the control region in blue, and the boundary in orange. Panel (a) shows all observations. Panels (b) and (c) focus on the selected $q=100$ observations closest to the boundary: panel (b) displays these observations in the space of the running variables, while panel (c) displays the projected locations.}
\label{fig:map_units_kt}
\end{figure}

Figure \ref{fig:keele-titiunik} reports the empirical results of our test for this application. Panel (a) shows that the simulated $p$-values are very close to zero over most of the grid, so our test rejects $H_0$ throughout the range of values of $q$ considered. Panel (b) shows that the observed test statistic is well above the simulated $95\%$ critical value for all values of $q$, with the gap between the two curves widening over most of the range considered. Thus, the evidence against $H_0$ is not driven by a narrow choice of the tuning parameter. Instead, the results indicate a persistent imbalance in the side indicators of observations near the geographic boundary, providing strong evidence against $H_0$ in this application.

\begin{figure}[htbp]
\centering
\begin{subfigure}{\paperhalfwidth}
    \centering
    \paperfighalf{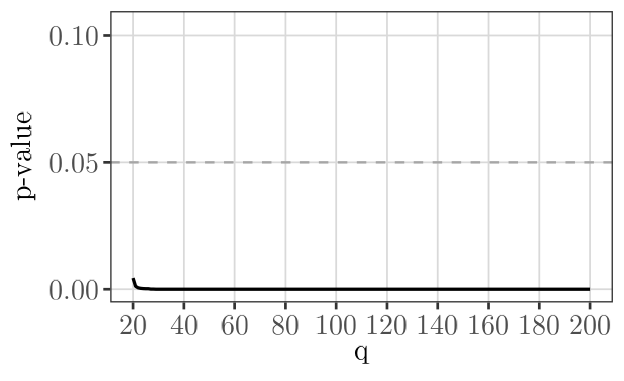}
    \caption{$p$-value as a function of $q$.}
\end{subfigure}
\hfill
\begin{subfigure}{\paperhalfwidth}
    \centering
    \paperfighalf{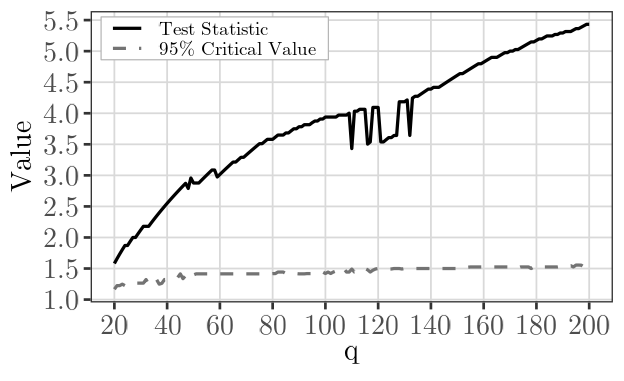}
    \caption{Test statistic and $95\%$ CV as a function of $q$.}
\end{subfigure}
\caption{Results of our test applied to the data in \cite{keele/titiunik:2015}. Panel (a) reports the simulated $p$-value for each value of $q$. Panel (b) reports the corresponding value of the test statistic and $95\%$ critical value.}
\label{fig:keele-titiunik}
\end{figure}

\subsection{\cite{elacqua/hincapie/martinez:2024}} \label{sec:elacquaapp}

\citet{elacqua/hincapie/martinez:2024} examines the effects of a school accountability policy on teacher mobility in Chile. The policy targets low-performing schools, and the paper uses the discontinuity induced by the assignment rule to estimate the policy effect. The main finding is that teachers in low-performing schools are more likely to leave, and movers' destinations differ across public and private schools.

The assignment mechanism is substantially more complex than in the previous two applications. It depends on SIMCE (\textit{Sistema de Medici\'{o}n de la Calidad de la Educaci\'{o}n}) scores and related school-quality measures. In a given year, a school is classified as low performing if it satisfies the participation requirements and, in at least two of the previous three evaluations, has both an average fourth-grade SIMCE below 220 and less than 20\% of students scoring above 250 on the average test score. Independently of these criteria, we also assign schools below the tenth percentile of the school quality index to treatment. The resulting boundary depends on $d=7$ running variables: three lagged SIMCE averages, three lagged shares of students scoring above 250, and the current school quality index. Before computing distances and projected locations, each running variable is standardized as $(Z_j-c_j)/\widehat{\sigma}_j$, where $c_j$ is the relevant institutional cutoff and $\widehat{\sigma}_j$ is the sample standard deviation\footnote{Following the original paper, we standardize the running variables using standard deviations computed from the full sample. This introduces dependence across standardized observations and therefore does not strictly preserve the i.i.d. structure assumed in our theoretical analysis. As a robustness check, we use sample splitting: we estimate the standard deviations from a subsample of 100 observations and use them to standardize the remaining observations, which constitute the sample used to implement the test and are i.i.d. conditional on the auxiliary subsample. The conclusions are unchanged.
}.

This application illustrates an important practical feature of our proposed test. Although the boundary is embedded in $\mathbb R^7$ and is challenging to visualize, implementing the test remains unchanged. The researcher only needs to compute distances to the boundary, project observations onto it, and apply the same conditional randomization procedure used in the lower-dimensional applications. At the same time, the high-dimensional nature of the assignment rule makes finite-sample proximity to the boundary more delicate, so robustness across values of $q$ is especially informative.

Figure \ref{fig:elaqua_et_al} reports the empirical results of our test for this application. The conclusion depends somewhat on the smallest values of $q$, but becomes clear as the selected sample expands to include observations farther from the boundary. For small values of $q$, the simulated $p$-values are mostly above conventional significance levels and fluctuate considerably. This is consistent with panel (b), where the observed statistic is close to, and often below, the simulated $95\%$ critical value. As $q$ increases, however, the observed statistic rises steadily and eventually exceeds the critical value by a large margin. Correspondingly, the $p$-values fall toward zero and remain very small for larger values of $q$, indicating an imbalance in the side indicators of observations near the boundary. Overall, the results provide strong evidence against $H_0$ for this application, although the evidence is less apparent when the test uses only the observations closest to the boundary.

\begin{figure}[htbp]
\centering
\begin{subfigure}{\paperhalfwidth}
    \centering
    \paperfighalf{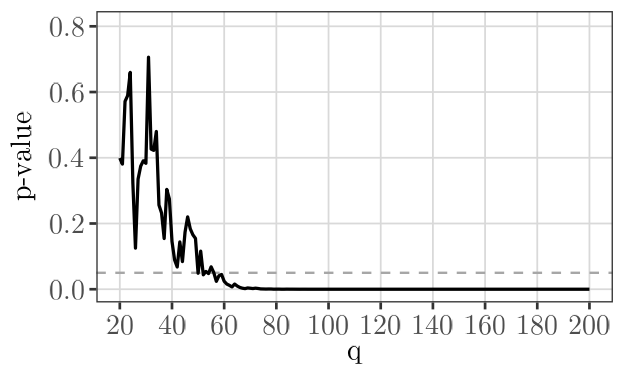}
    \caption{$p$-value as a function of $q$.}
\end{subfigure}
\hfill
\begin{subfigure}{\paperhalfwidth}
    \centering
    \paperfighalf{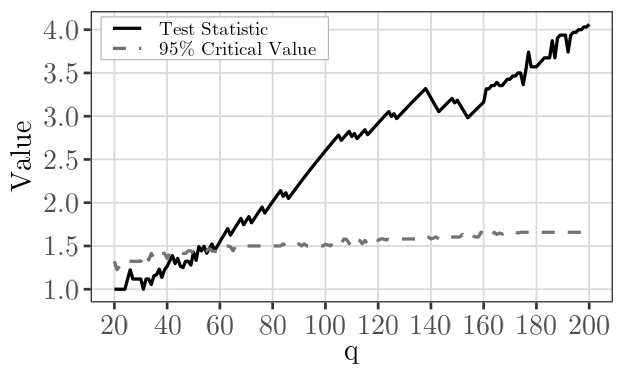}
    \caption{Test statistic and $95\%$ CV as a function of $q$.}
\end{subfigure}
\caption{Results of our test applied to the data in \cite{elacqua/hincapie/martinez:2024}. Panel (a) reports the simulated $p$-value for each value of $q$. Panel (b) reports the corresponding value of the test statistic and $95\%$ critical value.}
\label{fig:elaqua_et_al}
\end{figure}

\section{Conclusions and future research}\label{sec:conclusions}

This paper proposed the first manipulation test designed for BDDs with general boundaries. The motivation follows the logic of \citet{lee:2008} for one-dimensional RDDs: in the absence of manipulation, the density of the running variable should be continuous at the cutoff. In a BDD, however, assignment status is determined by whether a vector running variable crosses a general boundary, so the relevant null is the joint density continuity null across the boundary at every boundary point. Existing one-dimensional density tests and signed-distance approaches do not directly test this null without additional assumptions.

Our test avoids multivariate density estimation by exploiting a simple implication of density continuity. Observations sufficiently close to the boundary should be approximately balanced between the treatment and control regions, both overall and within localized regions along the boundary. To operationalize this idea, we select the $q$ observations closest to the boundary, project them onto the boundary, and form clusters based on their projected locations using $k$-means. We then test whether the side indicators are balanced within these clusters. The test statistic combines binomial balance statistics across different cluster-count choices, allowing it to detect widespread, localized, and offsetting manipulation. We compute the critical values using Bernoulli relabeling draws justified by our asymptotic results.

We establish the asymptotic validity of the procedure under regularity conditions that allow for general boundaries. For fixed $q$, under the joint density continuity null, the selected side indicators are asymptotically distributed as independent $\mathrm{Bernoulli}(1/2)$ random variables, conditional on the projected locations. This random-labeling property provides the basis for simulated critical values from the conditional Bernoulli distribution and yields asymptotic size control for the non-randomized version of the test. The asymptotic framework, therefore, justifies a simple implementation that preserves the geometry of the assignment rule while avoiding the curse of dimensionality associated with estimating a multivariate density.

The Monte Carlo simulations show that our proposed test performs well in finite samples. 
Under the joint density continuity null, our test exhibits good size control in our simulations when the selected observations are sufficiently close to the boundary. It also exhibits power against localized, widespread, and offsetting discontinuities across different parts of the boundary. We also apply the test to three empirical settings: a two-score BDD with an L-shaped boundary induced by two threshold rules, a geographic RDD, and a higher-dimensional design with seven running variables and a nontrivial institutional assignment rule. These applications illustrate that the test is easy to implement and can provide informative diagnostics in BDDs with different boundary geometries and different dimensions of the vector running variable.

Beyond the density continuity null studied here, empirical practice in one-dimensional RDDs typically pairs the manipulation test with tests of continuity of predetermined covariates at the cutoff, often called balance tests; see, for example, \citet{lee/lemieux:2010}. Extending such tests to BDDs to assess continuity of predetermined covariates across the boundary at every boundary point is a natural next step that we are currently developing. The geometric and random labeling tools developed in this paper, in particular selecting observations closest to the boundary, projecting them onto $\mathcal B$, and localizing along the boundary, provide a foundation for that work.

\appendices
\begin{small}

\section{Proofs} \label{sec:proofs}

This appendix uses the following notation. For any $x\in \mathbb{R} ^{d}$ and $r\geq 0$, $B( x,r) \equiv \{ z\in \mathbb{R} ^{d}:\Vert z-x\Vert < r \} $ denotes the open ball centered at $x$ and radius $r$. For any $t \in \{0,1\}$, we use $\{T=t\}\equiv \{ z\in \mathbb{R} ^{d}:T(z)=t\} $. For any set $A$, we use $bd(A)$ to denote the topological boundary of $A$, as in Section \ref{sec:setup}.
 
Given a collection of boundary points $\{b_i\}_{i=1}^q \in \mathcal{B}^q$ and a number of clusters $k \in \{1,2,\dots,q\}$, let
\begin{equation}
    \Upsilon(\{b_i\}_{i=1}^q,k) ~\in~  \Pi_k ~\subseteq~ \{1,2,\dots,k\}^q
    \label{eq:pi_function}
\end{equation}
denote the cluster-assignment vector returned by the implementation of the $k$-means algorithm described in Section \ref{sec:clustering}, including the ordering of the cluster labels and the resolution of ties. For example, if  $q=3$ and $k=2$, $\Upsilon(\{b_i\}_{i=1}^3,2)$ can only take values $(1,1,2)$, $(1,2,1)$, and $(2,1,1)$, and so $\Pi_2 = \{(1,1,2),(1,2,1),(2,1,1)\} \subseteq \{1,2\}^3$. More generally, the range of $\Upsilon(\{b_i\}_{i=1}^q,k)$ is a strict subset of $\{1,2,\dots,k\}^q$, which we denote by $\Pi_k$. 

Let $\mathcal K={k_1,k_2,\dots,k_J}$ denote the cluster-count grid used in the $k$-means algorithm in Section \ref{sec:clustering}. In the default implementation, $\mathcal K=\{1,2,\dots,q\}$ and $J=q$. Let $\Pi=\Pi_{k_1}\times \Pi_{k_2}\cdots \times \Pi_{k_J}$ denote the range of $\{\Upsilon(\{b_i\}_{i=1}^q,k_j)\}_{j=1}^J$. By definition, we have that $\tilde{\pi}=\{\Upsilon(\tilde B,k_j)\}_{j=1}^J$, where $\tilde B=\{B(\tilde Z_i)\}_{i=1}^q$.

\subsection{Proof of theorems}

For any $z \in \mathbb{R}^d$, let $g(z)$ denote the distance from $z$ to $\mathcal{B}$, i.e.,
\begin{equation*}
    g(z) ~\equiv~ \dist(z,\mathcal{B}) ~\equiv ~ \inf_{b \in \mathcal{B}} \Vert z-b \Vert.
\end{equation*}
Step 1 of our testing procedure selects the $q$ observations in the sample with the smallest values of $g(Z_i)$. The goal of Theorem \ref{thm:keyPart1} is to show that these selected observations behave asymptotically like $q$ i.i.d. draws from the boundary-local distribution $P^*$.

The proof uses an auxiliary sample to guarantee the uniqueness of the $g$-ordering. Under Assumption \ref{ass:assumption}(a), the original sample $\{Z_i\}_{i=1}^{n}$ is continuously distributed only in a $\delta$-neighborhood of the boundary, whereas the order-statistic result of \citet[Theorem 1]{kauffman/reiss:1992} is stated for continuously distributed observations. For this reason, for $i=1, \dots, n$, define
\begin{equation*}
    Z_{i}^{\mathrm{cts}} ~\equiv~ Z_{i} + v_{i} { \bf 1}\{\dist(Z_i,\mathcal{B}) > \delta \},
\end{equation*}
where the perturbations $v_i$ are i.i.d.\ with $v_i \sim \mathrm{Unif}(B({\bf 0}_{d},\delta/2))$, and are independent of the data. This implies that $Z_i^{\mathrm{cts}}$ is absolutely continuously distributed. Hence, observations already within distance $\delta$ of the boundary are left unchanged, while observations farther away are only perturbed slightly. Importantly, this change is only required within the proof of Theorem \ref{thm:keyPart1}; no changes to the data are required in practice.

For any $z \in \mathbb{R}^d$, let $V(z)= (z,B(z),T(z))$. As in \citet[page 67]{kauffman/reiss:1992}, define $V(z_1) \leq_{g} V(z_2)$ if and only if $g(z_1) \leq g(z_2)$. Rearrange $\{Z_{i}^{\mathrm{cts}}\}_{i=1}^{n}$ in terms of their $g$-order:
\begin{equation}
    \{V(Z_{i}^{\mathrm{cts}})\}_{\left[ 1:n\right] }~\leq _{g}~\{V(Z_{i}^{\mathrm{cts}})\}_{\left[ 2:n\right] }~\leq _{g}~\dots~\leq _{g}~\{V(Z_{i}^{\mathrm{cts}})\}_{\left[ n:n\right] }.
\label{eq:concomitants}
\end{equation}
Because $Z_i^{\mathrm{cts}}$ is absolutely continuously distributed and every level set of $g(z)=\dist(z,\mathcal B)$ has Lebesgue measure zero, $g(Z_i^{\mathrm{cts}})$ has an atomless distribution. Consequently, the values $\{g(Z_i^{\mathrm{cts}})\}_{i=1}^{n}$ are almost surely distinct, so this ordering is almost surely unique.

For a fixed $q \in \mathbb{N}$, Step 1 chooses the first $q$ concomitants $\{\{V(Z_{i}^{\mathrm{cts}})\}_{\left[ 1:n\right] },\{V(Z_{i}^{\mathrm{cts}})\}_{\left[ 2:n\right] },\dots,\{V(Z_{i}^{\mathrm{cts}})\}_{\left[ q:n\right] }\}$. Let $\{V(\tilde{Z}_{i}^{\mathrm{cts}})\}_{i=1}^{q}$ denote these concomitants, relabeled in original sample order.

\begin{proof}[Proof of Theorem \ref{thm:keyPart1}] We divide this proof into two parts. The first part proves convergence in distribution to an i.i.d.\ sample. The second part proves the desired properties of the (common) limiting distribution.

\noindent\underline{Part 1.} We show that the common limiting distribution is given by $P^*$ in \eqref{eq:limitP}, defined in Lemma \ref{lem:boundaryLocalLimit}.

For any $z \in \mathbb{R}^d$, let $V(z)= (z,B(z),T(z))$. Let $V_i^*=(Z_i^*,B_i^*,T_i^*)$ for $i=1,\ldots,q$. Let $S \subseteq (\mathbb{R}^d \times \mathbb{R}^d \times \{0,1\})^q$ be an arbitrary continuity set of $(P^*)^{\otimes q}$ and fix $\varepsilon>0$ arbitrarily. It suffices to show that for all sufficiently large $n$, 
\begin{equation}
 |P(\{V(\tilde Z_i)\}_{i=1}^q \in S)-P^*(\{V^*_i\}_{i=1}^q\in S)| ~<~ \varepsilon.  
  \label{eq:keyPart1_2}
\end{equation}

Let $\{V_{i,u}\}_{i=1}^{q}$ be i.i.d.\ with common law $P_u$. Since $P_u$ converges weakly to $P^*$ and $q$ is fixed, $P_u^{\otimes q}$ converges weakly to $(P^*)^{\otimes q}$. By Lemma \ref{lem:boundaryLocalLimit} and the fact that $S$ is a continuity set of $(P^*)^{\otimes q}$, there exists $\delta_1 \in (0, \delta/2)$ such that $P_u$ is well defined for every $u\in(0,\delta_1)$ and, for every $u \in (0,\delta_1)$, 
\begin{equation}
    |P(\{V_{i,u}\}_{i=1}^{q}\in S) - P^*(\{V^*_i\}_{i=1}^q\in S)| ~<~ \varepsilon /3. \label{eq:keyPart1_3}
\end{equation}

As an auxiliary step, we derive a multidimensional analog of \citet[Lemma B.3]{bugni/canay:2021}. Let $D^{\mathrm{cts}}_{q+1:n}$ denote the $(q+1)$th order statistic of $\{g(Z_i^{\mathrm{cts}})\}_{i=1}^{n}$, where $g(z)=\dist(z,\mathcal{B})$. Set $N_{n} = \sum_{i=1}^{n} { \bf 1}\{ \dist(Z_i , \mathcal{B} ) \leq \delta_1 \}$. Since $\{Z_i\}_{i=1}^n$ are i.i.d., $N_{n} \sim \mathrm{Bi}(n, P( \dist(Z_i , \mathcal{B} ) \leq \delta_1 ) )$. Because $P_{\delta_1}$ is well defined, $P( \dist(Z_i , \mathcal{B} ) \leq \delta_1 ) > 0$. Therefore, by the Strong Law of Large Numbers, ${N_n}/{n} \overset{a.s.}{\to} P( \dist(Z_i , \mathcal{B} ) \leq \delta_1 ) > 0$. Thus, $N_n \geq q+1$ for all $n$ large enough with probability 1. Since $\delta_1<\delta/2$, observations with $\dist(Z_i,\mathcal B)\leq\delta_1$ are unchanged by the auxiliary perturbation. Thus, if at least $q+1$ observations lie within distance $\delta_1$ of the boundary, then the $(q+1)$th closest observation in the auxiliary sample must also lie within distance $\delta_1$. Hence
\begin{equation} \label{eq:keyPart1_4}
    \lim_{n \to \infty} P( D^{\mathrm{cts}}_{q+1:n} > \delta_1 ) ~=~ 0.  
\end{equation}

Now decompose
\begin{equation*}
    P(\{V(\tilde Z_i)\}_{i=1}^q \in S) - P^*(\{V^*_i\}_{i=1}^q\in S)~=~ M_{1,n} + M_{2,n} + M_{3,n},
\end{equation*}
where 
\begin{align*}
    M_{1,n}& ~=~ \int_{0}^{\delta_1 } P(\{V(\tilde Z_i)\}_{i=1}^q \in S|D^{\mathrm{cts}}_{q+1:n}=u) dP_{D^{\mathrm{cts}}_{q+1:n}}(u) - P^*(\{V^*_i\}_{i=1}^q\in S) P(D^{\mathrm{cts}}_{q+1:n} \leq \delta_1), \\
    M_{2,n}& ~=~ \int_{\delta_1}^{\infty } P(\{V(\tilde Z_i)\}_{i=1}^q \in S|D^{\mathrm{cts}}_{q+1:n}=u) dP_{D^{\mathrm{cts}}_{q+1:n}}(u), \\
    M_{3,n}& ~=~ -P^*(\{V^*_i\}_{i=1}^q\in S) P(D^{\mathrm{cts}}_{q+1:n} > \delta_1 ).
\end{align*}

Next, note that 
\begin{equation}
    |M_{2,n}| ~\leq~ P(D^{\mathrm{cts}}_{q+1:n} > \delta_1 )~~~~~~~\text{and}~~~~~~~ 
    |M_{3,n}| ~\leq~ P(D^{\mathrm{cts}}_{q+1:n} > \delta_1 ). \label{eq:keyPart1_6}
\end{equation}
By \eqref{eq:keyPart1_4} and \eqref{eq:keyPart1_6}, for all $n$ large enough, we have 
\begin{equation}
|M_{2,n}| < \varepsilon/3 ~~~~~~~\text{and}~~~~~~~ |M_{3,n}| < \varepsilon /3.  \label{eq:keyPart1_8}
\end{equation}
Also, 
\begin{align}
|M_{1,n}|& ~=~\Big| \int_{0}^{\delta_1 }P(\{V(\tilde Z_i)\}_{i=1}^q \in S|D^{\mathrm{cts}}_{q+1:n}=u) dP_{D^{\mathrm{cts}}_{q+1:n}}(u) - P^*(\{V^*_i\}_{i=1}^q\in S)P(D^{\mathrm{cts}}_{q+1:n}\leq \delta_1 ) \Big|  \notag\\
& ~\overset{(1)}{=}~\Big|\int_{0}^{\delta_1 }P(\{V(\tilde Z_i^{\mathrm{cts}})\}_{i=1}^q \in S|D^{\mathrm{cts}}_{q+1:n}=u) dP_{D^{\mathrm{cts}}_{q+1:n}}(u) - P^*(\{V^*_i\}_{i=1}^q\in S)P(D^{\mathrm{cts}}_{q+1:n}\leq \delta_1 )\Big|  \notag\\
&~ \overset{(2)}{=}~\Big|\int_{0}^{\delta_1 }P(\{V_{i,u}\}_{i=1}^{q}\in S)dP_{D^{\mathrm{cts}}_{q+1:n}}(u)-P^*(\{V^*_{i}\}_{i=1}^{q}\in S)P(D^{\mathrm{cts}}_{q+1:n}\leq \delta_1 )\Big| \notag\\
& ~\leq ~\int_{0}^{\delta_1 }\left|P(\{V_{i,u}\}_{i=1}^{q}\in S)-P^*(\{V^*_{i}\}_{i=1}^{q}\in S)\right|dP_{D^{\mathrm{cts}}_{q+1:n}}(u)  \notag \\
& ~\overset{(3)}{<}~\varepsilon /3,  \label{eq:keyPart1_9}
\end{align}
where (1) holds because, on $\{D^{\mathrm{cts}}_{q+1:n}\leq \delta_1 <\delta /2\}$, the selected observations from the auxiliary sample coincide with the selected observations from the original sample, i.e., $\{\tilde{Z}_{i}^{\mathrm{cts}}\}_{i=1}^q=\{\tilde{Z}_{i}\}_{i=1}^q$, (2) holds by \citet[Theorem 1, equation (6)]{kauffman/reiss:1992}, which shows that $\{\{V(\tilde Z_i^{\mathrm{cts}})\}_{i=1}^{q}|D^{\mathrm{cts}}_{q+1:n}=u\}$ is i.i.d.\ with distribution $P_{u}$, and (3) by \eqref{eq:keyPart1_3}. Therefore, for all $n$ large enough, \eqref{eq:keyPart1_8} and \eqref{eq:keyPart1_9} imply that \eqref{eq:keyPart1_2} holds. Since $S$ was an arbitrary continuity set of $(P^*)^{\otimes q}$, the desired weak convergence follows.

\underline{Part 2.} We now prove the properties of the limiting distribution. Let $A\subseteq\mathbb R^d\times\mathbb R^d$ be Borel-measurable, and let $t\in\{0,1\}$. By \eqref{eq:limitP},
\begin{equation*}
P^*(\{(Z_i^*,B_i^*)\in A\} \cap \{T_i^*=t\})
~=~
\frac{
\int_{\mathcal B}\mathbf 1\{(b,b)\in A\}f_Z(b)d\mathcal H^{d-1}(b)
}{
2\int_{\mathcal B}f_Z(b)d\mathcal H^{d-1}(b)
}.
\end{equation*}
By computing marginals from this expression, we obtain
\begin{align*}
P^*((Z_i^*,B_i^*)\in A)
~&=~
\frac{
\int_{\mathcal B}\mathbf 1\{(b,b)\in A\}f_Z(b)d\mathcal H^{d-1}(b)
}{
\int_{\mathcal B}f_Z(b)d\mathcal H^{d-1}(b)
},\\
P^*(T_i^*=t)~&=~1/2.
\end{align*}
Hence, $P^*(\{(Z_i^*,B_i^*)\in A\} \cap \{T_i^*=t\})=
P^*((Z_i^*,B_i^*)\in A)P^*(T_i^*=t)$, i.e.,  $T_i^*$ is independent of $(Z_i^*,B_i^*)$, and $T_i^*\sim \mathrm{Bernoulli}(1/2)$. This proves parts (a) and (b). Also, \eqref{eq:limitP} assigns probability one to triples of the form $(b,b,t)$ with $b\in\mathcal B$ and $t\in\{0,1\}$. Therefore, $P^*(Z_i^*=B_i^*)=1$, which proves part (c).
\end{proof}

\begin{proof}[Proof of Theorem \ref{thm:clustering}]
Let $\mathcal{K}=\{k_{1},k_{2},\dots ,k_{J}\}$ denote the collection of $k$ parameters used in the k-means algorithm. (In the default implementation, $\mathcal{K}=\{1,2,\dots ,q\}$ and $J=q$.) For any $b=\{b_{i}\}_{i=1}^{q}\in \mathcal{B}^{q}$ and $k\in \{1,2,\dots ,q\}$, let $\Upsilon (b,k)\in \Pi _{k}$ be as in \eqref{eq:pi_function}, and recall that $\tilde{\pi}=\{\Upsilon (\tilde{B},k_{j})\}_{j=1}^{J}$, where $\tilde{B}=\{\tilde{B}_{i}\}_{i=1}^{q}$. 

For each $\pi \in \Pi $, define 
\begin{equation*}
E(\pi )~\equiv ~\left\{ b\in \mathcal{B}^{q}:\{\Upsilon (b,k_{j})\}_{j=1}^{J}=\pi \right\} .
\end{equation*}
By construction, $\{\tilde{\pi}=\pi \}=\{\tilde{B}\in E(\pi )\}$. Because the k-means procedure breaks ties measurably, $\{\Upsilon (b,k_{j})\}_{j=1}^{J}$ is measurable and hence $E(\pi )\in Bo(\mathcal{B} ^{q})$.

Fix $\varepsilon >0$ arbitrarily. It suffices to show that
\begin{equation}
\lim_{n\rightarrow \infty }P\Big( \sup_{t\in \{0,1\}^{q}}\Big\vert P( \tilde{T}=t\mid \tilde{\pi})-2^{-q}\Big\vert >\varepsilon \Big) ~=~0.
\label{eq:clustering_2}
\end{equation}
Note that
\begin{align}
P\Big( \sup_{t\in \{0,1\}^{q}}\Big\vert P( \tilde{T}=t\mid \tilde{\pi})-2^{-q}\Big\vert >\varepsilon \Big)
~&\overset{(1)}{\leq }~ E\Big[ \sup_{t\in \{0,1\}^{q}}\Big\vert P(\tilde{T} =t\mid \tilde{\pi})-2^{-q}\Big\vert \Big]/\varepsilon \notag \\
&=~\sum_{\pi \in \Pi }\sup_{t\in \{0,1\}^{q}}\vert P(\tilde{T}=t,\tilde{\pi}=\pi )-2^{-q}P( \tilde{\pi}=\pi ) \vert/\varepsilon \notag \\
&\overset{(2)}{=}~\sum_{\pi \in \Pi }\sup_{t\in \{0,1\}^{q}}\vert P(\tilde{T}=t,\tilde{B}\in E(\pi ))-2^{-q}P( \tilde{B}\in E(\pi )) \vert/\varepsilon \notag \\
&\overset{(3)}{\leq}~ \sum_{\pi \in \Pi }\sup_{S\in Bo(\mathcal{B} ^{q}),t\in \{0,1\}^{q}}\vert P(\tilde{T}=t,\tilde{B}\in S)-2^{-q}P( \tilde{B}\in S) \vert /\varepsilon\label{eq:clustering_4}
\end{align}
where (1) holds by Markov's inequality, (2) by definition of $E(\pi )$, and (3) by $E(\pi )\in Bo(\mathcal{B}^{q})$. By \eqref{eq:clustering_4} and the fact that $\Pi $ is finite, \eqref{eq:clustering_2} follows from showing that
\begin{equation}
\lim_{n\rightarrow \infty }\sup_{S\in Bo(\mathcal{B}^{q}),t\in \{0,1\}^{q}}\vert P(\tilde{T}=t,\tilde{B}\in S)-2^{-q}P( \tilde{B}\in S) \vert ~=~0.
\end{equation}
Fix $\mu >0$ arbitrarily. It suffices to find $N=N( \mu ) $ such that for all $n\geq N$,
\begin{equation} 
\sup_{S\in Bo(\mathcal{B}^{q}),t\in \{0,1\}^{q}}\vert P(\tilde{T}=t,\tilde{B}\in S)-2^{-q}P( \tilde{B}\in S) \vert~ \leq~ \mu .
\label{eq:clustering_5}
\end{equation}
We show this in the remainder of this proof.

Let $P_{u}$, $\delta _{1}\in (0,\delta /2)$, $D_{q+1:n}^{\mathrm{cts}}$, $\{Z_{i}^{\mathrm{cts}}\}_{i=1}^{n}$, $\{V(Z_{i}^{\mathrm{cts}})\}_{i=1}^{n}$, $\{g(Z_{i}^{\mathrm{cts}})\}_{i=1}^{n}$, $\{V(\tilde{Z}_{j}^{\mathrm{cts} })\}_{j=1}^{q}$ be as defined in the proof of Theorem \ref{thm:keyPart1}. We denote $\tilde{Z}_{j}^{\mathrm{cts}}=V_{1}(\tilde{Z}_{j}^{\mathrm{cts}})$, $ \tilde{B}_{j}^{\mathrm{cts}}=V_{2}(\tilde{Z}_{j}^{\mathrm{cts}})$ and $ \tilde{T}_{j}^{\mathrm{cts}}=V_{3}(\tilde{Z}_{j}^{\mathrm{cts}})$. Recall that $\{D_{q+1:n}^{\mathrm{cts}}\leq \delta _{1}<\delta /2\}$ implies that $ \{V(\tilde{Z}_{j}^{\mathrm{cts}})\}_{j=1}^{q}=\{V(\tilde{Z}_{j})\}_{j=1}^{q}$ and, conditional on $D_{q+1:n}^{\mathrm{cts}}=u$, $\{V(\tilde{Z}_{j}^{ \mathrm{cts}})\}_{j=1}^{q}$ are i.i.d.\ with common distribution $P_{u}$ for all sufficiently small $u>0$.  Therefore, $\{D_{q+1:n}^{\mathrm{cts}}\leq \delta _{1}<\delta /2\}$ implies that $\{(\tilde{B}_{j}^{\mathrm{cts}}, \tilde{T}_{j}^{\mathrm{cts}})\}_{j=1}^{q}=\{(\tilde{B}_{j},\tilde{T} _{j})\}_{j=1}^{q}$ and, conditional on $D_{q+1:n}^{\mathrm{cts}}=u$, $\{( \tilde{B}_{j}^{\mathrm{cts}},\tilde{T}_{j}^{\mathrm{cts}})\}_{j=1}^{q}$ are i.i.d.\ with common distribution $Q_{u}$ for every sufficiently small $u>0$, and $Q_{u,B}$ to denote its marginal distribution over the $B_i$ component. 

We begin with some preliminary results. For any $u>0$, let 
\begin{equation*}
\delta (u)~\equiv ~\underset{{ S\in Bo(\mathcal{B}), a\in \{0,1\}}}{\sup} \big\vert P( B(Z_{i})\in S,T(Z_{i})=a\mid \dist(Z_{i},\mathcal{B})\leq u) -\frac{1}{2}P( B(Z_{i})\in S\mid \dist(Z_{i},\mathcal{B})\leq u) \big\vert ,
\end{equation*}
where conditional probabilities are well defined by Lemma \ref{lem:BoundaryHasProb}. Note that
\begin{equation}
\lim_{u\downarrow 0}\frac{P(\dist(Z_{i},\mathcal{B})\leq u)}{u}~\overset{(1)} {=}~2\int_{\mathcal{B}}f_{Z}(b)d\mathcal{H}^{d-1}(b)~\overset{(2)}{>}~0,
\label{eq:clustering_6}
\end{equation}
where (1) follows from Lemma \ref{lem:LimitDen} and (2) follows from \eqref{eq:boundaryLocalLimit_2} in Lemma \ref{lem:boundaryLocalLimit}. Also, for every $S\in Bo(\mathcal{B})$ and $a\in \{0,1\}$, 
\begin{align}
& \vert P( B(Z_{i})\in S,T(Z_{i})=a\mid \dist(Z_{i},\mathcal{B} )\leq u) -\frac{1}{2}P( B(Z_{i})\in S\mid \dist(Z_{i},\mathcal{B} )\leq u) \vert  \notag \\
& =~\frac{\left\vert 
\begin{array}{c}
P( B(Z_{i})\in S,T(Z_{i})=a,\dist(Z_{i},\mathcal{B})\leq u) /u-\int_{S}f_{Z}(b)d\mathcal{H}^{d-1}(b) \\ 
-P( B(Z_{i})\in S,\dist(Z_{i},\mathcal{B})\leq u) /(2u)+\int_{S}f_{Z}(b)d\mathcal{H}^{d-1}(b)
\end{array}
\right\vert }{P(\dist(Z_{i},\mathcal{B})\leq u)/u}. \label{eq:clustering_8}
\end{align}
By \eqref{eq:clustering_6}, \eqref{eq:clustering_8}, and the uniform conclusion of Lemma \ref{lem:LimitDen}, we deduce that $\lim_{u\downarrow0}\delta(u)=0$. Therefore, $\exists \delta _{2}\in (0,\delta _{1})$ so that, for all $u\in (0,\delta _{2})$,
\begin{equation}
\delta (u)~\leq~ \mu /(4q). \label{eq:clustering_10}
\end{equation}
By repeating the arguments leading to \eqref{eq:keyPart1_4} with $\delta_2$ in place of $\delta_1$, we can find $N=N(\mu)$ such that, for all $n\geq N$,
\begin{equation}
P(D_{q+1:n}^{\mathrm{cts}}>\delta _{2})~\leq~ \mu /4. \label{eq:clustering_12}
\end{equation}

For any $u\in (0,\delta _{2})$, we then have
\begin{align}
&\Vert Q_{u}^{\otimes q}-\left\{ Q_{u,B}\otimes \mathrm{Bernoulli} (1/2)\right\} ^{\otimes q}\Vert _{\mathrm{TV}} \notag \\
&\overset{(1)}{\leq } ~q\left\Vert Q_{u}-Q_{u,B}\otimes \mathrm{Bernoulli}(1/2)\right\Vert _{ \mathrm{TV}} \notag \\
&\overset{(2)}{=}~q\sup_{A\in Bo(\mathcal{B}\times \{0,1\})}\left\vert Q_{u}(A)-\{Q_{u,B}\otimes \mathrm{Bernoulli}(1/2)\}(A)\right\vert   \notag \\
&\overset{(3)}{=}~q\sup_{S_{0},S_{1}\in Bo(\mathcal{B})}\left\vert
Q_{u}\left( (S_{0}\times \{0\})\cup (S_{1}\times \{1\})\right) -Q_{u,B}(S_{0})/2-Q_{u,B}(S_{1})/2\right\vert   \notag \\
&\overset{(4)}{=}~q\sup_{S_{0},S_{1}\in Bo(\mathcal{B})}\left\vert Q_{u}(S_{0}\times \{0\})+Q_{u}(S_{1}\times \{1\})-Q_{u,B}(S_{0})/2-Q_{u,B}(S_{1})/2\right\vert   \notag \\
&\leq ~q\sup_{S_{0},S_{1}\in Bo(\mathcal{B})}\left\{ \left\vert Q_{u}(S_{0}\times \{0\})-Q_{u,B}(S_{0})/2\right\vert +\left\vert Q_{u}(S_{1}\times \{1\})-Q_{u,B}(S_{1})/2\right\vert \right\}  \notag \\
&\overset{(5)}{\leq }~\mu /2,
\label{eq:clustering_14}
\end{align}
where (1) holds by the tensorization inequality for total variation, (2) by the definition of total variation, (3) by the fact that every $A\in Bo( \mathcal{B}\times \{0,1\})$ can be written as $A=(S_{0}\times \{0\})\cup (S_{1}\times \{1\})$ for some $S_{0},S_{1}\in Bo(\mathcal{B})$, (4) by $ (S_{0}\times \{0\})\cap (S_{1}\times \{1\})=\emptyset $, and (5) by \eqref{eq:clustering_10}.

Next, for any $S\in Bo(\mathcal{B}^{q})$ and $t\in \{0,1\}^{q}$, we have
\begin{align}
&\vert P(\tilde{B}\in S,\tilde{T}=t)-2^{-q}P\left( \tilde{B}\in S\right) \vert  \notag \\
&=~\left\vert 
\begin{array}{c}
\int_{0}^{\delta _{2}}P(\tilde{B}\in S,\tilde{T}=t\mid D_{q+1:n}^{\mathrm{cts }}=u)dP_{D_{q+1:n}^{\mathrm{cts}}}(u)-2^{-q}\int_{0}^{\delta _{2}}P(\tilde{B} \in S\mid D_{q+1:n}^{\mathrm{cts}}=u)dP_{D_{q+1:n}^{\mathrm{cts}}}(u) \\ 
+\int_{\delta _{2}}^{\infty }P(\tilde{B}\in S,\tilde{T}=t\mid D_{q+1:n}^{ \mathrm{cts}}=u)dP_{D_{q+1:n}^{\mathrm{cts}}}(u)-2^{-q}\int_{\delta _{2}}^{\infty }P(\tilde{B}\in S\mid D_{q+1:n}^{\mathrm{cts}
}=u)dP_{D_{q+1:n}^{\mathrm{cts}}}(u)
\end{array}
\right\vert   \notag \\
&\leq~
\Big\vert \int_{0}^{\delta _{2}}\left( P(\tilde{B}\in S,\tilde{T}=t\mid D_{q+1:n}^{\mathrm{cts}}=u)-2^{-q}P(\tilde{B}\in S\mid D_{q+1:n}^{\mathrm{cts }}=u)\right) dP_{D_{q+1:n}^{\mathrm{cts}}}(u)\Big\vert  
+2P(D_{q+1:n}^{\mathrm{cts}}>\delta _{2}) \notag \\
&\overset{(1)}{=}~\left\{ 
\begin{array}{c}
\vert \int_{0}^{\delta _{2}}( P(\tilde{B}^{\mathrm{cts}}\in S, \tilde{T}^{\mathrm{cts}}=t\mid D_{q+1:n}^{\mathrm{cts}}=u)-2^{-q}P(\tilde{B} ^{\mathrm{cts}}\in S\mid D_{q+1:n}^{\mathrm{cts}}=u)) dP_{D_{q+1:n}^{
\mathrm{cts}}}(u)\vert  \\ 
+2P(D_{q+1:n}^{\mathrm{cts}}>\delta _{2})
\end{array}
\right\}   \notag \\
&\overset{(2)}{=}~
\big\vert \int_{0}^{\delta _{2}}( Q_{u}^{\otimes q}(S\times \{t\})-\left\{ Q_{u,B}\otimes \mathrm{Bernoulli}(1/2)\right\} ^{\otimes q}(S\times \{t\})) dP_{D_{q+1:n}^{\mathrm{cts}}}(u)\big\vert +2P(D_{q+1:n}^{\mathrm{cts}}>\delta _{2})
  \notag \\
& \overset{(3)}{\leq }~\mu /2+2P(D_{q+1:n}^{\mathrm{cts}} >\delta _{2}),
\label{eq:clustering_16}
\end{align}
where (1) holds by $\delta _{2}\leq \delta _{1}$ and $\{D_{q+1:n}^{\mathrm{ cts}}\leq \delta _{1}<\delta /2\}$ implies that $\{(\tilde{B}_{j}^{\mathrm{ cts}},\tilde{T}_{j}^{\mathrm{cts}})\}_{j=1}^{q}=\{(\tilde{B}_{j},\tilde{T}_{j})\}_{j=1}^{q}$, (2) by the fact that, conditional on $D_{q+1:n}^{\mathrm{ cts}}=u$, $\{(\tilde B_j^{\mathrm{cts}},\tilde T_j^{\mathrm{cts}})\}_{j=1}^q$ are i.i.d.\ with common distribution $Q_u$, and (3) by $\delta _{1}\geq \delta _{2}$ and \eqref{eq:clustering_14}. 

Taking the supremum in \eqref{eq:clustering_16} and considering $n\geq N( \mu ) $, we obtain
\begin{equation}
\sup_{S\in Bo(\mathcal{B}^{q}),t\in \{0,1\}^{q}}\vert P(\tilde{T}=t,\tilde{B}\in S)-2^{-q}P( \tilde{B}\in S) \vert ~\overset{(1)}{\leq }~\mu /2+2P(D_{q+1:n}^{\mathrm{cts}}>\delta _{2})~\overset{(2)}{\leq }~\mu ,
\label{eq:clustering_18}
\end{equation}
where (1) holds by \eqref{eq:clustering_16} and (2) by \eqref{eq:clustering_12} and $n\geq N( \mu ) $. This proves \eqref{eq:clustering_5} and completes the proof.
\end{proof}

\begin{proof}[Proof of Theorem \ref{thm:validity}]
Let $\mathcal{K}=\{k_{1},k_{2},\dots,k_{J}\}$ denote the collection of $k$ parameters used in the k-means algorithm. (In the default implementation of the algorithm, $\mathcal{K}=\{1,2,\dots,q\}$ and $J=q$.) For any $\{b_{i}\}_{i=1}^{q}\in\mathcal{B}^{q}$ and $k\in\{1,2,\dots,q\}$, let $\Upsilon(\{b_{i}\}_{i=1}^{q},k)\in\Pi_{k}$ be the function defined in \eqref{eq:pi_function}. As explained after introducing \eqref{eq:pi_function}, $\tilde\pi=\{\Upsilon(\tilde B,k_{j})\}_{j=1}^{J}\in\Pi$, where $\tilde B=\{B(\tilde Z_{i})\}_{i=1}^{q}$.

For any $\pi\in\Pi$, define
\begin{equation}
\Delta_{n}(\pi)~\equiv~
\sup_{t\in\{0,1\}^{q}}
|P(\tilde T=t\mid\tilde\pi=\pi)-2^{-q}|.
\label{eq:validity_0}
\end{equation}
By Theorem \ref{thm:clustering},
\begin{equation}
\Delta_{n}(\tilde\pi)~\overset{p}{\to}~0.
\label{eq:validity_0a}
\end{equation}
Because $0\leq\Delta_{n}(\tilde\pi)\leq1$, \eqref{eq:validity_0a} implies that
\begin{equation}
E[\Delta_{n}(\tilde\pi)]~\to~0.
\label{eq:validity_0b}
\end{equation}

We first prove the result for the randomized version. For any $\pi\in\Pi$ and $u\in[0,1]$, define
\begin{equation}
\Psi_{\rm r}(\pi,u)~\equiv~
\{t\in\{0,1\}^{q}:S(t,\pi)>c(\alpha,\pi)\}
\cup
\{t\in\{0,1\}^{q}:S(t,\pi)=c(\alpha,\pi),\,u\leq u(\alpha,\pi)\}.
\label{eq:validity_1}
\end{equation}
By definition, the randomized test rejects $H_{0}$ if and only if $\tilde T\in\Psi_{\rm r}(\tilde\pi,U)$, where $U\sim\mathrm{Unif}(0,1)$ is independent of the data.

For any $\pi\in\Pi$, we have
\begin{align}
\int_{0}^{1}\sum_{t\in\{0,1\}^{q}}
{\bf 1}\{t\in\Psi_{\rm r}(\pi,u)\}\,du\,2^{-q}
~\overset{(1)}{=}~
P(S(T^{\ast},\pi)>c(\alpha,\pi)) +
P(S(T^{\ast},\pi)=c(\alpha,\pi))u(\alpha,\pi) ~\overset{(2)}{=}~\alpha,
\label{eq:validity_3}
\end{align}
where (1) holds by \eqref{eq:validity_1}, the fact that $T^{\ast}=\{T_{i}^{\ast}\}_{i=1}^{q}$ is i.i.d.\ $\mathrm{Bernoulli}(1/2)$, and $T^{\ast}\perp U$, and (2) by the definition of $u(\alpha,\pi)$ in \eqref{eq:u_defn}.

Next, consider the following argument:
\begin{align*}
|E[\phi_{n}^{\mathrm r}(q,\alpha)]-\alpha|
&~\overset{(1)}{=}~
\Big|
E\Big[
\int_{0}^{1}\sum_{t\in\{0,1\}^{q}}
{\bf 1}\{t\in\Psi_{\rm r}(\tilde\pi,u)\}du~
(P(\tilde T=t\mid\tilde\pi)-2^{-q})
\Big]
\Big|\\
&~{\leq}~
E\Big[
\int_{0}^{1}\sum_{t\in\{0,1\}^{q}}
{\bf 1}\{t\in\Psi_{\rm r}(\tilde\pi,u)\}du~
|P(\tilde T=t\mid\tilde\pi)-2^{-q}|
\Big]\\
&~\overset{(2)}{\leq}~
E\Big[
\int_{0}^{1}\sum_{t\in\{0,1\}^{q}}
{\bf 1}\{t\in\Psi_{\rm r}(\tilde\pi,u)\}du ~
\Delta_{n}(\tilde\pi)
\Big]\\
&~\overset{(3)}{=}~
\alpha 2^{q}E[\Delta_{n}(\tilde\pi)]
~\overset{(4)}{\to}~0,
\end{align*}
where (1) holds by the definition of $\Psi_{\rm r}(\tilde\pi,u)$, that $U\sim\mathrm{Unif}(0,1)$ is independent of the data, and \eqref{eq:validity_3}, (2) by \eqref{eq:validity_0}, (3) by \eqref{eq:validity_3}, and (4) by \eqref{eq:validity_0b}. This proves that the randomized test has asymptotic rejection probability $\alpha$ under $H_{0}$.

We now prove the result for the non-randomized version. For any $\pi\in\Pi$, define the following set:
\begin{equation}
\Psi_{\rm nr}(\pi)~\equiv~
\{t\in\{0,1\}^{q}:S(t,\pi)>c(\alpha,\pi)\}.
\label{eq:validity_5}
\end{equation}
By definition, the non-randomized test rejects $H_{0}$ if and only if $\tilde T\in\Psi_{\rm nr}(\tilde\pi)$.

For any $\pi\in\Pi$, note that
\begin{equation}
\sum_{t\in\{0,1\}^{q}}
{\bf 1}\{t\in\Psi_{\rm nr}(\pi)\}\,2^{-q}
~\overset{(1)}{=}~
P(S(T^{\ast},\pi)>c(\alpha,\pi))
~\overset{(2)}{\leq}~\alpha,
\label{eq:validityPf_2}
\end{equation}
where (1) holds by \eqref{eq:validity_5} and that $T^{\ast}=\{T_{i}^{\ast}\}_{i=1}^{q}$ is i.i.d.\ $\mathrm{Bernoulli}(1/2)$, and (2) by definition of $c(\alpha,\pi)$. This inequality can be strict because the distribution of $S(T^{\ast},\pi)$ is discrete.

Next, consider the following argument:
\begin{align}
E[\phi_{n}^{\mathrm{nr}}(q,\alpha)]
&~\overset{(1)}{=}~
E\Big[
\sum_{t\in\{0,1\}^{q}}
{\bf 1}\{t\in\Psi_{\rm nr}(\tilde\pi)\}
P(\tilde T=t\mid\tilde\pi)
\Big]\notag\\
&~\leq~
E\Big[
\sum_{t\in\{0,1\}^{q}}
{\bf 1}\{t\in\Psi_{\rm nr}(\tilde\pi)\}2^{-q}
\Big]+
E\Big[
\sum_{t\in\{0,1\}^{q}}
{\bf 1}\{t\in\Psi_{\rm nr}(\tilde\pi)\}
|P(\tilde T=t\mid\tilde\pi)-2^{-q}|
\Big]\notag\\
&~\overset{(2)}{\leq}~
\alpha+\alpha 2^{q}E[\Delta_{n}(\tilde\pi)],\label{eq:validityPf_3}
\end{align}
where (1) holds by definition of $\Psi_{\rm nr}(\tilde\pi)$, (2) by \eqref{eq:validity_0} and \eqref{eq:validityPf_2}. By \eqref{eq:validity_0b} and \eqref{eq:validityPf_3}, we conclude that
\begin{equation*}
\limsup_{n\to\infty}E[\phi_{n}^{\mathrm{nr}}(q,\alpha)]~\leq~\alpha.
\end{equation*}
This proves that the non-randomized test has asymptotic rejection probability no larger than $\alpha$ under $H_{0}$. The inequality in the last display can be strict because the distribution of $S(T^{\ast},\pi)$ is discrete.
\end{proof}

\subsection{Auxiliary results}

\begin{lemma}\label{lem:boundaryLocalLimit}
Let Assumption \ref{ass:assumption} and $H_0$ in \eqref{eq:H0} hold. For any $u>0$, let
\begin{equation}
P_u(\cdot)
~\equiv~
P\left(
    (Z_i,B(Z_i),T(Z_i))\in \cdot
    \mid \dist(Z_i,\mathcal B)\leq u
\right).
\label{eq:localLaw}
\end{equation}
Then, the laws $P_u$ converge weakly as $u\downarrow0$ to a probability measure $P^*$ on $\mathbb{R}^d\times \mathbb{R}^d\times\{0,1\}$, defined by
\begin{equation}
P^*(C)~=~\sum_{t \in\{0,1\}}
\frac{\int_{\mathcal{B}}\mathbf{1}\{(b,b,t)\in C\} f_Z(b)d\mathcal{H}^{d-1}(b)}{2\int_{\mathcal{B}} f_Z(b)d\mathcal{H}^{d-1}(b)}
\label{eq:limitP}
\end{equation}
for every Borel-measurable set $C\subseteq \mathbb{R}^d\times\mathbb{R}^d\times\{0,1\}$.
\end{lemma}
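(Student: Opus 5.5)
We prove weak convergence by showing that, for every bounded continuous $h$ on $\mathbb R^d\times\mathbb R^d\times\{0,1\}$,
\begin{equation*}
\frac{1}{u}E\big[h(Z_i,B(Z_i),T(Z_i))\mathbf 1\{\dist(Z_i,\mathcal B)\le u\}\big]\;\longrightarrow\;\sum_{t\in\{0,1\}}\int_{\mathcal B}h(b,b,t)f_Z(b)\,d\mathcal H^{d-1}(b)
\end{equation*}
as $u\downarrow0$. Applying this with $h\equiv1$ gives the normalizing limit $P(\dist(Z_i,\mathcal B)\le u)/u\to 2\int_{\mathcal B}f_Z\,d\mathcal H^{d-1}$. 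This limit is finite by Assumption \ref{ass:assumption}(a),(b),(e). It is strictly positive because, under $H_0$, the one-sided limits agree, so $f_Z(b_0)>0$. By the existence of one-sided limits, $f_Z$ then stays bounded away from zero on a relatively open piece of $\mathcal B$ near $b_0$. That piece has positive $\mathcal H^{d-1}$-measure: the two sides of $\mathcal B$ near $b_0$ are uniformly thick by (c), so a relative isoperimetric inequality gives $\mathcal H^{d-1}(\mathcal B\cap B(b_0,r))\gtrsim r^{d-1}$. Taking the ratio of the two limits yields $P_u\to P^*$, with $P^*$ as in \eqref{eq:limitP}. Since $\dist(Z_i,B(Z_i))\le u$ on the conditioning event, and $h$ can be taken uniformly continuous by a standard tightness reduction, we may replace $h(Z_i,B(Z_i),T(Z_i))$ by $h(B(Z_i),B(Z_i),T(Z_i))$ at cost $o(1)$. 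It therefore suffices to treat test functions of the form $h(b,t)$.

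\textbf{Step 1: truncation.} By Assumption \ref{ass:assumption}(e), both the left-hand side restricted to $\{\Vert Z_i\Vert>r\}$ and the right-hand integral over $\{\Vert b\Vert>r\}$ become uniformly negligible as $r\to\infty$. So it suffices to work in a bounded ball, where $\mathcal H^{d-1}(\mathcal B)<\infty$ by (b).

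\textbf{Step 2: tube around the regular part.} Using (d), decompose $\mathcal B\cap\{\Vert b\Vert\le r\}$ into a compact piece $K\subset\mathcal B\setminus\Sigma$ and a small neighborhood of $\Sigma$. On $K$, the $C^2$ regularity gives a positive reach, so on a thin tube the nearest-point map $B$ is unique and $C^1$. The map $(b,s)\mapsto b+s\nu(b)$ is then a diffeomorphism with Jacobian $1+O(s)$. The tube $\{z:B(z)\in K,\ \dist(z,\mathcal B)\le u\}$ thus has two sheets, $s\in(0,u]$ and $s\in[-u,0)$. For small $u$ these lie in $\{T=1\}$ and $\{T=0\}$, respectively, up to orientation; this follows from (c) and the fact that $\mathcal B$ is exactly the common boundary, so $T$ is constant on each side locally. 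Changing variables gives
\begin{equation*}
\frac1u\int_{-u}^{u}\int_K h(b,t(s))\,f_Z(b+s\nu(b))\,(1+O(s))\,d\mathcal H^{d-1}(b)\,ds .
\end{equation*}
The one-sided limits of $f_Z$, which coincide under $H_0$, together with boundedness and dominated convergence, make this tend to $\sum_t\int_K h(b,t)f_Z(b)\,d\mathcal H^{d-1}$. This is the source of the $1/2$ weight per side in \eqref{eq:limitP}.

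\textbf{Step 3: the singular set (main obstacle).} We must show
\begin{equation*}
\limsup_{u\downarrow0}\;\frac1u\,P\big(\dist(Z_i,\mathcal B)\le u,\ B(Z_i)\notin K,\ \Vert Z_i\Vert\le r\big)
\end{equation*}
is small when $K$ exhausts the regular part. Since $f_Z$ is bounded, this reduces to bounding $\mathcal L(\{z:\dist(z,\mathcal B')\le u\})/u$ for $\mathcal B'=\mathcal B\setminus K$, a bounded closed piece with small $\mathcal H^{d-1}$-measure. The natural tool is an upper Minkowski-content bound, $\limsup_u \mathcal L(\mathcal B'^{\,u})/(2u)\lesssim\mathcal H^{d-1}(\mathcal B')$. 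This requires a density estimate $\mathcal H^{d-1}(\mathcal B\cap B(b,\rho))\ge c\rho^{d-1}$ uniformly in $b$ and $\rho$, which again comes from (c) via the relative isoperimetric inequality, combined with a Vitali covering argument (Federer's Theorem 3.2.39 style). I expect this covering/Minkowski step to be the delicate part. The Morse--Sard theorem enters here to show that the set of points whose projections are non-unique, or which have critical distance behavior, is negligible. That guarantees the measurable tie-breaking in $B$ does not affect the limits, and that the uniformity over Borel sets $S$ used in Lemma \ref{lem:LimitDen} holds, the latter following since the tube computation is linear in $\mathbf 1_S$.
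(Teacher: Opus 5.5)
Your proposal is correct and is essentially the paper's argument. It proves positivity of $\int_{\mathcal B}f_Z\,d\mathcal H^{d-1}$ via the relative-isoperimetric lower density bound (part 3 of Lemma \ref{lem:regset_v1}), and then carries out the truncation, tubular change of variables, and covering decomposition that the paper does in Lemma \ref{lem:LimitDen}. The only differences are that your uniformly continuous test functions replace the paper's $C^{\pm\eta}$ sandwich, and your Minkowski bound on a closed remainder replaces its Besicovitch covering of $C(\lambda_1 u)$.

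Two corrections. First, $\mathcal B\setminus K$ is not closed, so take the remainder to be, e.g., $\mathcal B\cap\{b:\Vert b\Vert\leq r+1,\ \dist(b,\Sigma)\leq\epsilon\}$, whose $\mathcal H^{d-1}$-measure vanishes as $\epsilon\downarrow0$ by continuity from above. Second, Morse--Sard is not what controls non-unique projections. In the paper it only selects regular values of a proper exhaustion function so that the compact pieces $\mathcal M_j$ are manifolds with boundary. In your argument, uniqueness over $K$ comes from the tube/separation step and the remainder bound is insensitive to tie-breaking, so you do not need Morse--Sard at all.
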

\begin{proof}
First, we note that Lemma \ref{lem:BoundaryHasProb} implies that $P(\dist(Z_i , \mathcal{B} ) \leq  u)>0$ for all $u>0$. Thus, $P_u$ in \eqref{eq:localLaw} is well defined for any $u>0$.

Next, we show that
\begin{equation}
\int_{\mathcal{B}}f_{Z}(b)d\mathcal{H}^{d-1}(b)~>~0.\label{eq:boundaryLocalLimit_2}
\end{equation}
Let $\delta >0$ and $b_{0}\in \mathcal{B}$ be as in Assumption \ref{ass:assumption}(a), and let $\eta_0>0$ and $\tilde r>0$ be as in part 3 of Lemma \ref{lem:regset_v1}. By Assumption \ref{ass:assumption}(a) and $H_{0}$,
$f_{Z}(b_{0})=2c>0$ for some $c>0$. By $H_{0}$, there exists $\delta _{1}\in (0,\min \{\delta ,\tilde r\})$ such that $\Vert b-b_{0}\Vert \leq \delta _{1}$ implies $|f_{Z}(b)-f_{Z}(b_{0})|\leq c$, which, combined with $f_{Z}(b_{0})=2c$, implies $f_{Z}(b)\geq c$. Then,
\begin{equation*}
\int_{\mathcal{B}}f_{Z}(b)d\mathcal{H}^{d-1}(b)
~\geq~\int_{\{b\in \mathcal{B}:\Vert b-b_{0}\Vert <\delta _{1}\}} f_{Z}(b)d\mathcal{H}^{d-1}(b)
~\overset{(1)}{\geq}~ c\mathcal{H}^{d-1}(\mathcal{B}\cap B(b_{0},\delta _{1}))
~\overset{(2)}{>}~
c\eta_0 \delta _{1}^{d-1} 
~>~0,
\end{equation*}
as desired in \eqref{eq:boundaryLocalLimit_2}, where (1) holds by $f_{Z}(b)\geq c$ for all $\Vert b-b_{0}\Vert \leq \delta _{1}$, and (2) by part 3 of Lemma \ref{lem:regset_v1}.

Also, note that
\begin{equation}
\frac{P(\dist(Z_{i},\mathcal{B})\leq u)}{u}
~\overset{(1)}{\rightarrow}~
2\int_{\mathcal{B}}f_{Z}(b)d\mathcal{H}^{d-1}(b)
~\overset{(2)}{\in}~ (0,\infty),
\label{eq:boundaryLocalLimit_3}
\end{equation}
where (1) holds by Lemma \ref{lem:LimitDen} with $S=\mathcal B$ and $t\in \{0,1\}$, and (2) by  \eqref{eq:boundaryLocalLimit_2} and \eqref{eq:LimitDen_5b} in the proof of Lemma \ref{lem:LimitDen}. Therefore, the RHS of \eqref{eq:limitP} is well defined and defines a probability measure on $\mathbb{R}^d\times\mathbb{R}^d\times\{0,1\}$.

Let $C\subseteq \mathbb{R}^{d}\times \mathbb{R}^{d}\times \{0,1\}$ be an arbitrary $P^{\ast }$-continuity set. By definition, it suffices to show that
\begin{equation}
\lim_{u\downarrow 0}P_{u}(C)=P^{\ast }(C).\label{eq:boundaryLocalLimit_4}
\end{equation}

For any Borel set $D\subseteq \mathbb{R}^{d}\times \mathbb{R}^{d}\times \{0,1\}$ and $t\in \{0,1\}$, define $D_{t}\equiv \{b\in \mathcal{B}:(b,b,t)\in D\}$. Note that
\begin{align}
& \lim_{u\downarrow 0}P\left( \{(B(Z_{i}),B(Z_{i}),T(Z_{i}))\in D\} \cap \{\dist(Z_{i},\mathcal{B})\leq u\}\right)/u\notag \\
& ~\overset{(1)}{=}~\lim_{u\downarrow 0}\sum_{t=0}^{1}
\frac{P\left( \{B(Z_{i})\in D_{t}\}\cap \{T(Z_{i})=t\} \cap \{\dist(Z_{i},\mathcal{B})\leq u\}\right)
}{u}\notag \\
& ~\overset{(2)}{=}~\sum_{t=0}^{1}\int_{D_{t}}f_{Z}(b)d\mathcal{H}^{d-1}(b).
\label{eq:boundaryLocalLimit_5}
\end{align}
where (1) holds because the event
$\{(B(Z_i),B(Z_i),T(Z_i))\in D\}$ can be written as the disjoint union, over $t\in\{0,1\}$, of the events
$\{B(Z_i)\in D_t\}\cap\{T(Z_i)=t\}$, and (2) by Lemma \ref{lem:LimitDen} applied to $S=D_t$, which is a Borel subset of $\mathcal B$ because the map $b\mapsto (b,b,t)$ is continuous.

By \eqref{eq:boundaryLocalLimit_3} and \eqref{eq:boundaryLocalLimit_5}, we obtain
\begin{equation}
\lim_{u\downarrow 0}
P\left( (B(Z_{i}),B(Z_{i}),T(Z_{i}))\in D
\mid \dist(Z_{i},\mathcal{B})\leq u\right)
~=~
P^{\ast }(D).
\label{eq:boundaryLocalLimit_6}
\end{equation}

Next, for any $\eta >0$, define $C^{\eta }\equiv \{x:\dist(x,C)<\eta \}$ and $C^{-\eta }\equiv \{x:\dist(x,C^{c})>\eta \}$. Note that $C^{-\eta}\subseteq C\subseteq C^{\eta }$. To see this, note that if $x\in C$, then $\dist(x,C)<\eta$, and so $x\in C^{\eta}$. In turn, if $x\in C^{-\eta}$, then $\dist(x,C^{c})>\eta$, and so $x\notin C^{c}$, and therefore $x\in C$.

For any $u\in (0,\eta )$, consider the following argument:
\begin{align}
&
\left\{(B(Z_{i}),B(Z_{i}),T(Z_{i}))\in C^{-\eta }\right\}
\cap
\{\dist(Z_{i},\mathcal{B})\leq u\}
\notag \\
&\overset{(1)}{=}
\left\{\dist((B(Z_{i}),B(Z_{i}),T(Z_{i})),C^{c})>\eta\right\}
\cap
\{\dist(Z_{i},\mathcal{B})\leq u\}
\notag \\
&\overset{(2)}{=}
\left\{
\begin{array}{c}
\left\{\dist((B(Z_{i}),B(Z_{i}),T(Z_{i})),C^{c})>\eta\right\}
\cap
\{\dist(Z_{i},\mathcal{B})\leq u\}\\
\cap
\left\{
\Vert (Z_{i},B(Z_{i}),T(Z_{i}))-(B(Z_{i}),B(Z_{i}),T(Z_{i}))\Vert \leq u
\right\}
\end{array}\right\}
\notag \\
&\overset{(3)}{\subseteq}
\left\{(Z_{i},B(Z_{i}),T(Z_{i}))\in C\right\}
\cap
\{\dist(Z_{i},\mathcal{B})\leq u\}
\cap
\left\{
\Vert (Z_{i},B(Z_{i}),T(Z_{i}))-(B(Z_{i}),B(Z_{i}),T(Z_{i}))\Vert \leq u
\right\}
\notag \\
&\overset{(4)}{\subseteq}
\left\{\dist((B(Z_{i}),B(Z_{i}),T(Z_{i})),C)<\eta\right\}
\cap
\{\dist(Z_{i},\mathcal{B})\leq u\}
\notag \\
&\overset{(5)}{=}
\left\{(B(Z_{i}),B(Z_{i}),T(Z_{i}))\in C^{\eta}\right\}
\cap
\{\dist(Z_{i},\mathcal{B})\leq u\},\label{eq:boundaryLocalLimit_7}
\end{align}
where (1) holds by the definition of $C^{-\eta}$, (2) because
$\{\dist(Z_{i},\mathcal{B})\leq u\}$ implies
$\{\dist(Z_{i},B(Z_{i}))\leq u\}$, which, in turn, implies $\{
\Vert (Z_{i},B(Z_{i}),T(Z_{i}))-(B(Z_{i}),B(Z_{i}),T(Z_{i}))\Vert \leq u
\}$, (3) because
$\{\dist((B(Z_{i}),B(Z_{i}),T(Z_{i})),C^{c})>\eta\}$, $\{\Vert (Z_{i},B(Z_{i}),T(Z_{i}))-(B(Z_{i}),B(Z_{i}),T(Z_{i}))\Vert \leq u\}$, and $\eta>u$ imply $(Z_{i},B(Z_{i}),T(Z_{i}))\in C$, (4) because
$\{(Z_{i},B(Z_{i}),T(Z_{i}))\in C\}$,
$\{\Vert (Z_{i},B(Z_{i}),T(Z_{i}))-(B(Z_{i}),B(Z_{i}),T(Z_{i}))\Vert \leq u\}$,
and $\eta>u$ imply
$\{\dist((B(Z_{i}),B(Z_{i}),T(Z_{i})),C)<\eta\}$, and (5) holds by the definition of $C^\eta$.

By \eqref{eq:boundaryLocalLimit_7} and $P(\dist(Z_{i},\mathcal{B})\leq u)>0$, we obtain
\begin{align}
&P\left((B(Z_{i}),B(Z_{i}),T(Z_{i}))\in C^{-\eta} \mid \dist(Z_{i},\mathcal{B})\leq u\right)\notag\\
&\leq~P_{u}(C)\notag\\
&\leq~P\left((B(Z_{i}),B(Z_{i}),T(Z_{i}))\in C^{\eta} \mid \dist(Z_{i},\mathcal{B})\leq u\right).\label{eq:boundaryLocalLimit_8}
\end{align}
Taking limits as $u\downarrow 0$ with \eqref{eq:boundaryLocalLimit_8} and applying
\eqref{eq:boundaryLocalLimit_6} with $D=C^{-\eta}$ and $D=C^\eta$ yields
\begin{equation}
P^{\ast }(C^{-\eta}) ~\leq~ \liminf_{u\downarrow 0}P_{u}(C) ~\leq~ \limsup_{u\downarrow 0}P_{u}(C) ~\leq~ P^{\ast }(C^{\eta}).\label{eq:boundaryLocalLimit_9}
\end{equation}

Then, consider the following argument:
\begin{equation}
\lim_{\eta \downarrow 0}P^{\ast }(C^{-\eta})~\overset{(1)}{=}~ P^{\ast }(\operatorname{int}(C)) ~\overset{(2)}{=}~ P^{\ast }(C) ~\overset{(3)}{=}~ P^{\ast }(\operatorname{cl}(C)) ~\overset{(4)}{=}~ \lim_{\eta \downarrow 0}P^{\ast }(C^{\eta}),\label{eq:boundaryLocalLimit_10}
\end{equation}
where (1) holds by $C^{-\eta}\uparrow \operatorname{int}(C)$ as
$\eta\downarrow 0$ and continuity from below of probability measures, (2) and (3) hold because $C$ is a $P^{\ast}$-continuity set, and so $P^{\ast}(\partial C)=0$, and (4) holds by $C^{\eta}\downarrow \operatorname{cl}(C)$ as $\eta\downarrow 0$ and continuity from above of probability measures. 

Taking limits as $\eta \downarrow 0$ with \eqref{eq:boundaryLocalLimit_9} and applying
\eqref{eq:boundaryLocalLimit_10}, we obtain \eqref{eq:boundaryLocalLimit_4}, as desired.
\end{proof}

\begin{lemma}\label{lem:BoundaryHasProb} 
Under Assumption \ref{ass:assumption}(a) and $H_{0}$ in \eqref{eq:H0}, $P(\dist(Z_i , \mathcal{B} ) \leq  u)>0$ for all $u>0$.
\end{lemma}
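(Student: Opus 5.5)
The plan is to exhibit, for each $u>0$, a small ball around a boundary point where the density is bounded below, and intersect it with the $u$-tube around $\mathcal B$. Let $\delta>0$, $b_0\in\mathcal B$ and $t_0\in\{0,1\}$ be as in Assumption \ref{ass:assumption}(a). Under $H_0$, the two one-sided limits at $b_0$ coincide. By (a) the limit from side $t_0$ is strictly positive, so $f_Z(b_0)\equiv\lim_{z\to b_0}f_Z(z)=2c$ for some $c>0$, where the limit is taken along both sides.

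First I would observe that every point $z\in\mathbb R^d\setminus\mathcal B$ lies in exactly one of $\{T=0\}$ or $\{T=1\}$. Since the full limit exists at $b_0$ (approaching from $\{T=t\}$ for either $t$), there is $\rho\in(0,\delta)$ such that $f_Z(z)\geq c$ for all $z\in B(b_0,\rho)\setminus\mathcal B$.

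Next, fix $u>0$ and set $r=\min\{u,\rho\}$. Every $z\in B(b_0,r)$ satisfies $\dist(z,\mathcal B)\leq\Vert z-b_0\Vert<u$, so $B(b_0,r)\subseteq\{\dist(\cdot,\mathcal B)\leq u\}\cap\mathcal B^{\delta}$. On $\mathcal B^\delta$, the law of $Z_i$ has density $f_Z$, which gives
\begin{equation*}
P(\dist(Z_i,\mathcal B)\leq u)~\geq~P(Z_i\in B(b_0,r))~\geq~\int_{B(b_0,r)\setminus\mathcal B}f_Z(z)\,dz~\geq~c\,\mathcal L\big(B(b_0,r)\setminus\mathcal B\big).
\end{equation*}

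The remaining step, and the only potential obstacle, is to show that $\mathcal L(B(b_0,r)\setminus\mathcal B)>0$, i.e.\ that $\mathcal B$ does not fill the ball. Assumption (a) alone only guarantees that $b_0$ is a limit of points with $T=t_0$ near which the density is close to $2c$. So $\{T=t_0\}\cap B(b_0,r)$ is nonempty, and for $z$ in it close to $b_0$ we have $f_Z(z)\geq c$. I would argue directly that this set has positive Lebesgue measure. The cleanest route, if needed, is to invoke the thickness condition (c), which gives $\mathcal L(B(b_0,r)\cap\{T=t_0\})\geq\gamma\mathcal L(B(b_0,r))>0$ for small $r$.

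Without (c), I would instead use that the limit in (a) is of a Lebesgue density defined on the region. If $\{T=t_0\}$ had zero measure near $b_0$, one could redefine the version of $f_Z$ there and break the existence or positivity of the limit, contradicting how the limit is meaningfully defined. I therefore expect to rely on (c) or a measure-theoretic nondegeneracy of that form, accepting it as implicit in (a). Either way, the result is $P(\dist(Z_i,\mathcal B)\leq u)\geq c\gamma\,\mathcal L(B(b_0,r))>0$.
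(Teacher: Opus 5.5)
Your argument fails at exactly the step you flag, and the obstacle is one you created yourself. Because you integrate only over $B(b_0,r)\setminus\mathcal{B}$, you need $\mathcal{L}(B(b_0,r)\setminus\mathcal{B})>0$. That does not follow from the lemma's hypotheses, which are only Assumption \ref{ass:assumption}(a) and $H_0$. For instance, with $T(z)=\mathbf 1\{z\in\mathbb{Q}^d\}$ one has $\mathcal{B}=bd(\mathbb{Q}^d)=\mathbb{R}^d$. Taking $Z\sim\mathcal{N}(0,I_d)$, part (a) and $H_0$ hold, yet $B(b_0,r)\setminus\mathcal{B}=\emptyset$ and your lower bound is $0$.

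Your patch imports Assumption \ref{ass:assumption}(c), which the lemma does not assume. This is harmless for the paper's later uses, where all of Assumption \ref{ass:assumption} is in force, but it proves a weaker statement than the one claimed. Even with (c), the bound $c\gamma\,\mathcal{L}(B(b_0,r))$ needs a further step, such as $\mathcal{L}(\mathcal{B}\cap B(b_0,r))=0$, to pass from $\{T=t_0\}$ to $B(b_0,r)\setminus\mathcal{B}$. The fallback about ``redefining the version'' of $f_Z$ is not an argument. The hypothesis fixes one version with the stated limits, and the existence of other versions contradicts nothing.

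The missing observation is that $T$ is defined on all of $\mathbb{R}^d$, so $\{T=0\}\cup\{T=1\}=\mathbb{R}^d$, points of $\mathcal{B}$ included. Under $H_0$ the two one-sided limits at $b_0$ coincide, and together they cover every way of approaching $b_0$. Hence $f_Z(z)\to 2c$ as $z\to b_0$ without restriction, and there is $\rho\in(0,\delta)$ with $f_Z\geq c$ on $B(b_0,\rho)\setminus\{b_0\}$. Your remark that (a) ``only guarantees'' control on the side $T=t_0$ overlooks $H_0$.

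With $r=\min\{u,\rho\}$ you then get $P(\dist(Z_i,\mathcal{B})\leq u)\geq\int_{B(b_0,r)}f_Z(z)\,dz\geq c\,\mathcal{L}(B(b_0,r))>0$. No information about $\mathcal{L}(\mathcal{B}\cap B(b_0,r))$ is needed. This is exactly the paper's proof, which treats $u\leq\delta_1$ and extends to larger $u$ by monotonicity; that is equivalent to your choice of $r$.
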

\begin{proof}
Let $\delta>0$ and $b_0\in \mathcal{B}$ be as in Assumption \ref{ass:assumption}(a). By this assumption and $H_{0}$, $f_{Z}(b_{0})=2c>0$ for some $c>0$. By $H_{0}$, $\exists \delta _{1}\in (0,\delta )$ such that $\Vert z-b_{0}\Vert \leq \delta _{1}$ implies $|f_{Z}(z)-f_{Z}(b_{0})|\leq c$ which, combined with $f_{Z}(b_{0})=2c$ implies $f_{Z}(z)\geq c$.
For all $u \in (0,\delta_1]$, 
\begin{align*}
    P(\dist(Z_i , \mathcal{B} ) \leq u) ~=~
    \int_{\dist(z, \mathcal{B} ) \leq u} f_{Z}( z) dz~\overset{(1)}{\geq}~ \int_{B(b_0,u)}  f_{Z}( z) dz
    ~\overset{(2)}{\geq}~ c\mathcal{L} ( B(b_0,u))
    ~>~0,
\end{align*}
where (1) holds by $B(b_0,u) \subseteq \{\dist(z, \mathcal{B} )\leq u\}$, and (2) by $u \in (0,\delta_1]$, and so $ f_{Z}(z)\geq c$ for all $\Vert z-b_{0}\Vert \leq\delta _{1}$. Since $P(\dist(Z_i , \mathcal{B} ) \leq u)$ is increasing in $u$, the equation implies the desired result.
\end{proof}

\begin{lemma}\label{lem:regset_v1} 
Assume Assumptions \ref{ass:assumption}(b)-(c). Then:
\begin{enumerate}
\item For $t \in \{0,1\}$, $\{ T =t\} $ has locally finite perimeter in the sense of \citet[page 122]{maggi:2012}.
\item For $t \in \{0,1\}$, 
$\mathcal{H}^{d-1}(\mathcal{B}\setminus \partial ^{\ast }\{T =t\})~=~0$, where $\partial ^{\ast }A$ denotes the reduced boundary of $A$.
\item There exist constants $\eta >0$ and $\tilde{r}>0$ such that for every $b\in \mathcal{B}$ and $r\in (0,\tilde{r})$, 
\begin{equation}
\mathcal{H}^{d-1}(\mathcal{B}\cap B(b,r) )~\geq~ \eta \times r^{d-1}.\label{eq:lowerperdensity}
\end{equation}
\end{enumerate}
\end{lemma}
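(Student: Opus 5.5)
The plan is to derive all three parts from the uniform two-sided density condition in Assumption \ref{ass:assumption}(c), using De Giorgi--Federer structure theory for sets of finite perimeter together with the relative isoperimetric inequality.

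\textbf{Part 1: locally finite perimeter.} Fix $t$ and let $E=\{T=t\}$.
\begin{itemize}
\item By construction $\partial E\subseteq\mathcal B$.
\item Assumption \ref{ass:assumption}(b) gives $\mathcal H^{d-1}(\mathcal B\cap K)<\infty$ for every compact $K$.
\item Federer's criterion (e.g.\ \citealp[Theorem 4.5.11]{federer:1996}) says that a set whose measure-theoretic boundary has locally finite $\mathcal H^{d-1}$-measure has locally finite perimeter. Since the measure-theoretic boundary is contained in the topological boundary $\partial E\subseteq\mathcal B$, the criterion applies directly and gives Part 1.
\item If one prefers to avoid Federer's theorem, cover $\mathcal B\cap K$ by small balls with $\sum r_j^{d-1}$ controlled. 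Mollify $\mathbf 1_E$ at scale $r_j$ near those balls, and bound the total variation $|D\mathbf 1_E|(K)\lesssim \sum r_j^{d-1}$ by a covering argument.
\end{itemize}

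\textbf{Part 2: $\mathcal H^{d-1}(\mathcal B\setminus\partial^*E)=0$.} Work with the essential boundary.
\begin{itemize}
\item Assumption \ref{ass:assumption}(c) says every $b\in\mathcal B$ has lower density at least $\gamma$ for both $E$ and $E^c$.
\item Hence no $b\in\mathcal B$ has density $0$ or $1$, so $\mathcal B\subseteq\partial^e E$ (the complement of the density-$0$ and density-$1$ points).
\item By Part 1 and Federer's theorem (\citealp[Theorem 16.2]{maggi:2012}), $\mathcal H^{d-1}(\partial^e E\setminus\partial^* E)=0$. Part 2 follows.
\end{itemize}

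\textbf{Part 3: uniform lower density bound.} This is the main step, since the bound must be uniform in $b$ and must hold at every point, not merely $\mathcal H^{d-1}$-a.e.
\begin{itemize}
\item Fix $b\in\mathcal B$ and $r<\bar r$. Assumption \ref{ass:assumption}(c) gives $\min\{\mathcal L(E\cap B(b,r)),\mathcal L(E^c\cap B(b,r))\}\ge\gamma\,\mathcal L(B(b,r))=\gamma\omega_d r^d$.
\item The relative isoperimetric inequality in balls (\citealp[Proposition 12.37]{maggi:2012}) then gives
\[
P(E;B(b,r))\ \ge\ c_d\,\min\{\mathcal L(E\cap B(b,r)),\mathcal L(E^c\cap B(b,r))\}^{(d-1)/d}\ \ge\ c_d(\gamma\omega_d)^{(d-1)/d}r^{d-1}.
\]
\item By De Giorgi's theorem, $P(E;B(b,r))=\mathcal H^{d-1}(\partial^*E\cap B(b,r))$.
\item Since $\partial^*E\subseteq\overline{\partial^* E}=\operatorname{supp}|D\mathbf 1_E|\subseteq\partial E\subseteq\mathcal B$, we get $\mathcal H^{d-1}(\mathcal B\cap B(b,r))\ge P(E;B(b,r))$.
\item Set $\eta=c_d(\gamma\omega_d)^{(d-1)/d}$ and $\tilde r=\bar r$. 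Because this $\eta$ is strictly smaller than the bound just obtained, the paper's strict inequality follows; otherwise halve $\eta$.
\end{itemize}

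The delicate points are two. First, Part 1 needs the precise version of Federer's criterion with the topological boundary. Second, one must check that $\operatorname{supp}|D\mathbf 1_E|\subseteq\mathcal B$, which holds because $\mathbf 1_E$ is locally constant off $\partial E$.
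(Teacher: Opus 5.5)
Your proposal is correct and follows essentially the same route as the paper: Federer's criterion via $\partial^e\{T=t\}\subseteq bd\{T=t\}=\mathcal B$ and Assumption \ref{ass:assumption}(b) for Part 1; the identity $\mathcal B=\partial^e\{T=t\}$ from Assumption \ref{ass:assumption}(c) plus Federer's theorem for Part 2; and the relative isoperimetric inequality combined with De Giorgi's structure theorem for Part 3. The differences are cosmetic: with $E=\{T=t\}$, you apply the isoperimetric inequality directly to $\min\{\mathcal L(E\cap B(b,r)),\mathcal L(E^c\cap B(b,r))\}$ and use $\operatorname{spt}|D\mathbf 1_E|\subseteq\partial E$, whereas the paper takes a maximum over $t$ with the constant $c(d,1/2)$ and uses $\partial^*\subseteq\partial^e\subseteq bd$; also, the lemma only asserts $\geq\eta r^{d-1}$, so your closing remark about strictness and halving $\eta$ is unnecessary.
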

\begin{proof}
We fix $t\in \{ 0,1\} $ arbitrarily throughout this proof. 

\medskip
\noindent \underline{Part 1.} By Federer's criterion for finite perimeter (\citet[Theorem 4.5.11, page 506]{federer:1996}), it suffices to show that $\mathcal{H}^{d-1}(\partial ^{e}\{ T=t\}  \cap B({\bf{0}}_d, r) )<\infty $ for any $r>0$, where $\partial^e A$ denotes the essential boundary of a set $A$. This, in turn, follows from $\partial ^{e}\{ T=t\} \subseteq bd\{ T=t\} =\mathcal{B}$ and Assumption \ref{ass:assumption}(b). 


\medskip
\noindent \underline{Part 2.} First, we show that 
\begin{equation}
\mathcal{B}~=~\partial ^{e}\{T=t\}.
\label{eq:regset_econ_1}
\end{equation}
By $\partial ^{e}\{ T=t\} \subseteq bd\{ T=t\} =:\mathcal{B}$, it suffices to show that $\mathcal{B}\subseteq\partial ^{e}\{ T=t\} $. To see this, pick $b\in \mathcal{B}$ arbitrarily. By Assumption \ref{ass:assumption}(c),
\begin{align}
    \underset{r\downarrow 0}{\lim \inf }~\frac{\mathcal{L}( \{T=t\}\cap B( b,r) ) }{\mathcal{L}( B( b,r) ) } ~\geq~ \gamma ~>~0.
    \label{eq:regset_econ_1b}
\end{align}
By \eqref{eq:regset_econ_1b}, $b$ does not belong to the measure theoretic exterior of $\{T=t\}$, i.e., $b\not\in \{ T=t\}^{(0)}$. In turn, note that
\begin{align}
    \underset{r\downarrow 0}{\lim \sup }~\frac{\mathcal{L}( \{T=t\}\cap B( b,r) ) }{\mathcal{L}( B( b,r) ) } ~\overset{(1)}{=}~1 - \underset{r\downarrow 0}{\lim \inf }~\frac{\mathcal{L}( \{T=1-t\}\cap B( b,r) ) }{\mathcal{L}( B( b,r) ) }  ~\overset{(2)}{\leq }~ 1- \gamma ~<~1,
    \label{eq:regset_econ_1c}
\end{align}
where (1) holds by $\{T=1\}\cap B( b,r)$ and $\{T=0\}\cap B( b,r)$ are a partition of $B( b,r)$, and so $\mathcal{L}( \{T=1\}\cap B( b,r) ) + \mathcal{L}( \{T=0\} \cap B( b,r) ) =\mathcal{L}(  B( b,r) ) $ for all $r>0$, (2) by Assumption \ref{ass:assumption}(c) applied to $\tilde t =1-t$. By \eqref{eq:regset_econ_1c}, $b$ does not belong to the measure theoretic interior of $\{T=t\}$, i.e., $b\not\in \{ T=t\}^{(1)}$. Since $b\not\in \{ T=t\}^{(0)} \cup \{ T=t\}^{(1)}$, we have that $b$ belongs to the measure-theoretic boundary of $\{T=t\}$, i.e., $b\in \partial ^{e}\{T=t\}$. Since the choice of $b\in \mathcal{B}$ was arbitrary, $\mathcal{B} \subseteq \partial ^{e}\{T=t\}$, as desired.

By part 1, $\{ T=t\} $ has locally finite perimeter. Then, Federer's Theorem (see \cite[Theorem 16.2]{maggi:2012}) implies that $\partial ^{\ast }\{T=t\} \subseteq \partial ^{e}\{T=t\}$ and
\begin{equation}
\mathcal{H}^{d-1}(\partial ^{e}\{T=t\}\setminus \partial ^{\ast }\{T=t\})~=~0.
\label{eq:regset_econ_2}
\end{equation}

The desired result follows immediately from combining \eqref{eq:regset_econ_1} and \eqref{eq:regset_econ_2}. To see this, note that
\begin{equation*}
\mathcal{H}^{d-1}( \mathcal{B}\setminus \partial ^{\ast }\{T=t\})  ~\overset{(1)}{=}~\mathcal{H}^{d-1}( \partial ^{e}\{T=t\}\setminus \partial ^{\ast }\{T=t\}) ~\overset{(2)}{=}~0, 
\end{equation*}
where (1) holds by \eqref{eq:regset_econ_1} and (2) by \eqref{eq:regset_econ_2}.

\noindent \underline{Part 3.} By Assumption \ref{ass:assumption}(c), there exist constants
$\gamma>0$ and $\bar r>0$ such that, for all $b\in\mathcal B$ and all
$r\in(0,\bar r)$,
\begin{equation}
\min \Big\{ \frac{\mathcal{L}(\{T=1\}\cap B(b,r))}{\mathcal{L}(B(b,r))},\frac{\mathcal{L}(\{T=0\}\cap B(b,r))}{\mathcal{L}(B(b,r))}\Big\} ~>~\gamma/2.
\label{eq:regset_econ_3}
\end{equation}

Then, consider the following derivation for any $r\in ( 0,\tilde{r}) $, 
\begin{align*}
\mathcal{H}^{d-1}(\mathcal{B}\cap B(b,r))~&\overset{(1)}{\geq }~\max\{\mathcal{H}^{d-1}(\partial ^{\ast }\{T =0\}\cap B(b,r)), \mathcal{H}^{d-1}(\partial ^{\ast }\{T =1\}\cap B(b,r))\} \\
&\overset{(2)}{=}~\max\{\mathcal{P}(\{T=0\};B(b,r)),\mathcal{P}(\{T=1\};B(b,r))\} \\
&\overset{(3)}{\geq}~
\left(\begin{array}{c}
\mathcal{P}(\{T=1\};B(b,r)) \times {\bf 1}\{ \mathcal{L}(\{T=1\}\cap B(b,r))\leq \mathcal{L}(B(b,r))/2\}\\
+\mathcal{P}(\{T=0\};B(b,r)) \times {\bf 1}\{ \mathcal{L}(\{T=0\}\cap B(b,r))\leq \mathcal{L}(B(b,r))/2\}
\end{array}\right) \\
&\overset{(4)}{\geq }~c( d,1/2) \times \min \{ \mathcal{L}(\{T=1\}\cap B(b,r)),\mathcal{L}(\{T=0\}\cap B(b,r))\} ^{(d-1) /d} \\
&\overset{(5)}{\geq }~
\eta \times r^{d-1},
\end{align*}
as desired, where (1) holds by $\partial ^{\ast }\{T=t\}\subseteq \partial ^{e }\{T=t\}$, which, in turn, follows from \citet[Corollary 15.8]{maggi:2012}, and $\partial ^{e }\{T=t\}\subseteq bd\{T=t\} = \mathcal{B}$, (2) by part 1 and \citet[Remark 12.2 and Theorem 15.9]{maggi:2012} to $E=\{ T=t\}$ and $F=B(b,r):=\{ z\in \mathbb{R} ^{d}:\Vert z-b\Vert \leq r \} $,
where $\mathcal{P}( E;F) $ denotes the relative perimeter of $E$ in $F\subseteq \mathbb{R} ^{d}$, (3) by the fact that either $\mathcal{L}(\{T=0\}\cap B(b,r))\leq \mathcal{L}(B(b,r))/2$ or $\mathcal{L}(\{T=1\}\cap B(b,r))\leq \mathcal{L}(B(b,r))/2$, which, in turn, follows from $\mathcal{L}( \{T=1\}\cap B( b,r) ) + \mathcal{L}( \{T=0\} \cap B( b,r) ) =\mathcal{L}(  B( b,r) ) $, (4) by 
part 1 and \citet[Proposition 12.37]{maggi:2012} applied to $E=\{ T=0\}$ and $E=\{ T=1\}$, and (5) by \eqref{eq:regset_econ_3}, $\mathcal{L}(B(b,r))=\mathcal{L}(B({\bf 0}_{d},1))r^{d}$, and by setting $\eta \equiv c( d,1/2) ( \gamma /2) ^{( d-1) /d}( \mathcal{L}(B({\bf 0}_{d},1))) ^{( d-1) /d}>0$.
\end{proof}

\begin{lemma}\label{lem:LimitDen}
Let $Bo(\mathcal{B})$ denote the class of Borel-measurable subsets of $\mathcal{B}$. Under Assumption \ref{ass:assumption} and $H_0$ in \eqref{eq:H0}, we have that
\begin{equation}\label{eq:LimitDen_0}
\lim_{u \downarrow 0 } \sup_{S \in Bo(\mathcal{B}), t\in\{0,1\}} \left|{P(\{B(Z_{i})\in S\}\cap \{\dist(Z_i,\mathcal{B}) \leq  u\}\cap  \{T(Z_{i}) = t\})}/{u} - \int_{S}f_{Z}(b)d\mathcal{H}^{d-1}(b)\right| ~=~0.
\end{equation}
\end{lemma}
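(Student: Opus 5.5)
The plan is to prove \eqref{eq:LimitDen_0} by splitting $\mathcal B$ into three pieces. The first is a compact piece of the regular part $\mathcal B\setminus\Sigma$, on which a tubular-neighborhood change of variables applies. The second is a small neighborhood of the singular set $\Sigma$. The third is the far-away part of $\mathcal B$. The last two pieces are shown to contribute uniformly little, after dividing by $u$, to both the probability and the boundary integral. Because every error bound is independent of $S$, taking the supremum over $S\in Bo(\mathcal B)$ and $t\in\{0,1\}$ costs nothing.

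\textbf{Step 1 (localization).} Fix $\varepsilon>0$ and choose $r$ by Assumption \ref{ass:assumption}(e) so that two tail contributions are below $\varepsilon$ for all small $u$: the probability of $\{\dist(Z_i,\mathcal B)\le u,\ \Vert Z_i\Vert>r\}$ divided by $u$, and the integral $\int_{\Vert b\Vert>r}f_Z\,d\mathcal H^{d-1}$. Points with $\Vert Z_i\Vert\le r$ but $\Vert B(Z_i)\Vert>r+1$ are impossible for $u<1$, so we may restrict to the bounded set $\mathcal B_r=\mathcal B\cap\{\Vert b\Vert\le r+1\}$.

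Next, use that $\Sigma$ is closed, $\mathcal H^{d-1}(\Sigma)=0$, and $\mathcal H^{d-1}(\mathcal B_r)<\infty$ by Assumption \ref{ass:assumption}(b). Continuity from above then gives $\rho>0$ with $\mathcal H^{d-1}(\mathcal B_r\cap\Sigma^{3\rho})<\varepsilon$, where $\Sigma^{s}$ is the open $s$-neighborhood of $\Sigma$. Set $K=\mathcal B_r\setminus\Sigma^{\rho}$. This $K$ is a compact subset of the embedded $C^2$ hypersurface $\mathcal B\setminus\Sigma$, so it has a positive normal injectivity radius (reach) $\tau>0$.

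\textbf{Step 2 (bad region).} We need to bound, uniformly in $u$ small, the quantity $P(\dist(Z_i,\mathcal B)\le u,\ B(Z_i)\in\mathcal B_r\cap\Sigma^{2\rho})/u$ by $C\varepsilon$. Every such $Z_i$ lies within $u$ of $A=\mathcal B_r\cap\Sigma^{2\rho}$. Take a maximal $u$-separated set in $A$. The balls of radius $u/2$ around its points are disjoint and, for $u<\rho$, lie inside $\Sigma^{3\rho}$. Part 3 of Lemma \ref{lem:regset_v1} gives each of these balls $\mathcal H^{d-1}$-mass at least $\eta(u/2)^{d-1}$ on $\mathcal B$. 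Hence the number of points is at most $\mathcal H^{d-1}(\mathcal B_r\cap\Sigma^{3\rho})/(\eta (u/2)^{d-1})$. The $2u$-balls around these points cover the $u$-tube of $A$. Multiplying the count by the ball volume $C u^d$ and by $\sup_{\mathcal B^\delta}f_Z$ yields the bound $C'u\varepsilon$.

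The corresponding boundary integral $\int_{A}f_Z\,d\mathcal H^{d-1}$ is at most $\varepsilon\sup f_Z$. The only nonuniqueness issue is the tie-breaking in $B(\cdot)$, and this only reassigns points between Lebesgue-null or bad sets, so it does not affect these bounds.

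\textbf{Step 3 (regular part).} For $u<\min\{\tau,\rho\}$, the map $(b,s)\mapsto b+s\nu(b)$ on $K\times(-u,u)$ is a $C^1$ diffeomorphism onto its image, with Jacobian $J(b,s)=1+O(s)$ uniformly on $K$ by the $C^2$ bound on curvature. Any $z$ with $\dist(z,\mathcal B)\le u$ and $B(z)\in K\setminus\Sigma^{2\rho}$ equals $B(z)+s\nu(B(z))$ with $|s|\le u$, since a nearest point on a smooth piece gives a normal foot. Moreover $T(z)$ is determined by the sign of $s$: the orientation of $\nu$ is chosen so that $\nu$ points into $\{T=1\}$, which is consistent by Lemma \ref{lem:regset_v1} part 2.

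Change variables. Conversely, a point $b+s\nu(b)$ with $b\in K\setminus\Sigma^{2\rho}$ and $|s|\le u$ could fail to project to $b$ only if some other boundary point is closer; such points project into $\Sigma^{2\rho}$ or far away, and are therefore already counted in Step 2. This gives
\begin{equation*}
P(B(Z_i)\in S\cap K',\ \dist\le u,\ T=t)=\int_{S\cap K'}\int_0^{u}f_Z(b\pm s\nu(b))J(b,\pm s)\,ds\,d\mathcal H^{d-1}(b)+\text{(Step 2 error)},
\end{equation*}
where $K'=K\setminus\Sigma^{2\rho}$.

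Under $H_0$, the one-sided limits of $f_Z$ agree at each boundary point. Together with compactness of $K$, this yields $\sup_{b\in K,\,0<s\le u}|f_Z(b\pm s\nu(b))-f_Z(b)|\to0$ as $u\downarrow0$. Hence the main term equals $u\int_{S\cap K'}f_Z\,d\mathcal H^{d-1}+o(u)$ uniformly in $S$. Combining the three steps and letting $\varepsilon\downarrow0$ gives \eqref{eq:LimitDen_0}.

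\textbf{Main obstacle.} I expect the hardest part to be making Step 3 rigorous uniformly near $K$. The existence of the one-sided limits is only assumed pointwise; upgrading it to uniform convergence on compact regular pieces requires a compactness argument on the closures of each side. One must also carefully justify that the Step 2 covering absorbs every point whose projection is non-unique or lies off the normal fibre; this may additionally invoke the Morse--Sard regularity from Assumption \ref{ass:assumption}(d) to control level sets of the distance function. I would first try to derive the uniform continuity directly from pointwise one-sided limits plus compactness of $K$.
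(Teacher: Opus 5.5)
Your proposal is correct and has the same overall architecture as the paper's proof:

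\begin{itemize}
\item truncate the far part of $\mathcal B$ using Assumption~\ref{ass:assumption}(e);
\item isolate a compact piece of $\mathcal B\setminus\Sigma$ on which $(b,s)\mapsto b+s\nu(b)$ is injective with Jacobian $1+O(s)$;
\item get $\sup_{b,|s|\le u}|f_Z(b+s\nu(b))-f_Z(b)|\to 0$ from $H_0$ plus compactness;
\item bound the remainder by a covering count together with the lower density bound in part~3 of Lemma~\ref{lem:regset_v1}.
\end{itemize}
The third step is exactly the paper's one-line argument, so your ``main obstacle'' dissolves: if uniformity failed, a subsequence $b_n\to b^*$ would contradict continuity at $b^*$.

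The genuine difference is how the compact regular piece and the bad set are chosen. The paper takes $\mathcal M$ to be a sublevel set, at a regular value, of a proper exhaustion function on $\mathcal B\setminus\Sigma$. This is where Morse--Sard and the $C^{d-1}$ part of Assumption~\ref{ass:assumption}(d) enter. Because $u$-neighborhoods of $\mathcal B\setminus\mathcal M$ overlap $\mathcal M$, the paper's Part~4 then needs three devices:
\begin{itemize}
\item an annulus comparison at radii $u$ and $(1+\lambda_1)u$;
\item a Besicovitch covering of the points at distance more than $\lambda_1 u$ from $\mathcal M$;
\item a separate tubular bound for points projecting into $(\mathcal B\cap B_3)\setminus\mathcal M$.
\end{itemize}
Your nested neighborhoods $\Sigma^{\rho}\subset\Sigma^{2\rho}\subset\Sigma^{3\rho}$ build in a buffer. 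The $u/2$-enlargement of the bad set $\mathcal B_r\cap\Sigma^{2\rho}$ itself has small $\mathcal H^{d-1}$-measure, so a single packing argument handles everything that projects near $\Sigma$. Your route therefore uses only $C^2$ regularity of $\mathcal B\setminus\Sigma$, closedness of $\Sigma$, and $\mathcal H^{d-1}(\Sigma)=0$. Morse--Sard is not needed, contrary to your hedge. The paper's construction yields a tidier object (compact $C^2$ manifolds with boundary), but the lemma only uses their compactness.

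Three small repairs are needed.
\begin{itemize}
\item In Step~2, the balls $B(x_j,u/2)$ can leave $\{\Vert b\Vert\le r+1\}$. Choose $\rho$ so that $\mathcal H^{d-1}(\mathcal B\cap\{\Vert b\Vert\le r+2\}\cap\Sigma^{3\rho})$ is small.
\item In Step~3, what you need is injectivity of the normal map on $K\times(-\tau,\tau)$. This follows because it is a local diffeomorphism that is injective on the compact set $K\times\{0\}$. It is not the reach of $K$ as a subset of $\mathbb R^d$, which need not be positive along the edge of $\Sigma^{\rho}$.
\item In the converse inclusion, the exceptional points are absorbed by the Step~1 tail bound, not by Step~2. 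Suppose $z=b+s\nu(b)$ with $b\in K'$, $|s|\le u<\min\{\tau,\rho/2\}$, and $B(z)=b'\neq b$. Then $\Vert b'-b\Vert\le 2u$ gives $b'\notin\Sigma^{\rho}$. Since $b'$ is a nearest point on the regular part, $z=b'+s'\nu(b')$ with $|s'|\le u$, so injectivity excludes $b'\in K$. Hence $\Vert b'\Vert>r+1$ and $\Vert z\Vert>r$.
\end{itemize}
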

\begin{proof}
Fix $\varepsilon >0$ arbitrarily. It suffices to show that for all sufficiently small $u>0$,
\begin{equation}
\sup_{S \in Bo(\mathcal{B}), t\in\{0,1\}}\Big\vert P(\{B(Z_{i})\in S\}\cap \{\dist(Z_{i},\mathcal{B} ) \leq  u\}\cap \{T(Z_{i})=t\})/u-\int_{S}f_{Z}( b) d\mathcal{H}^{d-1}( b) \Big\vert ~\leq~ \varepsilon .
\label{eq:LimitDen_1}
\end{equation}
We divide the argument into five parts. 

\medskip\noindent
\underline{Part 1.} This part introduces constants and sets, and bounds the probability for several ``bad'' events.

First, by Assumption \ref{ass:assumption}(e), $\exists r_{1}=r_{1}( \varepsilon )>0 $ and $\exists u_{1}=u_{1}( \varepsilon ,r_{1}) \in (0,r_1)$ such that for all $u\in ( 0,u_{1}) $,
\begin{equation}
P(\{\dist(Z_{i},\mathcal{B})\leq u\}\cap \{Z_{i} \not\in B({\bf 0}_{d},r_{1}) \})/u~\leq~ \varepsilon/12. \label{eq:LimitDen_2}
\end{equation}
Then, for all $u\in ( 0,u_{1}) $,
\begin{equation}
P(\{\dist(Z_{i},\mathcal{B}) \leq u\} \cap \{B(Z_{i}) \not\in B({\bf 0}_{d},2r_{1}) \})/u~\overset{(1)}{\leq }~P(\{\dist(Z_{i},\mathcal{B}) \leq u\}\cap \{Z_{i} \not\in B({\bf 0}_{d},r_{1}) \})/u~\overset{(2)}{\leq}~  \varepsilon/12 ,
\label{eq:LimitDen_3}
\end{equation}
where (1) holds by $u <r_1$ and so $\{ \{ \dist(Z_{i},\mathcal{B}) \leq u\}\cap \{B(Z_{i})\not\in  B({\bf 0}_{d},2r_{1}) \}\} \subseteq  \{Z_{i}\not\in  B({\bf 0}_{d},r_{1}) \} $, and (2) by \eqref{eq:LimitDen_2}.
Moreover, for all $u\in ( 0,u_{1}) $,
\begin{equation}
P(\{\dist(Z_{i},\mathcal{B}\setminus B({\bf 0}_{d},2r_{1})) \leq u\})/u~\overset{(1)}{\leq }~P(\{\dist(Z_{i},\mathcal{B}) \leq u\}\cap \{Z_{i}\not\in  B({\bf 0}_{d},r_{1})\})/u~\overset{(2)}{\leq }~ \varepsilon/12, \label{eq:LimitDen_3b}
\end{equation}
where (1) holds by $u <r_1$ and so $\{\dist(Z_{i},\mathcal{B}\setminus B({\bf 0}_{d},2r_{1}))\leq u\}\subseteq \{\dist(Z_{i},\mathcal{B}) \leq u\}\cap \{Z_{i}\not\in  B({\bf 0}_{d},r_{1}) \}$ and (2) by \eqref{eq:LimitDen_2}.
%
%
Also, by Assumption \ref{ass:assumption}(e), $\exists r_{2}>0$ s.t.
\begin{equation}
\int_{\mathcal{B}\setminus  B({\bf 0}_{d},r_{2})}f_{Z}( b) d\mathcal{H}^{d-1}( b) ~\leq~ \varepsilon/12 .
\label{eq:LimitDen_5}
\end{equation}
For the rest of the proof, set $r_{3}\equiv  \max \{ 2r_{1}+2,r_{2}\} $ and $B_{3} \equiv B({\bf 0}_{d},r_{3})$.

Next, consider the following auxiliary result: 
\begin{align}
\int_{\mathcal B} f_Z(b)~d\mathcal H^{d-1}(b)
~\leq~\sup_{b\in\mathcal B} f_Z(b)\mathcal H^{d-1}(\{ b \in \mathcal B : \Vert b\Vert\leq {r}\})+\int_{{b\in\mathcal B:\Vert b\Vert>{r}}} f_Z(b)~d\mathcal H^{d-1}(b) ~\overset{(1)}{<}~\infty ,\label{eq:LimitDen_5b}
\end{align}
where (1) holds for a sufficiently large $r>0$ by Assumptions \ref{ass:assumption}(a), (b), and (e).

We now construct a sequence of sets $\{\mathcal M_j\}_{j\in\mathbb N}$ with certain desirable properties. Let $k:=\max\{2,d-1\}$. By Assumption \ref{ass:assumption}(d), $\mathcal B\setminus\Sigma$ is an embedded $C^k$, $(d-1)$-dimensional manifold without boundary. Hence, $\mathcal B\setminus\Sigma$ admits a proper $C^k$ exhaustion function $\rho:\mathcal B\setminus\Sigma\to[0,\infty)$. Since $\dim(\mathcal B\setminus\Sigma)=d-1$ and $k>d-2$, the finite-differentiability version of the Morse--Sard theorem \citep[p.~67]{hirsch:1976} implies that the set of critical values of $\rho$ has Lebesgue measure zero. We can therefore choose a strictly increasing sequence of regular values $\{c_j\}_{j\in\mathbb N}$ such that $c_j\to\infty$. For each $j\in\mathbb N$, define
\begin{equation*}
\mathcal M_j:=\{b\in\mathcal B\setminus\Sigma:\rho(b)\leq c_j\}.
\end{equation*}
Because $\rho$ is proper, $\mathcal M_j$ is compact in $\mathcal B\setminus\Sigma$. Since the inclusion $\mathcal B\setminus\Sigma\hookrightarrow\mathbb R^d$ is continuous, $\mathcal M_j$ is also compact in $\mathbb R^d$.

Because $c_j$ is a regular value of $\rho$, the regular sublevel set theorem implies that $\mathcal M_j$ is an embedded $C^k$, $(d-1)$-dimensional manifold with manifold boundary, with $\partial_{\mathrm{man}}\mathcal M_j=\{b\in\mathcal B\setminus\Sigma:\rho(b)=c_j\}$ and $\operatorname{Int}_{\mathrm{man}}(\mathcal M_j)=\{b\in\mathcal B\setminus\Sigma:\rho(b)<c_j\}$. 
In particular, because $k\geq2$, $\mathcal M_j$ is an embedded $C^2$ manifold with boundary. Since $c_j<c_{j+1}$, we have $\mathcal M_j\subseteq\operatorname{Int}_{\mathrm{man}}(\mathcal M_{j+1})$. Furthermore, because $c_j\to\infty$, the sequence $\{\mathcal M_j\}_{j\in\mathbb N}$ exhausts $\mathcal B\setminus\Sigma$. Therefore, $\bigcup_{j=1}^{\infty}\mathcal M_j=\mathcal B\setminus\Sigma$ and $(\mathcal B\setminus\mathcal M_j)\downarrow\Sigma$ as $j\to\infty.$

Note that
\begin{equation*}
    0~\leq~ \int_{\Sigma}f_{Z}(b)d\mathcal{H}^{d-1}(b) ~\overset{(1)}{\leq }~ \Big( \sup_{b\in \mathcal{B} }f_{Z}(b)\Big) \times  \mathcal{H}^{d-1}(\Sigma)~\overset{(2)}{=}~0,
\end{equation*}
where (1) holds by $\Sigma \subseteq \mathcal{B}$ and (2) by Assumptions \ref{ass:assumption}(b) and (d). Then,
\begin{equation}
    \int_{\Sigma}f_{Z}(b)d\mathcal{H}^{d-1}(b)~=~0.
    \label{eq:LimitDen_6}
\end{equation}
Also,
\begin{equation}
   \int_{\mathcal{B} \setminus \mathcal{M}_j}f_{Z}(b)d\mathcal{H}^{d-1}(b)~\leq~ \int_{\mathcal{B}}f_{Z}(b)d\mathcal{H}^{d-1}(b)~\overset{(1)}{<}~\infty,
   \label{eq:LimitDen_7}
\end{equation}
where (1) holds by \eqref{eq:LimitDen_5b}. Then,
\begin{equation}
\lim_{j \to \infty }\int_{\mathcal{B} \setminus \mathcal{M}_j}f_{Z}(b)d\mathcal{H}^{d-1}(b)~\overset{(1)}{=}~\int_{\Sigma}f_{Z}(b)d\mathcal{H}^{d-1}(b)~\overset{(2)}{=}~0,
  \label{eq:LimitDen_8}
\end{equation}
where (1) holds by $(\mathcal{B} \setminus \mathcal{M}_j) \downarrow \Sigma$ as $j\to \infty$, \eqref{eq:LimitDen_7}, and dominated convergence, and (2) by \eqref{eq:LimitDen_6}. 

Also, note that
\begin{equation}
\mathcal{H}^{d-1}(( \mathcal{B}\cap B({\bf 0}_{d},r_{3}+1))\setminus \mathcal{M}_j)  ~\leq~ \mathcal{H}^{d-1}(\mathcal{B} \cap B({\bf 0}_{d},r_{3}+1) )~\overset{(1)}{<}~\infty.
   \label{eq:LimitDen_9}
\end{equation}
where (1) holds by Assumption \ref{ass:assumption}(b). Then,
\begin{equation}
\lim_{j \to \infty }\mathcal{H}^{d-1}(( \mathcal{B}\cap B({\bf 0}_{d},r_{3}+1))\setminus \mathcal{M}_j)  ~\overset{(1)}{=}~\mathcal{H}^{d-1}(\Sigma \cap B({\bf 0}_{d},r_{3}+1) )~\overset{(2)}{=}~0,
  \label{eq:LimitDen_10}
\end{equation}
where (1) holds by $\mathcal{B} \setminus \mathcal{M}_j \downarrow \Sigma$ as $j\to \infty$, \eqref{eq:LimitDen_9}, and dominated convergence, and (2) by Assumption \ref{ass:assumption}(d).

By \eqref{eq:LimitDen_5b}, we can choose $\lambda _{1}\in ( 0,1) $ small enough to get
\begin{equation}
2 \lambda _{1}\int_{\mathcal{B}}f_{Z}( b) d\mathcal{H}^{d-1}( b) ~\leq~ \varepsilon /12.
\label{eq:LimitDen_32}
\end{equation}

By \eqref{eq:LimitDen_8} and \eqref{eq:LimitDen_10}, we can find $j=j(\varepsilon )$ and a set $\mathcal{M}=\mathcal{M}_{j(\varepsilon )}$ so that
\begin{align}
\int_{\mathcal{B} \setminus \mathcal{M}}f_{Z}(b)d\mathcal{H}^{d-1}(b)~&\leq~ \varepsilon/12,  \label{eq:LimitDen_11}\\
\mathcal{H}^{d-1}(( \mathcal{B}\cap B({\bf 0}_{d},r_{3}+1))\setminus \mathcal{M}) ~&\leq~  \min\left\{ \frac{\eta  \lambda_1^{d-1} }{\xi ( d) \mathcal{L}( B({\bf 0}_{d},1) )3^{d}}, \frac{9}{40}\right\}\frac{\varepsilon}{12\sup_{b\in \mathcal{B}^{\delta }}f_{Z}(b)}.
  \label{eq:LimitDen_12}
\end{align}
where $\eta>0$ is the constant in part 3 of Lemma \ref{lem:regset_v1}, $\lambda_1$ as in \eqref{eq:LimitDen_32}, and $\xi (d)$ is the (positive and finite) Besicovich covering number in $d$ dimensions (see \citet[Theorem 5.1]{maggi:2012}). 

\medskip\noindent
\underline{Part 2.} This part establishes the tubular neighborhood representation used in the remainder of the proof. 

Part 1 defines the compact manifold with boundary $\mathcal M=\mathcal M_j\subseteq\mathcal B\setminus\Sigma$. Since $\mathcal B$ separates the regions $\{T=0\}$ and $\{T=1\}$, its regular part $\mathcal B\setminus\Sigma$ is two-sided. Without loss of generality, suppose that $T(b)=1$ for every $b\in\mathcal B$, and let $\nu$ be the $C^1$ unit normal field on $\mathcal B\setminus\Sigma$ oriented toward the side on which $T=1$. (The argument in the other case is completely analogous.)

By construction, $\mathcal M=\{b\in\mathcal B\setminus\Sigma:\rho(b)\leq c_j\} \subseteq\operatorname{Int}_{\mathrm{man}}(\mathcal M_{j+1}) =\{b\in\mathcal B\setminus\Sigma:\rho(b)<c_{j+1}\}$. Choose a relatively open neighborhood $V$ of $\mathcal M$ in
$\mathcal B\setminus\Sigma$ such that 
\begin{equation*}
\mathcal M ~\subseteq~ V ~\subseteq~\operatorname{cl}(V) ~\subseteq~ \operatorname{Int}_{\mathrm{man}}(\mathcal M_{j+1}).
\end{equation*}
For example, we may take $V = \{ b\in\mathcal B\setminus\Sigma: \rho(b)<{(c_j+c_{j+1})}/{2}\}$. The closure $\operatorname{cl}(V)$ is compact because $\rho$ is proper. Since $D\nu$ is continuous, $\exists C<\infty$ s.t.
\begin{equation}
\sup_{b\in\operatorname{cl}(V)}\|D\nu(b)\| ~\leq~ C.
\label{eq:LimitDen_19a}
\end{equation}

By the tubular neighborhood theorem (see \citet[Theorem 5.1]{hirsch2012differential}), after reducing $V$ if necessary while preserving $\mathcal M\subseteq V$, there exists $w_0\in(0,\delta)$ s.t.\
\begin{equation*}
\Phi:V\times(-w_0,w_0)\longrightarrow\mathbb R^d, \qquad \Phi(b,w):=b+w\nu(b),
\end{equation*}
is a $C^1$ diffeomorphism onto its image $U:= \{b+w\nu(b):b\in V,\ |w|<w_0\}$. The set $U$ is open in $\mathbb R^d$ because $\Phi$ is a $C^1$ local diffeomorphism and hence an open map. Also, note that $\Phi(b,0)=b \in U$ for every $b\in V$ and so $\mathcal M\subseteq V \subset U$. After reducing $w_0$ if necessary, the chosen orientation implies that $T(\Phi(b,w)) = {\bf 1}[w\geq0]$ for every $b\in V$ and $|w|<w_0$. 

Since $U$ is open and contains the compact set $\mathcal M$, $d_1:=\dist(\mathcal M,U^c)>0$. Moreover, since $\mathcal M$ is compactly contained in the relatively open set $V\subseteq\mathcal B$, $d_2:=\dist(\mathcal M,\mathcal B\setminus V)>0.$ Set $u_2 \in(0,\min\{w_0,d_1,{d_2}/{2}\})$ for the rest of the proof. Then, for every $u\in(0,u_2)$, 
\begin{equation}
\{z\in\mathbb R^d:\dist(z,\mathcal M)\leq u\} ~\overset{(1)}{\subseteq}~U, \label{eq:intcont}
\end{equation}
where (1) holds because $\dist(z,\mathcal M)\leq u$ and $z\in U^c$ would imply $d_1 = \dist(\mathcal M,U^c) \leq \dist(z,\mathcal M) \leq u < u_2 \leq d_1$, which is a contradiction.

Fix $u\in(0,u_2)$ and consider any $z$ s.t.\ $\dist(z,\mathcal M)\leq u$. By \eqref{eq:intcont} and the fact that $\Phi$ is a diffeomorphism onto $U$, $z$ has a unique representation 
\begin{equation}
z~=~\Phi(b,w)\quad\text{for some } b\in V\text{ and }|w|<w_0.
\label{eq:LimitDen_19aa}
\end{equation}
The nearest point property of the tubular neighborhood implies that $b$ is the unique closest point to $z$ in $V$. Next, we show that $b$ is the unique closest point to $z$ in $\mathcal B$. To this end, note that for any $\widetilde b\in\mathcal B\setminus V$, we have
\begin{align}
\|z-\widetilde b\|~\overset{(1)}{\geq}~ \dist(\mathcal M,\mathcal B\setminus V) - \dist(z,\mathcal M) ~\overset{(2)}{\geq}~ d_2-u ~\overset{(3)}{>}~ u \geq \dist(z,\mathcal M) ~\overset{(4)}{\geq}~ \dist(z,V) ~\overset{(5)}{=}~ \|z-b\|, \label{eq:LimitDen_19b}
\end{align}
where (1) holds by $\dist(\mathcal M,\mathcal B\setminus V) \leq \dist(z,\mathcal M)+\dist(z,\mathcal B\setminus V)$ and $\dist(z,\mathcal B\setminus V)\leq\|z-\widetilde b\|$, (2) by $d_2=\dist(\mathcal M,\mathcal B\setminus V)$ and $\dist(z,\mathcal M)\leq u$, (3) by $u<u_2\leq d_2/2$, (4) by $\mathcal M\subseteq V$, and (5) because $b$ is the unique closest point to $z$ in $V$. Therefore, every point of $\mathcal B\setminus V$ is strictly farther from $z$ than $b$. Since $b$ is the unique closest point to $z$ in $V$, it follows that $b$ is the unique closest point to $z$ in $\mathcal B$. In particular,
\begin{equation}
B(z)~=~b\qquad\text{and}\qquad \dist(z,\mathcal B)~=~|w|. \label{eq:LimitDen_19c}
\end{equation}

For $u>0$ and $t\in\{0,1\}$, define
\begin{equation*}
I_t(u) ~:=~ 
\begin{cases}
[0,u], & t=1,\\
[-u,0), & t=0.
\end{cases}
\end{equation*}

We now prove the desired tubular neighborhood representation. Specifically, for any $A\in Bo(\mathcal B)$ with $A\subseteq V$, $t\in\{0,1\}$, and $u\in(0,u_2)$, we argue that 
\begin{align}
&\left\{ z\in\mathbb R^d: B(z)\in A,\  \dist(z,\mathcal M\cap B_3)\leq u,\ T(z)=t \right\} \notag\\
&=~ \left\{ \Phi(b,w): b\in A,\ w\in I_t(u),\ \dist(\Phi(b,w),\mathcal M\cap B_3)\leq u \right\}.
\label{eq:LimitDen_20a}
\end{align}

We now show \eqref{eq:LimitDen_20a}. First, we show that the LHS of \eqref{eq:LimitDen_20a} is included in the RHS of \eqref{eq:LimitDen_20a}. To this end, consider any $z$ in the LHS of \eqref{eq:LimitDen_20a}, i.e., $z\in\mathbb R^d$ s.t.\ $B(z)\in A$, $\dist(z,\mathcal M\cap B_3)\leq u$, and $T(z)=t$. Since $\mathcal M\cap B_3\subseteq\mathcal M$, we have $\dist(z,\mathcal M)\leq\dist(z,\mathcal M\cap B_3)\leq u$. Hence, by \eqref{eq:intcont}, $z\in U$. Thus, by \eqref{eq:LimitDen_19aa}, $z=\Phi(b,w)$ for a unique $b\in V$ and a unique $w\in(-w_0,w_0)$. By \eqref{eq:LimitDen_19c} and $B(z)\in A$, we obtain $b=B(z)\in A$ and $|w|=\dist(z,\mathcal B)$. Since $\mathcal M\cap B_3\subseteq\mathcal B$ and $z=\Phi(b,w)$, we have $|w|=\dist(z,\mathcal B)\leq\dist(z,\mathcal M\cap B_3) =\dist(\Phi(b,w),\mathcal M\cap B_3)\leq u$. Finally, $|w|\leq u$, $T(z)=t$, and the orientation of $\nu$ imply that $w\in I_t(u)$. This shows that $z$ is in the RHS of \eqref{eq:LimitDen_20a}.

Second, we show that the RHS of \eqref{eq:LimitDen_20a} is included in the LHS of \eqref{eq:LimitDen_20a}. To this end, consider any $z$ in the RHS of \eqref{eq:LimitDen_20a}, i.e., $z=\Phi(b,w)$, $b\in A$, $w\in I_t(u)$, and $\dist(\Phi(b,w),\mathcal M\cap B_3)\leq u$. Since $z=\Phi(b,w)$ and $\mathcal M\cap B_3\subseteq\mathcal M$, we have $\dist(z,\mathcal M)\leq\dist(z,\mathcal M\cap B_3) =\dist(\Phi(b,w),\mathcal M\cap B_3)\leq u<u_2$. Hence, \eqref{eq:intcont} implies that $z\in U$. By \eqref{eq:LimitDen_19aa}, $z$ has a unique representation under $\Phi$. Therefore, the base point in this representation is the $b$ appearing in $z=\Phi(b,w)$, and \eqref{eq:LimitDen_19c} gives $B(z)=b\in A$. Moreover, $\dist(z,\mathcal M\cap B_3) =\dist(\Phi(b,w),\mathcal M\cap B_3)\leq u$. Finally, the definition of $I_t(u)$ and the orientation of $\nu$ imply that $T(z)=t$. This shows that $z$ is in the LHS of \eqref{eq:LimitDen_20a} and concludes the proof of part 2.

\medskip\noindent
\underline{Part 3.} This part derives a convergence result for the probability of a ``good'' event. Specifically, our goal is to find $u_3=u_3(\varepsilon)>0$ such that, for all $u\in(0,u_3)$,
\begin{align}
&\sup_{S\in Bo(\mathcal B),\,t\in\{0,1\}}
\left|
\begin{array}{c}
P(\{B(Z_i)\in S\cap B_3\cap\mathcal M\} \cap \{\dist(Z_i,B_3\cap\mathcal M)\leq u\} \cap \{T(Z_i)=t\})/u \\
- \int_{S\cap B_3\cap\mathcal M} f_Z(b)\,d\mathcal H^{d-1}(b)
\end{array}
\right| ~\leq~ \varepsilon/60. \label{eq:LimitDen_18}
\end{align}
To this end, it suffices to show that, for $t\in\{0,1\}$,
\begin{equation}
\lim_{u\downarrow0} \sup_{S\in Bo(\mathcal B)}
\left|
\begin{array}{c}
P(\{B(Z_i)\in S\cap B_3\cap\mathcal M\} \cap \{\dist(Z_i,B_3\cap\mathcal M)\leq u\} \cap \{T(Z_i)=t\})/u \\
- \int_{S\cap B_3\cap\mathcal M} f_Z(b)\,d\mathcal H^{d-1}(b)
\end{array}
\right| ~=~0.
\label{eq:LimitDen_19}
\end{equation}

For any $u\in(0,u_2)$, consider the following argument:
\begin{align}
&P(\{B(Z_i)\in S\cap B_3\cap\mathcal M\} \cap \{\dist(Z_i,B_3\cap\mathcal M)\leq u\} \cap \{T(Z_i)=t\}) \notag\\
&\overset{(1)}{=}~ \int_{\left\{ z\in\mathbb R^d:~ B(z)\in S\cap B_3\cap\mathcal M,\ \dist(z,B_3\cap\mathcal M)\leq u,\ T(z)=t \right\}} f_Z(z)\,dz \notag\\
&\overset{(2)}{=}~ \int_{\{\Phi(b,w):~ b\in S\cap B_3\cap\mathcal M,\ w\in I_t(u),\ \dist(\Phi(b,w),\mathcal M\cap B_3)\leq u\}} f_Z(z)\,dz \notag\\
&\overset{(3)}{=}~ \int_{\{\Phi(b,w):~ b\in S\cap B_3\cap\mathcal M,\ w\in I_t(u)\}} f_Z(z)\,dz \notag\\
&\overset{(4)}{=}~ \int_{I_t(u)} \int_{S\cap B_3\cap\mathcal M} f_Z(\Phi(b,w))J\Phi(b,w)\, d\mathcal H^{d-1}(b)\,dw,
\label{eq:LimitDen_21}
\end{align}
where (1) holds by Assumption \ref{ass:assumption}(a) and $u<u_2 \leq \delta$, (2) by \eqref{eq:LimitDen_20a} applied with $A=S\cap B_3\cap\mathcal M$, (3) because $b\in S\cap B_3\cap\mathcal M$ and $w\in I_t(u)$ imply $\dist(\Phi(b,w),\mathcal M\cap B_3) \leq\|\Phi(b,w)-b\|=|w|\leq u$, and (4) by the change of variables according to the coarea formula, where $J\Phi(b,w)=\left|\det\left({\bf I}_{T_b(\mathcal B\setminus\Sigma)}+wD\nu(b)\right)\right|$ is the Jacobian determinant of $\Phi $ restricted to the tangent space $T_b(\mathcal B\setminus\Sigma)$, and ${\bf I}_{T_b(\mathcal B\setminus\Sigma)}+wD\nu(b)$ is a linear operator defined on $T_b(\mathcal B\setminus\Sigma)$ for each $b \in \mathcal B\setminus\Sigma$.

For every $b\in\mathcal B\cap B_3\cap\mathcal M$, $\Phi(b,w)=b+w\nu(b)\to b$ as $w\to0$. Since $\mathcal M\cap\operatorname{cl}(B_3)$ is compact and $f_Z$ is continuous at $b$ by $H_0$, it follows that
\begin{equation}
\lim_{u\downarrow0} \sup_{|w|\leq u} \sup_{b\in\mathcal B\cap B_3\cap\mathcal M} |f_Z(\Phi(b,w))-f_Z(b)| ~=~ 0.
\label{eq:LimitDen_22b}
\end{equation}

Moreover, since $\sup_{b\in\mathcal M}\|D\nu(b)\|\leq C<\infty$,
\begin{equation}
\sup_{|w|\leq u} \sup_{b\in\mathcal M} \|wD\nu(b)\| ~\leq~ uC.
\label{eq:LimitDen_23a}
\end{equation}
By \eqref{eq:LimitDen_23a} and the continuity of the determinant at the identity matrix,
\begin{equation}
\lim_{u\downarrow0} \sup_{|w|\leq u} \sup_{b\in\mathcal M} |J\Phi(b,w)-1| ~=~0. \label{eq:LimitDen_23}
\end{equation}

Therefore, for each $t\in\{0,1\}$ and every $u\in(0,u_2)$,
\begin{align}
&\sup_{S\in Bo(\mathcal B)} \left|
\begin{array}{c}
P(\{B(Z_i)\in S\cap B_3\cap\mathcal M\} \cap \{\dist(Z_i,B_3\cap\mathcal M)\leq u\} \cap \{T(Z_i)=t\})/u \\
- \int_{S\cap B_3\cap\mathcal M} f_Z(b)\,d\mathcal H^{d-1}(b)
\end{array}
\right| \notag\\
&\overset{(1)}{=}~ \sup_{S\in Bo(\mathcal B)} \frac{1}{u} \left| \int_{I_t(u)} \int_{S\cap B_3\cap\mathcal M} \big( f_Z(\Phi(b,w))J\Phi(b,w)-f_Z(b) \big) \,d\mathcal H^{d-1}(b)\,dw \right| \notag\\
&\overset{(2)}{\leq}~ \sup_{|w|\leq u} \sup_{b\in\mathcal B\cap B_3\cap\mathcal M} |f_Z(\Phi(b,w))J\Phi(b,w)-f_Z(b)| \, \mathcal H^{d-1}(\mathcal B\cap B_3) \notag\\
&\leq~ \left(
\begin{array}{c}
(\sup_{|w|\leq u} \sup_{b\in\mathcal M} |J\Phi(b,w)-1| +1) \sup_{|w|\leq u} \sup_{b\in\mathcal B\cap B_3\cap\mathcal M} |f_Z(\Phi(b,w))-f_Z(b)| \\
+ (\sup_{b\in\mathcal B}f_Z(b)) \sup_{|w|\leq u} \sup_{b\in\mathcal M} |J\Phi(b,w)-1|
\end{array}
\right)
\mathcal H^{d-1}(\mathcal B\cap B_3),
\label{eq:LimitDen_24}
\end{align}
where (1) holds by \eqref{eq:LimitDen_21} and $\mathcal L(I_t(u))=u$, and (2) by $S\cap B_3\cap\mathcal M\subseteq\mathcal B\cap B_3$. By Assumptions \ref{ass:assumption}(a)--(b), \eqref{eq:LimitDen_22b}, and \eqref{eq:LimitDen_23}, the RHS of \eqref{eq:LimitDen_24} converges to zero as $u\downarrow0$. This proves \eqref{eq:LimitDen_19} for $t \in \{0,1\}$. Therefore, $\exists u_3=u_3(\varepsilon)\in(0,u_2)$ so that \eqref{eq:LimitDen_18} holds for every $u\in(0,u_3)$.

\medskip\noindent
\underline{Part 4.} This part establishes bounds for the probability of two ``bad'' events. Specifically, we find $u_4=u_4(\varepsilon)>0$ so that for all $u\in(0,u_4)$,
\begin{align}
P(\{B(Z_i)\in\mathcal B\cap B_3\} \cap \{\dist(Z_i,\mathcal B\cap B_3)\leq u\} \cap \{\dist(Z_i,\mathcal M)>u\})/u ~&\leq~ \varepsilon/2,\label{eq:LimitDen_30}\\
P(\{B(Z_i)\in(\mathcal B\cap B_3)\setminus\mathcal M\} \cap \{\dist(Z_i,\mathcal M\cap B_3)\leq u\})/u~&\leq~ 3\varepsilon/20.
\label{eq:LimitDen_30_B}
\end{align}
We focus on proving \eqref{eq:LimitDen_30}, and the proof of \eqref{eq:LimitDen_30_B} is obtained as a by-product.

For any $v>0$, let $C(v):=\{b\in\mathcal B\cap B_3:\dist(b,\mathcal M)>v\}$. Let $\lambda_1\in(0,1)$ be as in \eqref{eq:LimitDen_32}. For any $u\in(0,1)$,
\begin{align}
&\{\dist(z,\mathcal B\cap B_3)\leq u\} \cap \{\dist(z,\mathcal M)>u\}\notag\\
&\overset{(1)}{\subseteq}~ \{\dist(z,(\mathcal B\cap B_3)\setminus\mathcal M)\leq u\} \cap \{\dist(z,\mathcal M)>u\}\notag\\
&{=}~
\left\{\begin{array}{l}
\big(\{\dist(z,(\mathcal B\cap B_3)\setminus\mathcal M)\leq u\} \cap \{\dist(z,\mathcal M) \in (u,(1+\lambda_1)u]\}\big) ~\cup \\
\big(\{\dist(z,(\mathcal B\cap B_3)\setminus\mathcal M)\leq u\} \cap \{\dist(z,\mathcal M)>(1+\lambda_1)u\}\big)
\end{array}\right\}
\notag\\
&\overset{(2)}{\subseteq}~
\big( \{\dist(z,\mathcal B)\leq u\} \cap \{\dist(z,\mathcal M)\in(u,(1+\lambda_1)u]\} \big) ~\cup~ \{\dist(z,C(\lambda_1u))\leq(1+\lambda_1)u\}\notag\\
&\overset{(3)}{\subseteq}~
\left\{
\begin{array}{c}
\big(\{\dist(z,\mathcal B)\leq u\}\cap\{z\notin B({\bf 0}_d,r_1)\}\big)~\cup ~\{\dist(z,\mathcal M\cap B_3) \in(u,(1+\lambda_1)u]\}\\
\cup~~ \{\dist(z,C(\lambda_1u))\leq(1+\lambda_1)u\}
\end{array}
\right\},
\label{eq:LimitDen_31}
\end{align}
where (1) holds because $\{\dist(z,\mathcal B\cap B_3)\leq u\}$ and $\{\dist(z,\mathcal M)>u\}$ implies
$\{\dist(z,(\mathcal B\cap B_3)\setminus\mathcal M)\leq u\}$, (2) holds because $\dist(z,(\mathcal B\cap B_3)\setminus\mathcal M)\leq u$ implies that, for every $\gamma>0$, there exists $b_\gamma\in(\mathcal B\cap B_3)\setminus\mathcal M$ such that $\|z-b_\gamma\|<u+\gamma$. Choose $\gamma = \min\{\lambda_1u,\dist(z,\mathcal M)-(1+\lambda_1)u\}>0$. Then, by the triangle inequality, $\dist(b_\gamma,\mathcal M)\geq \dist(z,\mathcal M)-\|z-b_\gamma\|>\lambda_1u$, so $b_\gamma\in C(\lambda_1u)$. Moreover, $\|z-b_\gamma\|<u+\gamma<(1+\lambda_1)u$, and hence $\dist(z,C(\lambda_1u))\leq(1+\lambda_1)u$, and  (3) holds by $\mathcal M=(\mathcal M\cap B_3)\cup(\mathcal M\cap B_3^c)$,
while $(1+\lambda_1)u<2$ implies that $\dist(z,\mathcal M\cap B_3^c)\leq(1+\lambda_1)u$ only if $z\notin B({\bf 0}_d,r_3-2)$, which, in turn, implies $z\notin B({\bf 0}_d,r_1)$ by $r_1\leq r_3-2$.

For any $u \in (0,1)$, we then have
\begin{align}
&P(\{B(Z_i)\in\mathcal B\cap B_3\} \cap \{\dist(Z_i,\mathcal B\cap B_3)\leq u\} \cap \{\dist(Z_i,\mathcal M)>u\})/u \notag \\
&\overset{(1)}{\leq}~
\left\{
\begin{array}{c}
P( \{\dist(Z_i,\mathcal B)\leq u\}\cap\{Z_i\notin B({\bf 0}_d,r_1)\} )/u\\
+ P(\{B(Z_i)\in\mathcal M\cap B_3\} \cap \{\dist(Z_i,\mathcal M\cap B_3) \in(u,(1+\lambda_1)u]\})/u\\
+ P(\{B(Z_i)\in(\mathcal B\cap B_3) \setminus \mathcal M\} \cap \{\dist(Z_i,\mathcal M\cap B_3) \leq (1+\lambda_1)u\})/u\\
+ P(\{\dist(Z_i,C(\lambda_1u))\leq(1+\lambda_1)u\})/u 
\end{array}
\right\},\label{eq:LimitDen_60}
\end{align}
where (1) holds by \eqref{eq:LimitDen_31} and $(\mathcal B\cap B_3)=(\mathcal M\cap B_3)\cup((\mathcal B\cap B_3)\setminus \mathcal M)$. To complete the proof, it suffices to find $u_4 \in (0,1)$, s.t.\  for all $u \in (0,u_4)$ we can bound the RHS of \eqref{eq:LimitDen_60} by $\varepsilon/2$.

We begin with the first term on the RHS of \eqref{eq:LimitDen_60}. For every $u\in(0,u_1)$,
\begin{equation}
P(\{\dist(Z_i,\mathcal B)\leq u\} \cap \{Z_i\notin B({\bf 0}_d,r_1)\}) /u~ \overset{(1)}{\leq}~ {\varepsilon}/{12}, \label{eq:LimitDen_60_T2}
\end{equation}
where (1) holds by \eqref{eq:LimitDen_2}.

We next deal with the second term on the RHS of \eqref{eq:LimitDen_60}. For every $u\in(0,u_3/(1+\lambda_1))$,
\begin{align}
& P(\{B(Z_i)\in\mathcal M\cap B_3\} \cap \{\dist(Z_i,\mathcal M\cap B_3) \in(u,(1+\lambda_1)u]\})/u \notag\\
& = ~\left\{
\begin{array}{c}
P(\{B(Z_i)\in\mathcal M\cap B_3\} \cap \{\dist(Z_i,\mathcal M\cap B_3) \leq (1+\lambda_1)u\})/u\\
- P(\{B(Z_i)\in\mathcal M\cap B_3\} \cap \{\dist(Z_i,\mathcal M\cap B_3) \leq u\})/u
\end{array}
\right\}
\notag\\
&{\leq }~
\left\{
\begin{array}{c}
(1+\lambda_1) 
\left|
\begin{array}{c}
P\big(\{B(Z_i)\in\mathcal M\cap B_3\} \cap \{\dist(Z_i,\mathcal M\cap B_3) \leq(1+\lambda_1)u\}\big)/((1+\lambda_1)u)\\
-2\int_{\mathcal M\cap B_3}f_Z(b)\,d\mathcal H^{d-1}(b) 
\end{array}
\right|+\\
\left|
\begin{array}{c}
P\big(\{B(Z_i)\in\mathcal M\cap B_3\} \cap \{\dist(Z_i,\mathcal M\cap B_3) \leq u\}\big)/u\\
- 2\int_{\mathcal M\cap B_3}f_Z(b)\,d\mathcal H^{d-1}(b) 
\end{array}
\right|+2\lambda_1 \int_{\mathcal M\cap B_3}f_Z(b)\,d\mathcal H^{d-1}(b) 
\end{array}
\right\} 
\notag\\
&~\overset{(1)}{\leq}~ (1+\lambda_1)\frac{\varepsilon}{30} + \frac{\varepsilon}{30} + \frac{\varepsilon}{12} ~\overset{(2)}{\leq}~ \frac{11\varepsilon}{60},
\label{eq:LimitDen_60_T3}
\end{align}
where (1) holds by $\mathcal{M} \cap B_3 \subseteq \mathcal{B}$ and \eqref{eq:LimitDen_32}, and \eqref{eq:LimitDen_18}, applied at $t\in\{0,1\}$, $u$, and $(1+\lambda_1)u$, and (2) by $\lambda_1<1$. 

We next deal with the third term on the RHS of \eqref{eq:LimitDen_60}. Since $\sup_{b\in\operatorname{cl}(V)}\|D\nu(b)\|<\infty$, the argument used in \eqref{eq:LimitDen_23a}--\eqref{eq:LimitDen_23} gives $\lim_{v\downarrow0}\sup_{|w|\leq v}\sup_{b\in\operatorname{cl}(V)} |J\Phi(b,w)-1|=0$. Hence, $\exists d_3 \in(0,w_0)$ so that
\begin{equation}
    \sup_{|w|\leq d_3}\sup_{b\in\operatorname{cl}(V)}J\Phi(b,w)~\leq~2.
    \label{eq:LimitDen_32a}
\end{equation}

Fix $u\in(0,u_2/(1+\lambda_1))$ and consider any $z$ such that $\dist(z,\mathcal M\cap B_3)\leq(1+\lambda_1)u$. Since $\mathcal M\cap B_3\subseteq\mathcal M$, we have $\dist(z,\mathcal M)\leq\dist(z,\mathcal M\cap B_3) \leq(1+\lambda_1)u<u_2$. Thus, \eqref{eq:LimitDen_19aa} gives a unique representation $z=\Phi(b,w)$ with $b\in V$, and \eqref{eq:LimitDen_19c} gives $B(z)=b\in V$. Therefore, on the set under consideration, $\{B(z)\in(\mathcal B\cap B_3)\setminus\mathcal M\} = \{B(z)\in((\mathcal B\cap B_3)\setminus\mathcal M)\cap V\}$. We can consequently apply \eqref{eq:LimitDen_20a} with $A=((\mathcal B\cap B_3)\setminus\mathcal M)\cap V$ and $v=(1+\lambda_1)u$.

Consequently, for any $u \in (0,\min\{d_3/(1+\lambda_1), u_2/(1+\lambda_1),\delta/(1+\lambda_1)\})$,
\begin{align}
&P(\{B(Z_i)\in(\mathcal B\cap B_3)\setminus\mathcal M\} \cap \{\dist(Z_i,\mathcal M\cap B_3)\leq(1+\lambda_1)u\})/u \notag\\
&\overset{(1)}{=}~ \frac{1}{u} \int_{\left\{ z\in\mathbb R^d: B(z)\in(\mathcal B\cap B_3)\setminus\mathcal M,\ \dist(z,\mathcal M\cap B_3)\leq(1+\lambda_1)u \right\}} f_Z(z)\,dz \notag\\
&\overset{(2)}{\leq}~ \frac{1}{u} \int_{\left\{ \Phi(b,w): b\in((\mathcal B\cap B_3)\setminus\mathcal M)\cap V,\ |w|\leq(1+\lambda_1)u \right\}} f_Z(z)\,dz \notag\\
&\overset{(3)}{=}~\frac{1}{u} \int_{-(1+\lambda_1)u}^{(1+\lambda_1)u} \int_{((\mathcal B\cap B_3)\setminus\mathcal M)\cap V} f_Z(\Phi(b,w))J\Phi(b,w)\, d\mathcal H^{d-1}(b)\,dw\notag\\
&\overset{(4)}{\leq}~ 4(1+\lambda_1) \Big(\sup_{z\in\mathcal B^\delta}f_Z(z)\Big) \mathcal H^{d-1} \bigl((\mathcal B\cap B_3)\setminus\mathcal M\bigr) \notag\\
&\overset{(5)}{\leq}~ 3{\varepsilon}/{20},
\label{eq:LimitDen_60_T4}
\end{align}
where (1) holds by Assumption \ref{ass:assumption}(a) and $u < \delta/(1+\lambda_1)$, (2) by \eqref{eq:LimitDen_20a}, applied for each $t\in\{0,1\}$ with $A=((\mathcal B\cap B_3)\setminus\mathcal M)\cap V$ and $v=(1+\lambda_1)u$, and by $I_0(v)\cup I_1(v)=[-v,v]$, (3) by the change of variables according to the coarea formula, 
(4) by $u<d_3/(1+\lambda_1)$, \eqref{eq:LimitDen_32a}, $\mathcal L([-(1+\lambda_1)u,(1+\lambda_1)u])=2(1+\lambda_1)u$, and $\Phi(b,w)\in\mathcal B^\delta$, which follows from $b\in\mathcal B$ and $|w|\leq(1+\lambda_1)u<\delta$, and (5) by \eqref{eq:LimitDen_12} and $\lambda_1<1$.

Finally, we deal with the last term on the RHS of \eqref{eq:LimitDen_60}. 
For any $u\in ( 0,1) $ we can cover $C( \lambda _{1} u)$ with a collection of balls $\{B(x,\lambda _{1}u):x\in C( \lambda _{1} u) \}$, whose centers lie in $C( \lambda _{1} u) $.
Note that $C(\lambda _{1}u)\subseteq B_{3}$ is a bounded set. By the Besicovitch Covering Theorem (see \cite[Theorem 5.1]{maggi:2012}), we can find a countable subset of centers $\tilde{C}( \lambda _{1} u) \equiv \{ \{ x_{j,l}\in C( \lambda _{1} u) :j\in I_{l}(u)\} :l=1,2,\dots,\xi ( d) \} $ so that $\{ B(x_{j,l},\lambda _{1}u):j\in I_{l}(u)\} $ are pairwise-disjoint for each $l=1,\dots,\xi ( d) $, and $\{ B(x,\lambda _{1}u):x\in \tilde{C}( \lambda _{1} u) \} $ covers $C( \lambda _{1} u)$, i.e.,
\begin{equation}
C( \lambda _{1} u) ~\subseteq~ \cup_{x\in \tilde{C}( \lambda _{1} u) }B(x,\lambda _{1}u)~=~\bigcup_{l=1,2,\dots,\xi ( d) }\bigcup_{j\in I_{l}(u)}B(x_{j,l},\lambda _{1}u).
\label{eq:LimitDen_35}
\end{equation}
By this, the triangle inequality, and $\lambda_1<1$, we obtain that
\begin{equation}
\{ z\in \mathbb{R} ^{d}:\dist(z,C( \lambda _{1} u) )\leq (1+\lambda _{1})u\} ~\subseteq~ \cup_{l=1,2,\dots,\xi ( d) }\cup_{j\in I_{l}(u)}B(x_{j,l},3u).
\label{eq:LimitDen_36}
\end{equation}

We now provide an argument to control the cardinality of $\tilde{C}( \lambda _{1} u) $. Since $\{ B(x_{j,l},\lambda _{1}u):j\in I_{l}(u)\} $ are pairwise-disjoint for each $l=1,\dots,\xi ( d) $, we have that, for any $b\in \mathbb{R}^{d}$,
\begin{equation}
\sum_{x\in \tilde{C}( \lambda _{1} u) }{ \bf 1}\{ b\in B(x,\lambda _{1}u)\} ~=~\sum_{l=1,2,\dots,\xi ( d) }\sum_{j\in I_{l}(u)}{ \bf 1}\{ b\in B(x_{j,l},\lambda _{1}u)\} ~\leq~ \xi ( d) .
\label{eq:LimitDen_37}
\end{equation}
Furthermore, for any $x_{j,l}\in \tilde{C}( \lambda _{1} u)$, one can show that
\begin{equation}
(B(x_{j,l},\lambda _{1} u)\cap \mathcal{B})~\subseteq ~( \mathcal{B}\cap B( {\bf 0}_{d},r_{3}+1) ) \setminus \mathcal{M}.
\label{eq:LimitDen_38}
\end{equation}

Let $\tilde{r}>0$ be as in part 3 of Lemma \ref{lem:regset_v1}. Then, for any $u\in ( 0,\tilde{r}/\lambda _{1}) $,
\begin{align}
\sum_{l=1,2,\dots,\xi ( d) }\sum_{j\in I_{l}(u)}\eta (\lambda_1 u)^{d-1}~&\overset{(1)}{\leq }~\sum_{l=1,2,\dots,\xi ( d) }\sum_{j\in I_{l}(u)}\mathcal{H}^{d-1}(B(x_{j,l},\lambda _{1}u)\cap \mathcal{B})\notag \\
&~\overset{(2)}{=}~\sum_{l=1,2,\dots,\xi ( d) }\sum_{j\in I_{l}(u)}\int_{( \mathcal{B}\cap B( 0,r_{3}+1) ) \setminus \mathcal{M}}{ \bf 1}\{ b\in B(x_{j,l},\lambda_{1} u)\} d\mathcal{H}^{d-1}(b) \notag \\
&~\overset{(3)}{=}~\int_{( \mathcal{B}\cap B( 0,r_{3}+1) ) \setminus \mathcal{M}}\Big( \sum_{l=1,2,\dots,\xi ( d) }\sum_{j\in I_{l}(u)}{ \bf 1}\{ b\in B(x_{j,l},\lambda_ 1 u)\}\Big) d\mathcal{H}^{d-1}(b)  \notag \\
&~\overset{(4)}{\leq }~\xi ( d) \mathcal{H}^{d-1}(( \mathcal{B }\cap B( {\bf 0}_{d},r_{3}+1) ) \setminus \mathcal{M})  \notag \\
&~\overset{(5)}{\leq }~\Big( \frac{\eta \lambda_1^{d-1} }{\mathcal{L}\left( B({\bf 0}_{d},1\right) )3^{d}(\sup_{b\in \mathcal{B}^{\delta }}f_{Z}(b))}\Big)\varepsilon /12,\label{eq:LimitDen_39}
\end{align}
where (1) holds by $x_{j,l}\in \mathcal{B}$, $\lambda _{1}u<\tilde{r}$, and part 3 of Lemma \ref{lem:regset_v1}, (2) by \eqref{eq:LimitDen_38}, (3) by using Tonelli's Theorem to justify interchanging sum and integrals,
(4) by \eqref{eq:LimitDen_37}, and (5) by \eqref{eq:LimitDen_12}. For any $u\in ( 0,\tilde{r}/\lambda _{1}) $, \eqref{eq:LimitDen_39} implies that 
\begin{equation}
\sum_{l=1,2,\dots,\xi ( d) }\sum_{j\in I_{l}(u)}1~\leq~ \Big(\frac{1}{u ^{d-1}\mathcal{L}( B({\bf 0}_{d},1) )3^{d}(\sup_{b\in \mathcal{B}^{\delta }}f_{Z}(b)) }\Big)\varepsilon /12.
\label{eq:LimitDen_39b}
\end{equation}

Then, for all $u\in ( 0,\min \{1, \delta/(1+\lambda_1),\tilde{r}/\lambda _{1}\}) $, we get
\begin{align}
P(\dist(Z_{i},C( \lambda _{1} u) )\leq (1+\lambda _{1})u)/u &~\overset{(1)}{\leq }~\big(\sup_{b\in \mathcal{B}^{\delta}}f_{Z}(b)\big) \mathcal{L}(\{ z \in \mathbb{R}^d: \dist(z,C( \lambda _{1} u) ) \leq (1+\lambda _{1})u\})/u \notag\\
&\overset{(2)}{\leq }~\big(\sup_{b\in \mathcal{B}^{\delta}}f_{Z}(b)\big)\mathcal{L }\big( \cup_{l=1,2,\dots,\xi ( d) }\cup_{j\in I_{l}(u)}B(x_{j,l},3u)\big) /u \notag\\
&\overset{(3)}{\leq }~\big(\sup_{b\in \mathcal{B}^{\delta}}f_{Z}(b)\big)u^{d-1} \mathcal{L }( B({\bf 0}_{d},1))3^d  \sum_{l=1,2,\dots,\xi ( d) }\sum_{j\in I_{l}(u)}1 \notag\\
&\overset{(4)}{\leq } \varepsilon /12,\label{eq:LimitDen_60_T1}
\end{align}
where (1) holds by $u<\delta/(1+\lambda_1) $ and Assumption
\ref{ass:assumption}(a), (2) by \eqref{eq:LimitDen_36}, (3) by subadditivity of Lebesgue measure and $\mathcal{L}(B(x,3u))=u^d3^d\mathcal{L}(B({\bf 0}_{d},1))$ for all $x\in\mathbb{R}^d$, and (4) by $u< \tilde{r}/\lambda _{1}$ and \eqref{eq:LimitDen_39b}.

To conclude the proof, set $u_4 = \min \{u_1,1,\tilde{r}/\lambda _{1}, \min\{u_2,u_3,\delta,d_3\}
/(1+\lambda_1)\}$. For all $u \in (0,u_4)$, \eqref{eq:LimitDen_30} then follows from combining \eqref{eq:LimitDen_60}, \eqref{eq:LimitDen_60_T2}, \eqref{eq:LimitDen_60_T3}, \eqref{eq:LimitDen_60_T4}, and \eqref{eq:LimitDen_60_T1}. Since $\lambda_1>0$, we also note that \eqref{eq:LimitDen_60_T4} implies \eqref{eq:LimitDen_30_B}.

\medskip\noindent
\underline{Part 5.} This part combines the previous parts to complete the proof.

For any $u\in (0,\min \{u_{1},u_{2},u_{3},u_{4},1\})$, we have 
\begin{align*}
& \sup_{S\in Bo(\mathcal{B}),\,t\in \{0,1\}}\left\vert P(\{B(Z_{i})\in S\}\cap \{\dist(Z_{i},\mathcal{B})\leq u\}\cap \{T(Z_{i})=t\})/u -\int_{S}f_{Z}(b)\,d\mathcal{H}^{d-1}(b)
\right\vert  \\
& \overset{(1)}{=}~\sup_{S\in Bo(\mathcal{B}),\,t\in \{0,1\}}\left\vert 
\begin{array}{c}
P(\{B(Z_{i})\in (S\cap B_{3}\cap \mathcal{M})\}\cap \{\dist(Z_{i},B_{3}\cap \mathcal{M})\leq u\}\cap \{T(Z_{i})=t\})/u \\ 
-\int_{S\cap B_{3}\cap \mathcal{M}}f_{Z}(b)\,d\mathcal{H}^{d-1}(b)-\int_{S \setminus (B_{3}\cap \mathcal{M}}f_{Z}(b)\,d\mathcal{H}^{d-1}(b)  \\ 
+P\left(
\begin{array}{c}
\{B(Z_{i})\in (S\cap B_{3}\cap \mathcal{M})\}\cap \{\dist(Z_{i},\mathcal{B})\leq u\}\cap \\
\{\dist(Z_{i},B_{3}\cap \mathcal{M})>u\}\cap \{T(Z_{i})=t\}
\end{array}
\right)/u \\ 
+P(\{B(Z_{i})\in (S\cap B_{3})\setminus \mathcal{M}\}\cap \{\dist(Z_{i}, \mathcal{B})\leq u\}\cap \{T(Z_{i})=t\})/u \\ 
+P(\{B(Z_{i})\in (S\setminus B_{3})\}\cap \{\dist(Z_{i},\mathcal{B})\leq u\}\cap \{T(Z_{i})=t\})/u
\end{array}
\right\vert  \\
& \leq ~\left\{ 
\begin{array}{c}
\sup_{S\in Bo(\mathcal{B}),\,t\in \{0,1\}}\left\vert 
\begin{array}{c}
P(\{B(Z_{i})\in (S\cap B_{3}\cap \mathcal{M})\}\cap \{\dist(Z_{i},B_{3}\cap \mathcal{M})\leq u\}\cap \{T(Z_{i})=t\})/u \\ 
-\int_{S\cap B_{3}\cap \mathcal{M}}f_{Z}(b)\,d\mathcal{H}^{d-1}(b)
\end{array}
\right\vert  \\ 
+P(\{B(Z_{i})\in (B_{3}\cap \mathcal{M})\}\cap \{\dist(Z_{i},\mathcal{B})\leq u\}\cap \{\dist(Z_{i},B_{3}\cap \mathcal{M})>u\})/u \\ 
+P(\{B(Z_{i})\in (\mathcal{B}\cap B_{3})\setminus \mathcal{M}\}\cap \{\dist(Z_{i},\mathcal{B})\leq u\})/u \\ 
+P(\{B(Z_{i})\not\in B_{3}\}\cap \{\dist(Z_{i},\mathcal{B})\leq u\})/u +\int_{\mathcal{B}\setminus B_{3}}f_{Z}(b)\,d\mathcal{H}^{d-1}(b)+\int_{\mathcal{B}\setminus \mathcal{M}}f_{Z}(b)\,d\mathcal{H}^{d-1}(b)
\end{array}
\right\}  \\
& \overset{(2)}{\leq }~\left\{ 
\begin{array}{c}
\underset{S\in Bo(\mathcal{B}),\,t\in \{0,1\}}{\sup }\left\vert 
\begin{array}{c}
P(\{B(Z_{i})\in S\cap B_{3}\cap \mathcal{M}\}\cap \{\dist(Z_{i},B_{3}\cap \mathcal{M})\leq u\}\cap \{T(Z_{i})=t\})/u \\ 
-\int_{S\cap B_{3}\cap \mathcal{M}}f_{Z}(b)\,d\mathcal{H}^{d-1}(b)
\end{array}
\right\vert  \\ 
+P(\{B(Z_{i})\in (\mathcal{B}\cap B_{3})\setminus \mathcal{M}\}\cap \{\dist(Z_{i},B_{3}\cap \mathcal{M})\leq u\})/u \\ 
+P(\{\dist(Z_{i},\mathcal{B})\leq u\}\cap \{Z_{i}\notin B(\mathbf{0}_{d},r_{3}-1)\})/u \\ 
+P(\{B(Z_{i})\in \mathcal{B}\cap B_{3}\}\cap \{\dist(Z_{i},\mathcal{B}\cap B_{3})\leq u\}\cap \{\dist(Z_{i},\mathcal{M})>u\})/u \\ 
+P(\{B(Z_{i})\notin B_{3}\}\cap \{\dist(Z_{i},\mathcal{B})\leq u\})/u +\int_{\mathcal{B}\setminus B_{3}}f_{Z}(b)\,d\mathcal{H}^{d-1}(b)+ \int_{\mathcal{B}\setminus \mathcal{M}}f_{Z}(b)\,d\mathcal{H}^{d-1}(b)
\end{array}
\right\}  \\
& \overset{(3)}{\leq}~\frac{\varepsilon}{60}+\frac{3\varepsilon}{20}+\frac{\varepsilon}{12}+\frac{\varepsilon}{2}+\frac{\varepsilon}{12}+\frac{\varepsilon}{12}+\frac{\varepsilon}{12}~=~\varepsilon ,
\end{align*}
as desired in \eqref{eq:LimitDen_1}, where (1) holds because $(S\cap B_{3})$ and $(S\setminus B_{3})$ partition $S$, $(S\cap B_{3}\cap \mathcal{M})$ and $S\setminus (B_{3}\cap \mathcal{M})$ partition $S$, $(S\cap B_{3}\cap \mathcal{M})$ and $(S\cap B_{3})\setminus \mathcal{M}$ partition $(S\cap B_{3})$, (2) holds by $\{B(Z_{i})\in (B_{3}\cap \mathcal{M})\}\cap \{\dist(Z_{i},\mathcal{B})\leq u\}\cap \{\dist(Z_{i},B_{3}\cap \mathcal{M} )>u\}=\emptyset $, $\{B(Z_{i})\in (\mathcal{B}\cap B_{3})\setminus \mathcal{M}\}\cap \{\dist(Z_{i},\mathcal{B})\leq u\}\cap \{\dist(Z_{i},B_{3}\cap \mathcal{M})>u\}\subseteq \{\dist(Z_{i},\mathcal{B}\cap B_{3})\leq u\}$, and, for $u<1$, $\{\dist(Z_{i},B_{3}\cap \mathcal{M})>u\}\cap \{\dist(Z_{i}, \mathcal{M})\leq u\}\subseteq \{Z_{i}\in B(\mathbf{0}_{d},r_{3}-1)^{c}\}$, and (3) holds by $r_{3}-1\geq r_{1}$, \eqref{eq:LimitDen_2}, \eqref{eq:LimitDen_3}, \eqref{eq:LimitDen_5}, \eqref{eq:LimitDen_11}, \eqref{eq:LimitDen_18}, \eqref{eq:LimitDen_30}, and \eqref{eq:LimitDen_30_B}.
\end{proof}

\section{Definitions from Geometric Measure Theory}\label{sec:GMTstuff}

The paper uses the following standard definitions from geometric measure theory. See \cite{federer:1996,evans/gariepy:1992,morgan:1998,maggi:2012} for references on this topic.

For any set $A \subset \mathbb{R}^d$, the diameter of $A$ is defined as
\[
\mathrm{diam}(A)~\equiv~\sup_{x,y \in A} \Vert x-y \Vert.
\]
Let $\omega_m$ denote the volume of the unit ball in $\mathbb R^m$. For any set $A\subset\mathbb{R}^d$, $m>0$, and $\delta>0$, define
\begin{equation*}
    \mathcal{H}^m_\delta(A)
    ~\equiv~
    \inf\Bigg\{
    \sum_{j \in \mathbb{N}} \omega_m
    \left( \frac{\mathrm{diam}(D_j)}{2} \right)^m
    ~:~
    A\subset\cup_{j\geq 1} D_j,\;
    \mathrm{diam}(D_j)\le\delta
    \Bigg\}.
\end{equation*}
The $m$-dimensional Hausdorff measure of $A$ is defined by
\begin{equation*}
    \mathcal{H}^m(A)~\equiv~\lim_{\delta\downarrow 0}\mathcal{H}^m_\delta(A).
\end{equation*}
With this normalization, $\mathcal H^d$ coincides with $d$-dimensional Lebesgue measure on $\mathbb R^d$. More generally, when $m<d$, $\mathcal H^m$ measures the size of lower-dimensional subsets of $\mathbb R^d$. For example, when $d=2$, line segments have zero two-dimensional Lebesgue measure, but their lengths are measured by $\mathcal H^1$. For smooth hypersurfaces in $\mathbb R^d$, $\mathcal H^{d-1}$ coincides with the usual notion of surface measure.

The Hausdorff dimension of a set $A\subset\mathbb R^d$ is defined as
\begin{equation*}
    \dim_H(A)
    ~\equiv~
    \inf\{m\geq 0:\mathcal H^m(A)=0\}
    ~=~
    \sup\{m\geq 0:\mathcal H^m(A)=\infty\},
\end{equation*}
with the usual convention that $\inf\{\emptyset\}=\infty$. If $\mathcal H^m(A)\in(0,\infty)$, then $\dim_H(A)=m$. 

We now collect several definitions concerning the boundaries of sets. The {\em topological boundary} of a set $A\subset\mathbb R^d$ is the intersection of the closure of $A$ and the closure of its complement:
\begin{equation*}
    bd(A)~=~cl(A)\cap cl(A^c).
\end{equation*}
This is the notion of boundary used in the definition of $\mathcal B$ in Section \ref{sec:setup}.

For any set $A\subseteq\mathbb{R}^{d}$ and any point $x\in\mathbb{R}^{d}$, the {\em measure theoretic density} of $A$ at $x$ is defined as
\begin{equation*}
    \theta(A,x)
    ~\equiv~
    \lim_{r\downarrow 0}
    \frac{\mathcal{L}^d(A\cap B(x,r))}
    {\mathcal{L}^d(B(x,r))},
\end{equation*}
whenever this limit exists. The {\em measure theoretic interior} and {\em measure theoretic exterior} of $A$ are defined, respectively, as
\begin{equation*}
    A^{(1)}
    ~\equiv~
    \{x\in\mathbb R^d:\theta(A,x)\text{ exists and equals }1\}
\end{equation*}
and
\begin{equation*}
    A^{(0)}
    ~\equiv~
    \{x\in\mathbb R^d:\theta(A,x)\text{ exists and equals }0\}.
\end{equation*}
The {\em measure theoretic boundary} of $A$ is
\begin{equation*}
    \partial^e A
    ~\equiv~
    \mathbb R^d\setminus\big(A^{(1)}\cup A^{(0)}\big).
\end{equation*}

For any set $A$ of locally finite perimeter, in the sense of \citet[p.\ 117]{maggi:2012}, the {\em reduced boundary} of $A$, denoted by $\partial^\ast A$, is the set of points at which the distributional gradient of the indicator function of $A$ admits a measure-theoretic unit normal. By \citet[Corollary 15.8]{maggi:2012}, any point in the reduced boundary of $A$ has measure theoretic density equal to $1/2$:
\begin{equation*}
    x\in\partial^\ast A
    \quad\Longrightarrow\quad
    \lim_{r\downarrow 0}
    \frac{\mathcal{L}(A\cap B(x,r))}
    {\mathcal{L}(B(x,r))}
    ~=~
    \frac{1}{2}.
\end{equation*}
This implies that the reduced boundary is contained in the measure-theoretic boundary, that is,
\begin{equation*}
    \partial^\ast A \subseteq \partial^e A.
\end{equation*}

We also use two standard notions from differential topology. First, a subset $M \subset \mathbb{R}^d$ is a $C^r$ manifold of dimension $m$ if, near every point $x\in M$, the set $M$ can be described by $m$ local coordinates in a way that is $r$ times continuously differentiable. Equivalently, for every $x\in M$, there exists a neighborhood $U\subset\mathbb{R}^d$ of $x$ and a $C^r$ change of coordinates on $U$ such that $M\cap U$ is mapped onto an open subset of $\mathbb{R}^m\times\{0\}^{d-m}$; see \citet[p.~12]{hirsch2012differential}.

Second, the $C^r$ norm measures the size of a function, or of a local parametrization, together with the size of its derivatives up to order $r$. For example, if $\psi:U\subset\mathbb{R}^m\to\mathbb{R}^d$ is a local parametrization and $K\subset U$ is compact, then
\begin{equation*}
    \Vert \psi \Vert_{C^r(K)}
    ~=~
    \max_{|a|\le r}
    \sup_{u\in K}\Vert D^a \psi(u)\Vert,
\end{equation*}
where $a$ is a multi-index. Thus, a bound on the $C^r$ norm controls the parametrization itself and its derivatives up to order $r$; see \citet[p.~35]{hirsch2012differential}. In Assumption \ref{ass:assumption}(d), we apply these notions to the boundary $\mathcal B$. We use $\Vert M\Vert_{C^2}$ as shorthand for the corresponding $C^2$ size of the local parametrizations of a manifold $M$. The assumption requires that there exists a countable collection of disjoint $(d-1)$-dimensional $C^2$ manifolds $\{\mathcal M_j\}_{j\in\mathbb N}$ with uniformly bounded $C^2$ size,
\begin{equation*}
\sup_{j \in \mathbb{N}} \Vert \mathcal M_j\Vert_{C^{2}} < \infty
\end{equation*}
such that these manifolds cover the boundary up to an $\mathcal H^{d-1}$-negligible set:
\begin{equation*}
\mathcal{H}^{d-1}\big(\mathcal{B} \setminus \cup_{j=1}^\infty \mathcal M_j\big) = 0 .
\end{equation*}
Thus, apart from a set that is negligible in $(d-1)$-dimensional Hausdorff measure, the boundary is covered by countably many smooth hypersurface pieces. The $C^2$ condition ensures that each piece has well-defined tangent planes and locally bounded curvature, while the uniform bound on the $C^2$ norms gives common control over slopes and curvature across pieces. The assumption, therefore, allows $\mathcal B$ to be globally irregular and permits lower-dimensional singularities such as corners and edges, but rules out pathological local behavior on the parts of the boundary that matter for the $\mathcal H^{d-1}$ measure.

\section{Extension of \cite{lee:2008} and \cite{mccrary:2008} to BDD}\label{sec:leeExtension}

This section considers a natural extension of the statistical framework developed by \cite{lee:2008} to the BDD framework introduced in Section \ref{sec:setting}. The framework introduces an unobservable variable $W$, interpreted as the unit's type, and studies the conditional distribution of the observable running variable $Z$ given $W$. The purpose of this section is to show that $H_0$ in \eqref{eq:H0} is the natural BDD analog of the density-continuity implication underlying the manipulation test in \cite{mccrary:2008}.

Aside from notational changes, the following assumption is a BDD counterpart to the framework proposed by \citet{lee:2008}. The theorem below uses only the smoothness conditions on the conditional distribution of $Z$ given $W$. The potential outcome models are included to clarify how the setup maps to the framework of \cite{lee:2008}.

\begin{assumption}\label{ass:Lee}
Assume the following:
\begin{enumerate}[(a)]

\item (Model 1) Let $(W,Z)$ be a pair of random variables, where $W$ is unobservable and $Z \in \mathbb{R}^{d}$ is observable. Let
\begin{gather*}
	Y_1~=~y_1(W), 
	\qquad
	Y_0~=~y_0(W),
	\qquad
	X~=~x(W),
\end{gather*}
where $y_1(\cdot)$, $y_0(\cdot)$, and $x(\cdot)$ are real-valued functions. Assignment is determined by the running variable according to
\begin{gather*}
	D~=~\mathbf {\bf 1}\{T(Z)=1\}.
\end{gather*}
Let $G(\cdot)$ denote the marginal distribution of $W$.

\item (Model 2) Let $(W,Z)$ be a pair of random variables, where $W$ is unobservable and $Z \in \mathbb{R}^{d}$ is observable. Let
\begin{gather*}
	Y~=~y(W,Z),
	\qquad
	X~=~x(W).
\end{gather*}
For each $w$, the function $z\mapsto y(w,z)$ is continuous away from $\mathcal B$ and admits one-sided limits at each point of $\mathcal B$. Specifically, for every $b\in\mathcal B$, define
\begin{gather*}
	y^{+}(w,b)
	~=~
	\lim_{z \to b,~ T(z)=1} y(w,z),
	\qquad
	y^{-}(w,b)
	~=~
	\lim_{z \to b,~ T(z)=0} y(w,z),
\end{gather*}
whenever these limits exist.

\item (Smooth conditional density) There exists $\delta>0$ such that, for $G$-almost every $w$, the conditional distribution of $Z$ given $W=w$ admits a density $f_{Z|W}(\cdot|w)$ on
\begin{gather*}
	\mathcal{B}^{\delta}
	\equiv
	\{z \in \mathbb{R}^{d}: \inf_{a \in \mathcal{B}}\Vert z-a\Vert \leq \delta\}.
\end{gather*}
Moreover, for every $b\in\mathcal B$ and $G$-almost every $w$,
\begin{gather*}
	\lim_{z\to b} f_{Z|W}(z|w)
	~=~
	f_{Z|W}(b|w).
\end{gather*}

\item (Dominating function) There exists a function $C(W)$ such that
\begin{gather*}
	\int C(w)dG(w)~<~\infty
\end{gather*}
and
\begin{gather*}
	\sup_{z \in \mathcal{B}^{\delta}} f_{Z|W}(z|w)
	~\leq~
	C(w)
\end{gather*}
for $G$-almost every $w$.

\end{enumerate}
\end{assumption}

Assumption \ref{ass:Lee} represents a BDD analog of \citet{lee:2008}'s framework. Part (a) is the multidimensional counterpart of \citet{lee:2008}'s Model 1, in which the running variable determines treatment assignment but does not directly enter the response functions for potential outcomes. Part (b) is the counterpart of Model 2, which allows the response function to depend directly on the running variable and to have different one-sided limits at the boundary. In both cases, the scalar assignment rule is replaced by the assignment rule $D=\mathbf {\bf 1}\{T(Z)=1\}$.

Parts (c) and (d) contain the conditions needed to derive the density continuity implication. Part (c) requires the conditional density of the vector running variable, given the unobservable type $W$, to be continuous at the boundary. Part (d) is a domination condition that justifies interchanging limits and integration over the distribution of unobserved types, a step that is implicit in \citet{lee:2008}'s derivations.

The following theorem shows that Assumption \ref{ass:Lee}(c)-(d) implies $H_0$ in \eqref{eq:H0}. This is the $d$-dimensional analog of the density-continuity implication underlying the manipulation test in \citet[page 701, second paragraph]{mccrary:2008}, within the framework proposed by \cite{lee:2008}.

\begin{theorem}\label{thm:Lee}
Under Assumption \ref{ass:Lee}(c)-(d), $H_0$ in \eqref{eq:H0} holds.
\end{theorem}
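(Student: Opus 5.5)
The plan is to write the marginal density of $Z$ on $\mathcal B^\delta$ as a mixture of the conditional densities, and then pass the limit through the integral over types by dominated convergence. Concretely, Assumption \ref{ass:Lee}(c) says that for $G$-almost every $w$ the conditional law of $Z$ given $W=w$ has a density $f_{Z|W}(\cdot|w)$ on $\mathcal B^\delta$. Fubini then shows that the marginal law of $Z$, restricted to $\mathcal B^\delta$, is absolutely continuous. A version of its density is
\begin{equation*}
f_Z(z)~=~\int f_{Z|W}(z|w)\,dG(w),\qquad z\in\mathcal B^\delta .
\end{equation*}
Part (d) guarantees that this integral is finite and bounded by $\int C\,dG$. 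To make the version jointly measurable, I will assume (or argue) that $(z,w)\mapsto f_{Z|W}(z|w)$ is measurable, as is implicit in the assumption. Throughout the proof I use this version of $f_Z$.

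Next, fix $b\in\mathcal B$ and $t\in\{0,1\}$, and take any sequence $z_m\to b$ with $T(z_m)=t$ and $z_m\in\mathcal B^\delta$. Eventually $\|z_m-b\|\le\delta$, so such terms lie in $\mathcal B^\delta$. By part (c), $f_{Z|W}(z_m|w)\to f_{Z|W}(b|w)$ for $G$-a.e. $w$. Part (d) gives the integrable dominating function $C(w)$, uniformly over $z\in\mathcal B^\delta$. Dominated convergence then yields
\begin{equation*}
f_Z(z_m)\to\int f_{Z|W}(b|w)\,dG(w).
\end{equation*}
Because the limit does not depend on the sequence, the one-sided limit $\lim_{z\to b,T(z)=t}f_Z(z)$ exists and equals $\int f_{Z|W}(b|w)\,dG(w)$ for both $t=0$ and $t=1$. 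The two one-sided limits therefore coincide, which is $H_0$ in \eqref{eq:H0}. In fact $f_Z$ is continuous at each $b$, which is a stronger statement.

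The only delicate step is the first one: identifying the mixture integral as a valid version of the density $f_Z$ that appears in \eqref{eq:H0}. One subtlety is that the exceptional $G$-null set in part (c) may depend on $b$. This is harmless here, because $b$ is fixed before dominated convergence is applied. A second subtlety is that $H_0$ is stated for a version of the density; I will note that we verify it for this natural version, which is the one used in the framework of \cite{lee:2008}. A last point to check is that $\mathcal B^\delta$ is closed, so $b\in\mathcal B\subset\mathcal B^\delta$ is itself a point where $f_{Z|W}(b|w)$ is defined.
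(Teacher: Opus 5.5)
Your proposal is correct and follows the paper's proof: you write $f_Z$ as the $G$-mixture $\int f_{Z|W}(z|w)\,dG(w)$, then use the pointwise continuity in Assumption \ref{ass:Lee}(c) and the integrable envelope $C(w)$ in Assumption \ref{ass:Lee}(d) to apply dominated convergence along an arbitrary sequence $z_m\to b$. This gives continuity of $f_Z$ at $b$, and hence equality of the two one-sided limits. Your remarks on joint measurability, on which version of $f_Z$ is used, and on the $b$-dependent null set make explicit points the paper leaves implicit, but they do not change the argument.
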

\begin{proof}
Consider any sequence $\{z_m\}_{m\geq 1}$ with $z_m\in \mathbb R^d$ and $z_m\to b\in\mathcal B$. Since $z_m\to b$, there exists $M$ such that $z_m\in\mathcal B^\delta$ for all $m\geq M$.

By Assumption \ref{ass:Lee}(c), for $G$-almost every $w$,
\begin{gather}
	\lim_{m\to\infty} f_{Z|W}(z_m|w)
	~=~
	f_{Z|W}(b|w).
	\label{eq:lee1}
\end{gather}
By Assumption \ref{ass:Lee}(d), for all $m\geq M$ and $G$-almost every $w$,
\begin{gather}
	f_{Z|W}(z_m|w)
	~\leq~
	C(w),
	\qquad
	\int C(w)dG(w)<\infty .
	\label{eq:lee2}
\end{gather}
Then,
\begin{align*}
	\lim_{m\to\infty} f_Z(z_m)
	&~=~
	\lim_{m\to\infty}
	\int f_{Z|W}(z_m|w)dG(w) \\
	&~\overset{(1)}{=}~
	\int \lim_{m\to\infty} f_{Z|W}(z_m|w)dG(w) \\
	&~\overset{(2)}{=}~
	\int f_{Z|W}(b|w)dG(w) \\
	&~=~
	f_Z(b),
\end{align*}
where (1) holds by \eqref{eq:lee2} and the dominated convergence theorem, and (2) by \eqref{eq:lee1}.

Since the sequence $\{z_m\}_{m\geq 1}$ was arbitrary, the marginal density $f_Z$ is continuous at $b$. In particular, this conclusion applies to sequences satisfying $T(z_m)=1$ and to sequences satisfying $T(z_m)=0$. Hence,
\begin{gather*}
	\lim_{z\to b,~T(z)=1} f_Z(z)
	~=~
	\lim_{z\to b,~T(z)=0} f_Z(z)
	~=~
	f_Z(b).
\end{gather*}
Since $b\in\mathcal B$ was arbitrary, $H_0$ in \eqref{eq:H0} holds.
\end{proof}




The theorem provides a formal motivation for $H_0$ in \eqref{eq:H0}. If the distribution of the vector running variable is smooth conditional on unobserved types, then integrating over types preserves smoothness of the marginal density at the boundary. Thus, a discontinuity of the marginal density at the boundary is inconsistent with this smooth sorting framework. As in the one-dimensional RDD, this implication does not test the full set of identification assumptions of the design. It provides a testable implication that can be used as a diagnostic for the validity of the BDD.

\begin{remark}
Assumptions \ref{ass:Lee}(a)-(b) are not required for Theorem \ref{thm:Lee}. We include them to show how the framework of \cite{lee:2008} can be extended to the BDD setting and to clarify which aspects of that framework motivate the joint density continuity null. The density continuity implication follows from the smoothness of the conditional distribution of the vector running variable given the unobserved type, together with the domination condition in Assumption \ref{ass:Lee}(d).
\end{remark}

\section{Behavior of the signed distance test}\label{sec:signed_distance_test}

This appendix studies the properties of the signed distance test. We present two examples illustrating its potential limitations. First, under certain alternatives to $H_0$ in \eqref{eq:H0}, the density of signed distance can remain continuous at zero, rendering the signed distance test powerless against those alternatives. Second, under $H_0$, the density of signed distance can be discontinuous at zero, causing the signed distance test to reject even though $H_0$ holds.

\subsection{The signed distance test may be powerless against violations of $H_0$}
\label{app:counterexample_averaging}

This section illustrates why a failure of $H_0$ in \eqref{eq:H0} does not imply that the signed-distance density is discontinuous at zero. We illustrate this point with a simple example.

Let $d=2$, where $Z=(Z_1,Z_2)$ has density
\begin{equation}
    f_Z(z_1,z_2)
    ~=~
    \phi(z_1)\phi(z_2)
    \left[
        1+\frac{1\{z_1<0\}}{2}\frac{z_2}{1+z_2^2}
    \right],
    \label{eq:counterexample_averaging_density}
\end{equation}
where $\phi$ denotes the standard normal density. 

Assume that the assignment rule is
\begin{equation*}
    T(z)~=~1\{z_1\geq0\}.
\end{equation*}
Thus, the boundary is
\begin{equation*}
   \mathcal B~=~\{(0,z_2):z_2\in\mathbb R\}.
\end{equation*}

Note that $H_0$ in \eqref{eq:H0} fails at almost every point of the boundary. Indeed, for any $z_2\in\mathbb R$, the two one-sided limits are
\begin{align*}
    \lim_{z_1\downarrow0}f_Z(z_1,z_2)
    ~=~
    \phi(0)\phi(z_2)\qquad\text{and}\qquad
    \lim_{z_1\uparrow0}f_Z(z_1,z_2)
    ~=~
    \phi(0)\phi(z_2)
    \left[
        1+\frac{1}{2}\frac{z_2}{1+z_2^2}
    \right].
\end{align*}
Therefore, the two one-sided limits differ for every $z_2\neq0$. In particular, the density jumps upward when the boundary is crossed from the treatment region to the control region for $z_2>0$ and downward for $z_2<0$. These discontinuities have opposite signs and average out along the boundary.

Figure \ref{fig:counterexample-averaging} illustrates the construction. Panel (a) displays the treatment and control regions and the behavior of the joint density along the boundary. Panel (b) plots the density of signed distance.

\begin{figure}[htbp]
\centering

\begin{minipage}[t]{0.48\textwidth}
\centering
\begin{tikzpicture}[xscale=2.4,yscale=2.1]
    \shade[top color=gray!60,bottom color=gray!10]
        (-1,-1) rectangle (0,1);

    \fill[gray!30] (0,-1) rectangle (1,1);

    \draw[dashed] (-1,-1) rectangle (1,1);

    \draw[very thick] (0,-1) -- (0,1);

    \node at (-0.5,0.78) {\footnotesize Control};
    \node at (0.5,0.78) {\footnotesize Treatment};

    \node[align=center] at (-0.5,0.35)
        {\scriptsize $f_Z(0^-,z_2)>f_Z(0^+,z_2)$};
    \node[align=center] at (-0.5,-0.35)
        {\scriptsize $f_Z(0^-,z_2)<f_Z(0^+,z_2)$};

    \node[below] at (-1,-1) {\footnotesize $-1$};
    \node[below] at (0,-1) {\footnotesize $0$};
    \node[below] at (1,-1) {\footnotesize $1$};

    \node[left] at (-1,-1) {\footnotesize $-1$};
    \node[left] at (-1,0) {\footnotesize $0$};
    \node[left] at (-1,1) {\footnotesize $1$};

    \node[below=0.35cm] at (0,-1) {\footnotesize $z_1$};
    \node[rotate=90] at (-1.4,0) {\footnotesize $z_2$};
\end{tikzpicture}

\vspace{0.3em}
\small (a) Joint density and assignment regions
\end{minipage}
\hfill
\begin{minipage}[t]{0.48\textwidth}
\centering
\begin{tikzpicture}[xscale=1.05,yscale=6.5]
    \draw[->] (-2.2,0) -- (2.2,0) node[right] {$v$};
    \draw[->] (0,0) -- (0,0.52) node[above] {$f_D(v)$};

    \draw[very thick,domain=-2:2,samples=100,smooth]
        plot (\x,{0.39894228*exp(-\x*\x/2)});

    \draw[dashed] (0,0) -- (0,0.3989);

    \draw (-2,0.006) -- (-2,-0.006)
        node[below] {\footnotesize $-2$};
    \draw (0,0.006) -- (0,-0.006)
        node[below left] {\footnotesize $0$};
    \draw (2,0.006) -- (2,-0.006)
        node[below] {\footnotesize $2$};

    \draw (0.03,0.3989) -- (-0.03,0.3989)
        node[left] {\footnotesize $\phi(0)$};
\end{tikzpicture}

\vspace{0.3em}
\small (b) Density of signed distance
\end{minipage}

\caption{\small Panel (a) shows the treatment and control regions and the behavior of the joint density along the boundary within a portion of $\mathbb R^2$. The density on the control side of the boundary is greater than that on the treatment side for $z_2>0$ and smaller for $z_2<0$. Panel (b) shows that the density of signed distance is nevertheless continuous at zero.}
\label{fig:counterexample-averaging}
\end{figure}

Let $D(z)$ denote signed distance from the boundary. For any $z\in\mathbb R^2$,
\begin{equation*}
    D(z)
    ~=~
    \dist(z,\mathcal B)1\{T(z)=1\}
    -
    \dist(z,\mathcal B)1\{T(z)=0\}
    ~\overset{(1)}{=}~
    z_1,
\end{equation*}
where (1) follows from the definitions of $\mathcal B$ and $T(z)$. Consequently,
\begin{equation*}
    D~=~D(Z)~=~Z_1.
\end{equation*}

It is not hard to verify that $D$ has density
\begin{align*}
    f_D(v)
    ~=~
    \int f_Z(v,z_2)\,dz_2
    ~=~
    \phi(v)
    \left[
        1+\frac{1\{v<0\}}{2}
        \int_{\mathbb R}\phi(z_2)\frac{z_2}{1+z_2^2}\,dz_2
    \right]
    ~=~
    \phi(v),
\end{align*}
which is continuous at every point, including at $v=0$.

Thus, although the joint density of the vector running variable is discontinuous across almost every point of the boundary, the signed-distance density is continuous at zero. The discontinuities below and above $z_2=0$ exactly average out when $Z_2$ is integrated out. Consequently, a one-dimensional manipulation test applied to signed distance would not detect this violation of $H_0$.

\subsection{The signed distance test may be invalid under $H_0$}
\label{app:counterexamples}

This section provides an example in which a signed-distance test may be invalid under $H_0$ in \eqref{eq:H0}. The example involves an admittedly unusual boundary configuration that violates Assumption \ref{ass:assumption}. Our purpose is not to argue that such boundary configurations are empirically common, but rather to underscore the importance of formally establishing the conditions under which signed-distance tests are valid. In general, continuity of the joint density across the boundary would not, by itself, justify the conclusion that a signed-distance test is asymptotically valid.

Let $d=2$, where $Z=(Z_1,Z_2)$ has density
\begin{equation*}
    f_Z(z_1,z_2)
    ~=~
    \phi(z_1-1/2)\phi(z_2-1/2),
\end{equation*}
where $\phi$ denotes the standard normal density. Thus, $f_Z$ is continuous on $\mathbb R^2$, and $H_0$ in \eqref{eq:H0} holds.

Define the assignment rule by
\begin{equation*}
    T(z)~=~1\{z_2\geq1/2,\ z_1\neq1/2\}.
\end{equation*}
Thus, the treatment region is the upper half-plane excluding the vertical line at $z_1=1/2$. The boundary is
\begin{equation*}
    \mathcal B
    ~=~
    \{(z_1,1/2):z_1\in\mathbb R\}
    \cup
    \{(1/2,z_2):z_2\geq1/2\}.
\end{equation*}
The boundary is a finite union of line segments or rays, but Assumption \ref{ass:assumption}(c) fails. Indeed, at every point $(1/2,z_2)$ with $z_2>1/2$, the control region is locally contained in the line $\{z_1=1/2\}$ and therefore occupies a negligible fraction of every sufficiently small ball centered at that point. Figure \ref{fig:boundary-and-density} displays the boundary and the treatment and control regions.

\begin{figure}[htbp]
\centering

\begin{minipage}[t]{0.48\textwidth}
\centering
\begin{tikzpicture}[scale=4.6]
    \fill[gray!20] (0,0.5) rectangle (1,1);

    \fill[white] (0.49,0.5) rectangle (0.51,1);

    \draw[dashed] (0,0) rectangle (1,1);

    \draw[very thick] (0,0.5) -- (0.49,0.5);
    \draw[very thick] (0.51,0.5) -- (1,0.5);
    \draw[very thick] (0.485,1) -- (0.515,1);

    \draw[very thick] (0.49,0.5) -- (0.49,1);
    \draw[very thick] (0.51,0.5) -- (0.51,1);

    \node at (0.24,0.77) {\footnotesize Treatment};
    \node at (0.76,0.77) {\footnotesize Treatment};
    \node at (0.5,0.25) {\footnotesize Control};

    \draw[->,thin] (0.69,1.07) -- (0.515,0.92);
    \node[anchor=west] at (0.69,1.07)
        {\scriptsize Control ``line''};

    \node[below] at (0,0) {\footnotesize $0$};
    \node[below] at (0.5,0) {\footnotesize $1/2$};
    \node[below] at (1,0) {\footnotesize $1$};

    \node[left] at (0,0) {\footnotesize $0$};
    \node[left] at (0,0.5) {\footnotesize $1/2$};
    \node[left] at (0,1) {\footnotesize $1$};

    \node[below=0.32cm] at (0.5,-0.05) {\footnotesize $z_1$};
    \node[rotate=90] at (-0.2,0.5) {\footnotesize $z_2$};
\end{tikzpicture}

\vspace{0.3em}
\small (a) Boundary $\mathcal B$
\end{minipage}
\hfill
\begin{minipage}[t]{0.48\textwidth}
\centering
\begin{tikzpicture}[xscale=1.45,yscale=3.1]
    \draw[->] (-2.2,0) -- (2.2,0) node[right] {$v$};
    \draw[->] (0,0) -- (0,0.92) node[above] {$f_D(v)$};

    \draw[very thick,domain=-2:0,samples=100,smooth]
        plot (\x,{0.39894228*exp(-\x*\x/2)});

    \draw[very thick]
        plot[smooth] coordinates {
            (0,0.7979)
            (0.2,0.6560)
            (0.4,0.5081)
            (0.6,0.3714)
            (0.8,0.2573)
            (1.0,0.1536)
            (1.2,0.0910)
            (1.4,0.0484)
            (1.6,0.0232)
            (1.8,0.0100)
            (2.0,0.0039)
        };

    \draw[dashed] (0,0.3989) -- (0,0.7979);

    \draw (-2,0.015) -- (-2,-0.015)
        node[below] {\footnotesize $-2$};
    \draw (2,0.015) -- (2,-0.015)
        node[below] {\footnotesize $2$};
    \draw (0,0.015) -- (0,-0.015)
        node[below left] {\footnotesize $0$};

    \draw (0.03,0.3989) -- (-0.03,0.3989)
        node[left] {\footnotesize $\phi(0)$};
    \draw (0.03,0.7979) -- (-0.03,0.7979)
        node[left] {\footnotesize $2\phi(0)$};
\end{tikzpicture}

\vspace{0.3em}
\small (b) Density of signed distance
\end{minipage}

\caption{\small Panel (a) shows the boundary $\mathcal B$ and the treatment and control regions within a portion of $\mathbb R^2$. The zero-width control segment at $z_1=1/2$ is displayed as a narrow white region for visibility, with its width exaggerated. Panel (b) plots the density of signed distance, whose left and right limits at zero are $\phi(0)$ and $2\phi(0)$, respectively.}
\label{fig:boundary-and-density}
\end{figure}

Let $D(z)$ denote signed distance from the boundary. For any $z\in\mathbb R^2$,
\begin{align*}
    D(z)~&=~\dist(z,\mathcal B)1\{T(z)=1\}
    -\dist(z,\mathcal B)1\{T(z)=0\}\\
    &\overset{(1)}{=}~
    -(1/2-z_2)1\{z_2<1/2\}
    +
    \min\{z_2-1/2,\lvert z_1-1/2\rvert\}1\{z_2\geq1/2\},
\end{align*}
where (1) follows from the definitions of $\mathcal B$ and $T(z)$. Consequently,
\begin{equation*}
    D~=~-(1/2-Z_2)1\{Z_2<1/2\}
    +\min\{Z_2-1/2,\lvert Z_1-1/2\rvert\}1\{Z_2\geq1/2\}.
\end{equation*}
From this, we obtain
\begin{equation*}
    F_D(v)~=~\Phi(v) 1\{v<0\} + (1-2\{1-\Phi(v)\}^2)  1\{v\geq 0\},
\end{equation*}
where $\Phi$ denotes the standard normal cumulative distribution function.
Differentiating yields
\begin{equation*}
    f_D(v)~=~
    \phi(v)1\{v<0\}
    +4\phi(v)\{1-\Phi(v)\}1\{v>0\}.
\end{equation*}
Since
\begin{equation*}
    \lim_{v\uparrow0}f_D(v)=\phi(0)
    \qquad\text{and}\qquad
    \lim_{v\downarrow0}f_D(v)=2\phi(0),
\end{equation*}
the density of signed distance is discontinuous at zero. This example establishes that continuity of $f_Z$ does not, by itself, imply continuity of $f_D$. The additional boundary regularity imposed in Assumption \ref{ass:assumption}(c) is necessary for this implication.

The treatment region in this example is admittedly unusual because the control region collapses to a line within the upper half-plane. Nevertheless, this configuration can be obtained as the limit of a simple family of treatment regions. In particular, for any $h>0$, consider the assignment rule
\begin{equation*}
    T_h(z)~=~1\{z_2\geq1/2,\ z_1<1/2-h\}
    +1\{z_2\geq1/2,\ z_1>1/2+h\}.
\end{equation*}
The corresponding control region contains a vertical strip of width $2h$ centered at $z_1=1/2$. For every fixed $h>0$, both the treatment and control regions are locally nonnegligible along the boundary. As $h\downarrow0$, however, the control strip collapses to the line $\{(1/2,z_2):z_2\geq1/2\}$, yielding the treatment region in our example.
\end{small}

\bibliography{BIBLIOGRAPHY}
\end{document}